\newcolumntype{L}[1]{>{\raggedright\arraybackslash}p{#1}}
\newcolumntype{C}[1]{>{\centering\arraybackslash}p{#1}}
\newcolumntype{R}[1]{>{\raggedleft\arraybackslash}p{#1}}
\newtheorem*{rep@theorem}{\rep@title}
\newcommand{\newreptheorem}[2]{%
\newenvironment{rep#1}[1]{%
 \def\rep@title{#2 \ref{##1}}%
 \begin{rep@theorem}}%
 {\end{rep@theorem}}}
\newtheorem{theorem}{Theorem}
\newtheorem{lemma}{Lemma}
\newcommand{\thesystem}{DuoBFT\xspace}
\newcommand{\themcsystem}{MC-DuoBFT\xspace}
\begin{document}

\title{\thesystem: Resilience vs. Performance Trade-off in Byzantine Fault Tolerance}

\author{Balaji Arun}
\affiliation{%
  \institution{Virginia Tech}
}
\email{balajia@vt.edu}

\author{Binoy Ravindran}
\affiliation{%
  \institution{Virginia Tech}
}
\email{binoy@vt.edu}

\begin{abstract}
This paper presents \thesystem, a Byzantine fault-tolerant protocol that 
uses trusted components to provide commit decisions in the Hybrid fault model 
in addition to commit decisions in the BFT model.
By doing so, it enables the clients to choose the response fault model for 
its commands.
Internally, \thesystem commits each client command under both the hybrid 
and Byzantine models, but since hybrid commits take fewer communication steps 
and use smaller quorums than BFT commits, clients can benefit from the 
low-latency commits in the hybrid model.

\thesystem uses a common view-change change protocol to handle both fault models.
To achieve this, we enable a notion called Flexible Quorums in the hybrid fault 
model by revisiting the quorum intersection requirements in hybrid protocols. 
The flexible quorum technique enables having a hybrid view change quorum 
that is of the same size as a BFT view-change quorum.
This paves a path for efficiently combining both the fault models 
within a single unified protocol.
Our evaluation on a wide-area deployment reveal that \thesystem can provide 
hybrid commits with 30\% lower latency to existing protocols without 
sacrificing throughput. 
In absolute terms, \thesystem provides sub-200-millisecond latency in a 
geographically replicated deployment.
\end{abstract}

\maketitle

\section{Introduction}

Byzantine fault-tolerant (BFT) protocols are a building block of 
many decentralized ledger or Blockchain 
systems~\cite{Baudet:2019:LibraBFT:TechRep,Buchman:2016:Tendermint:Thesis}.
Traditionally, the design of a BFT protocol is driven by a
set of assumptions that includes the timing model (synchrony, partial 
synchrony, or asynchrony) and the failure model (BFT, 
Hybrids~\cite{Veronese:2013:Efficient:TC.2011.221}, 
XFT~\cite{Liu:2016:XFT:OSDI:3026877.3026915}).
These assumptions not only serve as the basis for safety and liveness guarantees,
but also establishes performance expectations.

The BFT failure model, being the most general one, allows any arbitrary 
behavior in the system that deviates from the protocol specification. 
The PBFT protocol~\cite{Castro:1999:PBFT:OSDI:296806.296824}, 
a partially synchronous protocol in this fault model,
can optimally tolerate less than $1/3$ failures and requires three communication 
delays (not including client communication) to reach commitment among at least
$2/3$ replicas. 
Alternatively, the Hybrid fault model infuses a BFT protocol 
with trusted assumptions allowing construction of protocols that tolerate 
less than $1/2$ failures and require only two communication delays to reach 
commitment among at least $1/2$ replicas. 

The ability to tolerate 50\% more failures than BFT 
protocols with fewer communication delays and with only majority quorums  
make Hybrid protocols appealing for building low-latency Blockchain systems
(e.g. Point of Sale Payment Systems~\cite{Baudet:2020:FastPay:AFT:3419614.3423249}).
Hybrid protocols require trusted components within each replica to 
prevent equivocating behavior of malicious replicas. 
Modern commodity processors have special mechanisms 
(e.g. Intel SGX~\cite{Costan:2016:SGX:TR}, 
AMD SEV~\cite{Mofrad:2018:SGX-AMD:HASP:3214292.3214301}, 
and ARM TrustZone~\cite{Sandro:2019:ArmTrustZone:3291047}) to implement these trusted 
components in software via Trusted Execution Environments (TEE) 
that is isolated from other parts of the system without any additional hardware. 

The use of trusted execution environment raises some challenges. First, it 
greatly reduces the choice of hardware 
(e.g. Intel SGX, ARM TrustZone, AMD SEV) 
used to deploy such protocols. 
However, Byzantine protocols implicitly/explicitly require diversity in 
the deployment stack to reduce the number of correlated failures
\cite{Garcia:2019:Lazarus:3361525.3361550}. 
Furthermore, security vulnerabilities have been discovered in trusted execution 
environments recently~\cite{Schaik:2020:SGAxe,Schaik:2020:Cacheout:13353,Li:2021:CIPHERLEAKS:UsenixSecurity}. 
Although active research in the area aims to solve these problems, 
the impact of undiscovered vulnerabilities raises concerns.
This raises questions on their applicability to BFT and 
Blockchain systems.

This paper presents \thesystem that encompasses a hybrid protocol and a BFT 
protocol in a single package, and enables the client to choose their response 
fault model. 
\thesystem always commits commands under both the fault models, 
ensuring that they have the best BFT safety guarantees, but since the Hybrid 
protocol is cheaper and quicker, clients can opt-in for Hybrid commit response.
This allows clients that require fast response to opt-in for Hybrid commit 
response, while clients that require higher resilience to wait for the 
BFT commit response. 
In doing so, \thesystem provides a unique trade-off to the clients: quick 
decisions made possible by hybrid replicas versus uncompromising resilience to 
malicious behavior. 
Furthermore, our solution allows different clients to individually adapt their 
fault assumptions dynamically without depending on the replicas.

A major contribution that make \thesystem possible is Flexible Hybrid 
Quorums. 
We show that the strictly majority quorums in hybrid protocols can be replaced 
with flexible intersecting quorums. 
Specifically, the commit agreement quorums need not intersect with each other, 
but only need to intersect with view change quorums. 
The net outcome is that our Hybrid quorums only require $f+1$ replicas for 
commit agreement and $N-f$ replicas for view-changes. 
We apply this technique to MinBFT~\cite{Veronese:2013:Efficient:TC.2011.221} and 
call the resulting protocol \emph{Flexible} MinBFT.

The quorum flexibility enables \thesystem to have a common view change protocols 
for both the Hybrid and BFT assumptions.
Since $N=3f+1$ for a BFT protocol, \thesystem provides Hybrid commits with 
$f+1$ quorum and BFT commits with a $2f+1$ quorum. The view change quorum is 
$2f+1$ for both the protocols. 

In \thesystem, the replicas propose and vote on blocks that contain client 
transactions or operations, and use separate commit rules for each fault model 
to make commit decisions on the blocks.
Replicas internally collect two types of quorums that form the basis of the 
commit rules:  
the Hybrid quorum consists of votes from $f+1$ replicas,
and the BFT quorum consists of votes from $2f+1$ replicas.
Collecting separate quorums allows \thesystem to tolerate any vulnerabilities 
affecting the Hybrid model. Specifically, the compromise of the trusted 
component only affects the Hybrid quorum but not the BFT quorum.

Furthermore, the flexibility provided by the \thesystem's dual fault model 
is better than 
speculation\cite{Gutea:2019:SBFT:DSN.2019.00063,Kotla:2007:Zyzzyva:SOSP:1294261.1294267}, or
tentative\cite{Castro:2001:PBFT:Thesis} execution capabilities
provided by other known protocols. 
While speculation requires 50\% larger quorums than 
PBFT~\cite{Castro:2002:PBFT:TOCS:571637.571640} and 
the tentative execution only reduces the execution overhead by overlapping 
the last communication step with execution, 
the hybrid model uses 50\% smaller quorums than PBFT and reduces one 
overall communication step.
At the same time, \thesystem does not incur more communication delays 
or larger quorums 
than that required to commit under the BFT model, unlike
\cite{Kotla:2007:Zyzzyva:SOSP:1294261.1294267,Gutea:2019:SBFT:DSN.2019.00063}.

We evaluate multiple variants of the \thesystem and show that it provides 
similar throughput to existing protocols while providing 
significantly lower latency. 
\themcsystem, which is optimization over \thesystem to use multiple 
instances, provides 30\% lower latency for 
clients expecting only hybrid commit responses.

This paper makes the following contributions:

\begin{compactitem}[-]
    \item \textbf{Flexible Hybrid Quorums}: We show that the use of majority 
quorums in the Hybrid fault model can be relaxed and replaced with simple 
intersecting quorums. This allows flexibility in the sizes of quorums used for 
different parts of the protocol.
    \item \textbf{Flexible MinBFT}: We apply Flexible Hybrid Quorums to 
MinBFT and present a protocol that uses $f+1$ quorums for commit agreement 
and larger $N-f$ quorums for view changes.
    \item \textbf{DuoBFT}: We present a BFT protocol under the partial synchrony 
timing model that can make Hybrid commit decisions with only $f+1$ replicas in 
addition to making traditional BFT commit decisions with $2f+1$ replicas. 
The protocol allows clients to choose their response fault model, making it 
possible for applications that require low latency to benefit from the Hybrid 
commits, while providing the ability to leverage traditional BFT guarantees.
\end{compactitem}

The rest of the paper is organized as follows. 
Section~\ref{sec:preliminaries} presents the terminology and system model. 
Section~\ref{sec:minbft} presents Flexible Hybrid Quorums and Flexible 
MinBFT. 
Section~\ref{sec:duobft} presents the \thesystem protocol, explanation of 
its properties along with proofs, and some optimizations. 
A discussion on some unique features of \thesystem with respect to 
existing solutions is presented in Section~\ref{sec:discussion}. 
Section~\ref{sec:eval} evaluates the protocols.
Section~\ref{sec:relwork} presents the related work and 
Section~\ref{sec:conclusion} concludes the paper.

\section{Preliminaries}
\label{sec:preliminaries}

In this section, we will discuss the necessary background for understanding the 
rest of the paper.

\subsection{Byzantine Consensus}
\label{sec:bg:byz}

\begin{figure}[t]
	\centering
	\begin{subfigure}{0.45\textwidth}
        \centering
		\includegraphics[width=\linewidth]{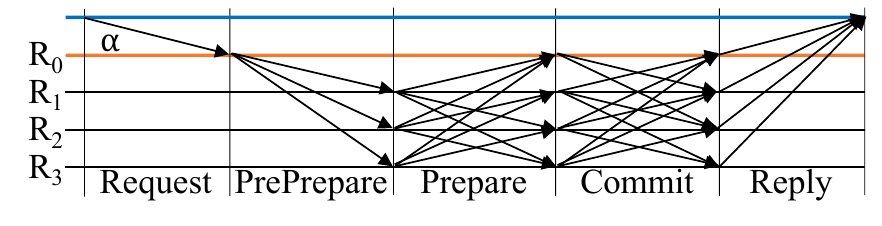}
		\caption{PBFT}
	\end{subfigure}
	\begin{subfigure}{0.35\textwidth}
        \centering
		\includegraphics[width=\linewidth]{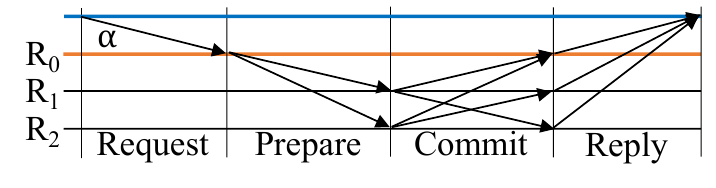}
		\caption{MinBFT}
	\end{subfigure}
	\caption{An example of normal execution steps in BFT and Hybrid fault models.}
	\label{fig:bg:walkthrough}
\end{figure}

A Byzantine Consensus protocol reaches agreement on the order of client-issued 
commands among a set of replicas some of which can be malicious. 
The commands are then executed in the agreed order on the shared 
state fully-replicated among the replicas.
A BFT protocol under the partial synchrony fault model can tolerate up to
$f$ malicious failures in a system of $N = 3f+1$ replicas. 
Most protocols are primary-based and proceed in a sequence of views, 
where in each view, a primary replica sequences client commands that every 
replica executes.
Correct replicas execute only after ensuring that a significant number of 
other replicas are aware of the same command and its execution order. 
To accomplish this, the protocol executes the Agreement subprotocol that 
involves exchanging the command and metadata among replicas. 
The number of communication phases differ by protocols.
PBFT~\cite{Castro:1999:PBFT:OSDI:296806.296824}, 
for example, uses three phases of communication to gather consent from 
a supermajority (i.e. 66\%) of replicas to decide the ordering for each command.

The View Change subprotocol is used to rotate the primary when the protocol 
is unable to make progress, i.e. order commands, with the current primary. 
A new view is installed after replicas exchange state information about the 
previous view and a new primary takes over. 
If the new primary does not make progress, another one takes its place.

Since the agreement subprotocol requires only a supermajority of replicas, 
some replicas may fall behind other replicas. 
The state transfer subprotocol allows lagging replicas to catch up by 
transferring state from up-to-date replicas. 
To reduce the memory footprint, the checkpoint subprotocol is used to 
periodically garbage-collect state related to commands that have been executed 
in at least correct replicas.

\subsection{Hybrid Consensus}
\label{sec:bg:hybrid}

The BFT model allows malicious replicas to behave arbitrarily. 
Corrupt replicas can stop sending messages to one or more replicas, or send 
conflicting messages (equivocate) to different replicas with an intention to 
break safety. 
Preventing equivocation can reduce the number of replicas and size of quorums 
required to reach agreement
\cite{Chun:2007:A2M:SOSP:1294261.1294280,Levin:2009:TrInc:NSDI:1558977.1558978,Clement:2012:OLPNE:2332432.2332490}.
This is accomplished in the Hybrid fault model using a trusted component.
The algorithm hosted in the trusted component attests messages in such a way 
to prove their uniqueness. 
A simple monotonically increasing counter can be used for this 
purpose~\cite{Levin:2009:TrInc:NSDI:1558977.1558978}. 
By assigning a unique counter value per message and signing it, 
the trusted component ensures that the replica hosting it cannot send different 
messages with the same counter value. 
This property allows correct replicas to detect an attempt to send 
conflicting statements without requiring any additional communication 
mechanisms.
Thus, hybrid replicas can tolerate $f$ failures using only $2f+1$ replicas.

Figure~\ref{fig:bg:walkthrough} illustrates the agreement protocols of 
PBFT and MinBFT~\cite{Veronese:2013:Efficient:TC.2011.221}, 
a hybrid protocol, in a system tolerating $f=1$ failures. 
It can be observed that PBFT requires one additional replica and  
one additional communication phase compared to MinBFT.

\subsection{System Model}
\label{sec:bg:model}

We consider a system that consists of a set of nodes, called \textit{replicas} 
that communicate via message passing. 
These replicas implement a replicated service that receive commands from client 
and ensure that the same sequence of totally ordered commands are executed 
on the replicated state and responses are returned to the client. 
The goal of the consensus protocol is to ensure agreement on the replicated 
state among replicas withstanding a number of faulty servers.

Most consensus protocols offer a single fault model that the replicas use to 
commit the sequence of client commands and that the clients use to receive 
acknowledgement.
In contrast, \thesystem commits commands under two fault models and 
also lets clients choose the fault model for their response.
Thus, based on the assumptions, replicas may commit command sequences 
differently.
\thesystem provides the following guarantees:
\begin{itemize}[-]
    \item \textbf{Safety.}
    Any two replicas with correct but potentially different assumptions 
    commit the same sequence of client commands.
    \item \textbf{Liveness.}
    A command proposed by a replica will be eventually executed by 
every replica with a correct assumption.
\end{itemize}

\textbf{Fault Model.}
\thesystem supports two fault models: the Byzantine fault model and the 
hybrid fault model. 
In both the models, a replica is correct if it strictly follows the 
algorithm, otherwise it is faulty. 
In addition, faulty replicas can collude to harm the correct replicas.

To satisfy the hybrid model, we assume the existence of a trusted execution 
environment in each replica that hosts the protocol's trusted code. 
Despite some replicas being faulty, the trusted execution environment in 
each replica is assumed to be tamperproof and the code it executes 
strictly follows the algorithm. The trusted component can only fail by crashing. 
Note that this requirement is not necessary for BFT guarantees.

\textbf{Timing Model.}
We assume the partially synchronous timing model~\cite{Cachin:2011:RSDP}. 
Eventually, there exists a time during which correct replicas communicate 
synchronously and messages are timely. 
We, further, assume that the network can drop, reorder, and duplicate messages. 
To ensure reliable delivery of messages, we rely on generic retransmission 
techniques that use a buffer to store outgoing messages and retransmits 
them periodically.
Furthermore, we do not assume any bounds on processing and communication delays 
except that such delays do not grow indefinitely. 

\textbf{Cryptography.}
We assume that the adversary cannot break cryptographic computation such as 
hashes and signatures. 
In addition, the hashing algorithms are collision resistant.
Every replica is aware of other replicas' public keys.
Each replica can verify the messages they receive using the corresponding 
replica's public key.

\section{Flexible Quorums in Hybrid Model}\label{sec:minbft}

In this section, we revisit the quorum intersection in the Hybrid 
model, and introduce Flexible Hybrid Quorums, a technique that relaxes the 
majority quorum intersection requirement. 
With this technique, only quorums across views must intersect to ensure safety, 
while quorums within the same view need not intersect. 
Consequently, hybrid protocols using the flexible quorum technique can opt for 
using smaller non-majority quorums for agreement, in exchange for 
using much larger than majority quorums during view changes.
We perform our analysis in the context of MinBFT, a state-of-the-art hybrid 
protocol~\cite{Veronese:2013:Efficient:TC.2011.221}. 
Thus, we first overview MinBFT in Section~\ref{sec:minbft:bg}, and 
then introduce the flexible quorum technique to 
produce \textit{Flexible} MinBFT in Section~\ref{sec:minbft:flex}.

\subsection{Revisiting MinBFT}\label{sec:minbft:bg}

MinBFT~\cite{Veronese:2013:Efficient:TC.2011.221} is a hybrid 
fault-tolerant protocol that uses a trusted component to require 
only $N = 2f+1$ replicas 
to tolerate $f$ Byzantine faults. 
The trusted component prevents malicious replicas from equivocating to 
correct replicas, providing an efficient solution to the consensus 
problem under this hybrid model.

\subsubsection*{USIG Trusted Component}
The protocol uses a trusted component called the Universal Sequential Identifier 
Generator (USIG) that is present in each replica and provides two interfaces: 
one for signing and another for verifying messages. 
The USIG component assigns monotonically increasing counter values to messages 
and signs them.
The component provides the following properties: 
\begin{enumerate*}[label=(\roman*)]
    \item Uniqueness: no two messages are assigned the same identifier;
    \item Monotonicity: a message is never assigned an identifier smaller than 
    the previous one; and
    \item Sequentiality: the next counter value generated is always one 
    more than the last generated value.
\end{enumerate*}

To access its service, USIG provides two interfaces: 
\begin{itemize}[label=-]
    \item \texttt{CreateUI($m$)} creates a signed certificate $UI_i$ for  
message $m$ with the next value from the monotonic counter. The certificate is 
computed using the private key of the USIG instance $r$. 
    $UI_r = \langle ctr, H(m) \rangle_p$, where $ctr$ is the counter value and 
    $H(m)$ is the hash of the message.
    \item \texttt{VerifyUI($UI_i$, $m$)} uses the USIG instance $i$'s public key
and verifies whether the certificate $UI_i$ was computed for message $m$.

\end{itemize}

\begin{figure}
\begin{mdframed}
    A MinBFT replica executes the following protocol.
    \begin{enumerate}[label=(\arabic*)]
        \item \textbf{Prepare.} The primary assigns a sequence number to the 
client command and sends a \texttt{Prepare} message to all replicas. 
        \item \textbf{Commit.} Each replica receives the \texttt{Prepare} 
message and broadcasts a \texttt{Commit} message. 
        \begin{itemize}[label=-]
            \item A replica accepts a command if it collects 
            a commit certificate consisting of $f+1$ \texttt{Commit} 
            messages. 
            \item If a replica does not hear back from the primary in time, 
it will send a \texttt{ReqViewChange} message. 
            \item If a replica receives $f+1$ \texttt{ReqViewChange} messages,
it transitions to the next view and sends the \texttt{ViewChange} message.
        \end{itemize}
    \end{enumerate}
\end{mdframed}
\caption{MinBFT Normal Execution}
\end{figure}

\subsubsection*{MinBFT}
The MinBFT protocol proceeds in a sequence of views. The primary for each view 
is replica $r_i$ where $i = v \mod N$, $v$ is the view number and $N$ is the 
system size. The primary is responsible for handling client commands, 
assigning sequence number to those commands, and forwarding the commands to 
the replicas. The sequence numbers the primary assigns to commands is generated 
by the USIG instance within the primary. The replicas accept the 
command and execute it once they collect a commit certificate. 
A commit certificate indicates that a majority of replicas have observed the 
same message from the primary.

When the primary receives a client command $m$ with operation $o$, it assigns 
a sequence number to the command. 
The sequence number is the one generated by the USIG service. 
The primary $r_i$ sends the command in a message 
$\langle \texttt{Prepare}, v, r_i, m, UI_i \rangle$, where $UI_i$ contains the
unique sequence number and the signature obtained from the USIG module. 
Each replica $r_j$ in turn sends the 
$\langle \texttt{Commit}, v, r_j, r_i, m, UI_i, UI_j \rangle$ message to all 
other replicas. A client command is accepted at a replica if it receives
$f+1$ valid \texttt{Commit} messages, called a \textit{commit certificate}. 

Correct replicas only responds to the primary's \texttt{Prepare} message if 
the following conditions hold: $v$ is the current view number and the sender of 
the message is the primary of $v$; the USIG signature is valid; and that the 
messages are received in sequential order of the USIG counter value.
To prevent a faulty replica from executing the same operation twice, each 
replica maintains a $V_{req}$ to store the command identifier of the latest
operation executed for each client. The messages are always processed in 
the order of the USIG sequence number to prevent duplicity of operations 
and holes in the sequence number space. Replicas only execute an operation if 
it has not been executed already.

The view change protocol is triggered if the current primary fails to 
make timely progress. 
Replica sends a 
$\langle \texttt{ReqViewChange}, r_i, v, v' \rangle$ message to other replicas 
if it times out waiting for messages from the primary.
A replica moves into a new view if it receives $f+1$ \texttt{ReqViewChange} 
messages and consequently broadcasts a 
$\langle \texttt{ViewChange}, r_i, v', C_{l}, O, UI_i \rangle$ message, 
where $C_l$ is the last stable checkpoint certificate, 
$O$ is the set of generated messages since the last checkpoint. 
The new primary computes and sends a \\
$\langle \texttt{NewView}, r_i, v', V_{vc}, S, UI_i \rangle$, where $V_{vc}$ is 
the \textit{new view certificate} that contains the set of 
$f+1$ \texttt{ViewChange} 
messages used to construct the new view and $S$ is the set of prepared or 
committed commands since the last checkpoint. Replicas validate the received 
\texttt{NewView} message, 
update its sequence of operations to match $S$,
executes the pending operations, and 
starts accepting messages in the new view $v'$.

For conciseness, we defer the explanation of the checkpoint and state transfer 
procedures to the original paper~\cite{Veronese:2013:Efficient:TC.2011.221}.

\subsection{MinBFT with Flexible Quorums}\label{sec:minbft:flex}

In this section, we introduce the notion of Flexible Hybrid Quorums. 
First, we show that not all quorums need to intersect and consequently show 
that the system size $N$ need not be a function of $f$. 
Specifically, we show that only quorums of different kinds must intersect.
Thus, sizes of commit quorums $Q_c$ can be reduced at the 
cost of increasing the sizes of the view change quorums $Q_{vc}$.
We apply this technique to MinBFT and 
call the resulting protocol as \textit{Flexible} MinBFT.

MinBFT uses simple majority quorums for both the commit and the new view 
certificates. 
Thus, every quorum intersects with every other quorum. 
Consequently, commit quorums $Q_c$ intersect with other commit quorums. 
However, this is excessive. 
In MinBFT, the replica at the intersection of any two commit quorums, ensures 
that an operation is assigned only one $UI$ certificate. 
Note that the primary's USIG service already ensures that an $UI$ is assigned 
only once.
If the primary and the intersecting replica are malicious, the replica may 
still vote for the same operation at two different $UI$s. 
Correct replicas will handle this using the $V_{req}$ data structure. 
Since they process the messages in $UI$ order, they will observe that an 
operation has already been executed and not execute it again. 
This makes the intersection replica redundant.  
Thus, we relax the assumption that the different commit quorums intersect with 
each other. At the same time, to tolerate $f$ failures, the commit quorums 
should consist of more than $f$ replicas. Thus, we have that $|Q_c| > f$.

On the other hand, any view change quorum $Q_{vc}$ must intersect with any 
commit and view change quorums to ensure that the decisions made within a 
view are safely transitioned to future views. Hence, we have that 
$|Q_{vc}| + |Q_c| > N$.

The Flexible Hybrid Quorum requirement is captured by the following equations:
\begin{equation}\label{eq:minbft:flex1}
    |Q_c| > f
\end{equation}
\begin{equation}\label{eq:minbft:flex2}
    |Q_{vc}| + |Q_c| > N
\end{equation}

By setting $Q_c$ to the smallest possible value i.e. $|Q_c| = f+1$, we can 
observe that $|Q_{vc}|$ should equal $N - f$, to satisfy 
Equation~\ref{eq:minbft:flex2}.
Consequently, the system size $N$ need not be a function of $f$.

We applied the Flexible Hybrid Quorums to MinBFT. 
The resulting algorithm, \textit{Flexible} MinBFT remains the same 
except for the quorums they use. 
First, the commit quorums size is $Q_c = f+1$, but 
the variable $f$, the number of tolerated faults, is independent of $N$, 
the system size. 
Second, the size of view change quorums $Q_{vc}$ now equals $N-f$ instead of $f+1$.
The protocol does not require any other changes.

The safety and liveness guarantees provided by Flexible MinBFT are given below.
We only present the intuition and the related lemmas here. 
The complete proof is presented in the 
Appendix~\ref{sec:minbft:proof}.

\subsubsection*{Safety within a view.} 
This is ensured by the trusted subsystem and the commit quorums.
The trusted subsystem prevents equivocation, so a Byzantine replica 
cannot send conflicting proposals. 
Correct replicas will only vote on the proposed operation if the proposal 
is valid, and if it has not voted for the same operation before.
The following Lemma formalizes this notion. 

\begin{lemma}\label{lm:minbft:s1}
    In a view $v$, if a correct replica executes an operation $o$ with 
    sequence number $i$, no correct replica will execute $o$ with sequence 
    number $i' \neq i$.
\end{lemma}
    
\subsubsection*{Safety across views} 
    This is ensured by the trusted subsystem and the view change quorums. 
    The intersection of the commit and the view change quorum consists of at 
    least one replica. 
    Thanks to the trusted component, the replica in the intersection cannot 
    equivocate, and must reveal the correct sequence of operations executed, 
    as otherwise there will be holes that correct replicas can detect.
    Thus, a correct primary will gather the correct sequence and apply it in 
    the next view.

    \begin{lemma}\label{lm:minbft:s2}
        If a correct replica executes an operation $o$ with sequence number $i$ in 
        a view $v$, no correct replica will execute $o$ with sequence number 
        $i' \neq i$ in any view $v' > v$.
    \end{lemma}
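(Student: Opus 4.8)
The plan is to prove Lemma~\ref{lm:minbft:s2} by induction on the sequence of views between $v$ and $v'$, reducing the cross-view claim to a single-view transition and then invoking the quorum intersection guarantee established by Equation~\ref{eq:minbft:flex2}. First I would reduce to the base case of a single view change, i.e. showing that if a correct replica executes $o$ with sequence number $i$ in view $v$, then every correct replica that enters the immediately succeeding installed view $v+1$ does so with a \texttt{NewView} certificate whose operation set $S$ assigns $o$ the same sequence number $i$. If this holds for one step, transitivity of equality combined with Lemma~\ref{lm:minbft:s1} (which already rules out divergence \emph{within} the new view) lets me chain the argument across any number of intervening views up to $v'$.

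For the single-step core, I would argue as follows. Since a correct replica executed $o$ at sequence number $i$ in view $v$, it must have collected a commit certificate, i.e. a commit quorum $Q_c$ of size $f+1$ that voted for $o$ at $i$. To install view $v+1$, the new primary must assemble a view-change certificate $V_{vc}$ from a view-change quorum $Q_{vc}$ of size $N-f$. By Equation~\ref{eq:minbft:flex2} we have $|Q_c| + |Q_{vc}| > N$, so $Q_c \cap Q_{vc}$ contains at least one replica $r$; since at most $f$ replicas are faulty and the two quorums overlap in at least one replica, I would argue that this intersection contains at least one \emph{correct} replica (this requires checking that the overlap size exceeds $f$, which follows because the two quorums together cover strictly more than $N$ slots while only $f$ can be faulty). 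The correct replica $r$ in the intersection therefore contributes to $V_{vc}$ a record reflecting that it committed $o$ at sequence number $i$.

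The crucial remaining point is that the new primary cannot suppress or alter this record, and here I would lean on the USIG trusted component exactly as the ``Safety across views'' intuition in the excerpt describes. Because each message carries a USIG certificate with a monotonic, sequential counter, the ordered log that $r$ reports in its \texttt{ViewChange} message is gap-free and tamper-evident: any omission or reassignment of $o$'s sequence number would create a detectable hole in the counter sequence or a conflicting $UI$, which correct replicas validating the \texttt{NewView} message will reject. Consequently the set $S$ carried in the \texttt{NewView} message must preserve the assignment of $o$ to $i$, and every correct replica that accepts the new view adopts $o$ at sequence number $i$.

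The step I expect to be the main obstacle is establishing that the intersection $Q_c \cap Q_{vc}$ contains a correct replica rather than merely a nonempty set, and then nailing down precisely \emph{why} a Byzantine replica in that intersection cannot corrupt the view-change evidence. The nonemptiness is immediate from Equation~\ref{eq:minbft:flex2}, but correctness of the intersecting replica needs the sharper counting $|Q_c| + |Q_{vc}| - N > f$, and the untamperability argument must carefully combine the USIG uniqueness and sequentiality properties with the \texttt{NewView} validation rules to show that the new primary is forced to include the committed operation at its original sequence number. I would therefore spend most of the proof making that USIG-based no-equivocation-plus-no-holes argument rigorous, since the inductive scaffolding and the quorum arithmetic are comparatively routine.
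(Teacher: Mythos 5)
Your plan founders on the quorum-counting step you yourself flag as the main obstacle. With the flexible hybrid quorums of this paper, $|Q_c| = f+1$ and $|Q_{vc}| = N-f$, so Equation~\ref{eq:minbft:flex2} guarantees only $|Q_c \cap Q_{vc}| \geq |Q_c| + |Q_{vc}| - N = 1$: the intersection is a single replica, and nothing prevents that replica from being Byzantine. The ``sharper counting'' $|Q_c| + |Q_{vc}| - N > f$ that you claim follows ``because the two quorums together cover strictly more than $N$ slots'' is simply false here --- the left side equals $1$, and making it exceed $f$ would force $|Q_{vc}| > N-1$, i.e.\ a view-change quorum of all $N$ replicas, defeating the entire design. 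Indeed, the whole point of Flexible Hybrid Quorums is that one does \emph{not} need a correct replica in the intersection; the trusted component substitutes for it. Your proof takes as its premise that the intersecting replica $r$ is correct and honestly ``contributes to $V_{vc}$ a record'' of $o$ at $i$, and it deploys the USIG tamper-evidence argument only against the new \emph{primary} suppressing that record; it has no argument at all for the case where $r$ itself is faulty and omits its own vote for $o$ from its \texttt{ViewChange} message.

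That missing case is exactly where the paper's proof does its real work. It runs a case analysis over whether the new primary is correct and whether any \emph{correct} replica in $Q_{vc}$ executed $o$, and in the cases where the only witnesses in the intersection are faulty, it argues from USIG sequentiality that even a faulty replica cannot drop the \texttt{Commit} it generated for $o$: the omission leaves a hole in its USIG counter sequence (its \texttt{ViewChange} message is itself USIG-signed with a strictly later counter value), so a correct primary rejects the message as invalid, and if the primary is also faulty and accepts it, correct replicas recomputing the \texttt{NewView} certificate detect the hole themselves. Your third paragraph contains the right USIG ingredients but aims them at the wrong adversary; redirect the no-holes argument at the faulty intersection replica, and additionally keep the paper's final step that $V_{req}$ prevents a faulty new primary from getting $o$ re-executed under a fresh $UI$ in view $v+1$ (your appeal to Lemma~\ref{lm:minbft:s1} inside the new view does not by itself cover a fresh proposal of $o$ carrying a brand-new, valid USIG certificate), and the proof then goes through along the paper's lines.
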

    
\subsubsection*{Quorum availability.} 
    A non-faulty primary will always receive responses from a quorum of $f+1$ 
    replicas and this quorum will contain at least one honest replica. 
    \begin{lemma}\label{lm:minbft:l1}
        During a stable view, an operation requested by a correct client 
        completes.
    \end{lemma}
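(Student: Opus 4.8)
The plan is to show that during a \emph{stable view}—a period after the network has become synchronous in which the primary is correct and responsive—every step of the normal-execution protocol completes, culminating in the correct client collecting enough matching responses. I would begin by fixing the meaning of a stable view: the current primary is correct, messages between correct replicas arrive within the timeout bounds, and consequently no correct replica issues a \texttt{ReqViewChange}, so the view does not change for the duration of the request.

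Next I would trace the request through the two phases of the protocol. The correct client (re)transmits its command until it reaches the primary; reliable delivery during synchrony, guaranteed by the retransmission mechanism of the system model, ensures this eventually succeeds. The correct primary invokes its USIG—which, being correct, only fails by crashing and here does not—to obtain the next counter value, assigns it to the command, and broadcasts a well-formed \texttt{Prepare}. Because the primary is correct, this message passes every acceptance check at the recipients: the view matches, the USIG signature verifies, and the counter value is the sequential successor of the previous one. Each of the at least $N-f$ correct replicas therefore accepts the \texttt{Prepare} and broadcasts a \texttt{Commit} to all others.

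The crux is a counting argument on the commit phase. Every correct replica receives the \texttt{Commit} messages sent by all other correct replicas within the synchronous window, giving it at least $N-f$ valid \texttt{Commit} messages. I would then invoke the system's resource assumption that $N \geq 2f+1$, so that the at least $N-f$ correct replicas by themselves number at least $f+1 = |Q_c|$. This is a liveness precondition that the flexible-quorum framework must still respect, even though the safety conditions $|Q_c| > f$ and $|Q_{vc}| + |Q_c| > N$ alone do not force it. Hence each correct replica assembles a commit certificate of $f+1$ valid \texttt{Commit} messages, executes the operation, and returns a matching response to the client, which completes the request.

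I expect the main obstacle to be the sequential-processing requirement rather than the raw counting: one must rule out the case where a correct replica is blocked because it is missing an earlier USIG counter value and therefore refuses to process the current message. I would discharge this by appealing to the retransmission and state-transfer machinery of the system model—during a stable view a correct primary produces a contiguous, hole-free sequence of counter values, and any gap a lagging replica perceives is eventually closed by retransmission or by state transfer from an up-to-date correct replica. The sequential check therefore never deadlocks, and progress is guaranteed.
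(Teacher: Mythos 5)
Your proposal follows essentially the same route as the paper's proof: a correct primary issues a valid USIG-certified \texttt{Prepare}, the at least $N-f$ correct replicas validate it and broadcast \texttt{Commit} messages, and since $N-f \geq f+1 = |Q_c|$ every correct replica assembles a commit certificate, executes the operation, and replies, so the correct client collects $f+1$ matching responses. You are in fact more careful than the paper on two points it leaves implicit---the liveness precondition $N \geq 2f+1$ (which, as you correctly note, the flexible-quorum safety inequalities $|Q_c| > f$ and $|Q_{vc}|+|Q_c| > N$ alone do not force) and the ruling out of sequential-processing deadlock from counter-value holes---so the proposal is sound.
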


\begin{lemma}\label{lm:minbft:l2}
    A view $v$ eventually will be changed to a new view $v' > v$ if at least 
    $N-f$ correct replicas request its change.
\end{lemma}

\section{\thesystem}\label{sec:duobft}

\begin{figure}[t]
	\centering
	\begin{subfigure}{0.40\textwidth}
        \centering
		\includegraphics[width=\linewidth]{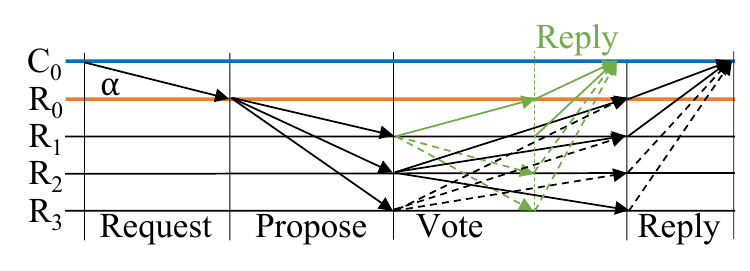}
		\caption{\thesystem}
        \label{fig:duobft:quadratic}
	\end{subfigure}
	\begin{subfigure}{0.45\textwidth}
        \centering
		\includegraphics[width=\linewidth]{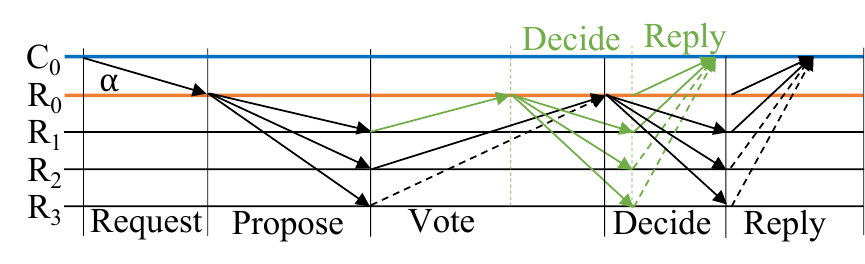}
		\caption{\thesystem with linear communication}
        \label{fig:duobft:linear}
	\end{subfigure}
	\caption{An example of normal execution steps in \thesystem with 
    quadratic and linear communication message complexities. The dotted arrows 
    represent non-quorum messages and the green arrows denote the Hybrid commit 
    path.}
	\label{fig:duobft}
\end{figure}

\thesystem is a BFT protocol that infuses a trusted component to provide 
better performance by leveraging the hybrid fault model while 
providing provisions for ensuring BFT resiliency for every client command. 
Client commands are committed under both hybrid and Byzantine fault  
assumptions.
This allows clients to take advantage of the quicker and cheaper hybrid 
commits and receive responses faster, improving system performance. 
However, since BFT commits also happen in tandem, Byzantine safety can be 
guaranteed despite trusted component compromises.

\thesystem requires $N = 3f+1$ replicas to tolerate up to $f$ failures.
To tolerate hybrid failures, we assume a non-Byzantine trusted execution 
environment in all replicas including malicious ones, 
but tolerating Byzantine failures require no such assumptions.
The trusted execution environment in each replica will host the USIG service 
(described in Section~\ref{sec:minbft:bg}) for certifying the messages shared 
by the replicas.
For ease of exposition, we use $f_B$ to denote Byzantine failures and 
$f_H$ to denote hybrid failures.
However, in the context of \thesystem, $f_B = f_H = f$. 
Replicas collect both the hybrid and BFT quorums, commit client commands 
under the respective fault models, and respond back to the client.
The client can specify its response fault model along with the command sent 
to the replicas.
Furthermore, we assume a mechanism for clients to obtain responses under 
both the fault models either via a long-lived client-replica connection or 
a separate request-response mechanism.

\thesystem is composed of an agreement protocol that collects 
two kinds of quorum votes, and a view-change protocol with a single quorum type. 
\thesystem's ability to have one view-change protocol for both the 
fault models is made possible by the Flexible Hybrid Quorums model 
(Section~\ref{sec:minbft:flex}).
Recall that, to tolerate $f$ failures, a traditional hybrid protocol requires
$N_H=2f+1$ replicas, while a BFT protocol requires $N_B=3f+1$ replicas.
The quorums are thus $f+1$ and $2f+1$ respectively, including for view changes.
In contrast, the Flexible Hybrid Quorums enable having more than $N_H$ replicas 
without changing $f$, since $N$ is not a function of $f$.
This allows running a hybrid protocol in a system with $N=3f+1$ replicas 
by adapting the quorum sizes. 
We use $f+1$ quorums for agreement subprotocol while $2f+1$ quorums for 
view-change protocol. 
Since the view-change procedure for both the BFT and hybrid fault models 
are the same, \thesystem is able to have a unified view-change protocol using 
the matching view-change quorums in both fault models.

In the rest of this section, we will present the detailed description of 
\thesystem, overview its properties, and discuss its guarantees.
Similar to other recent works such as 
Casper~\cite{Buterin:2017:Casper:1710.09437},
Hotstuff~\cite{Yin:2019:HotStuff:3293611.3331591}, 
and Flexible BFT~\cite{Malkhi:2019:FlexibleBFT:CCS:3319535.3354225}, 
we explain \thesystem in terms of a Blockchain protocol
where the votes are pipelined.
First, we present the terminologies.

\begin{figure*}
    \begin{mdframed}
    A \thesystem replica executes the following protocol:
    
    \textbf{Normal Protocol}
        \begin{enumerate}[label=(\arabic*)]
            \item \textbf{Propose.} The primary $P$ creates a block and sends it 
    in a $\langle \texttt{Propose}, v, P, B_k, s_P \rangle$ message to all 
    replicas. 
    It attests the block with a USIG certificate $UI$. 
    The primary collects votes for the blocks from the replicas. 
    The primary sends the next block when it receives a quorum certificate for 
    its previous block.
            \item \textbf{Vote.} A replica $R$ receives the  
    $\langle \texttt{Propose}, v, P, B_k, s_P \rangle$ message from the primary, 
    validates if it extends the last proposed block, and votes for it. 
    The vote is sent in a $\langle \texttt{Vote}, v, R, B_k, s_P, s_R \rangle$ 
    message to other replicas containing a $UI$ certificate.
    
    In addition, Replica $R$ records the following information for a block:
    \begin{itemize}[label=-]
        \item $q_v(B_k)$: The votes received for block $B_k$ by replica $R$ 
from any other replica in view $v$.
        \item \textbf{$\mathcal{C}_{H}(B_k)$}: 
A set of $f+1$ valid votes form a hybrid quorum certificate 
for block $B_k$. 
        \item \textbf{$\mathcal{C}_{B}(B_k)$}:
A set of $2f+1$ valid votes form a BFT quorum certificate for block $B_k$. 
    \end{itemize}
        \end{enumerate}

    \textbf{View Change Protocol}

    \begin{enumerate}[label=(\arabic*)]
        \item \textbf{View Change Request}
    A replica requests a view change if it does not receive proposals from the 
    replicas in a timely manner, or if it observes equivocating blocks either 
    via the proposal or the vote messages.
    \begin{itemize}[label=-]
        \item Replica $R$ sends a 
$\langle \texttt{ReqViewChange}, R, v, v' \rangle$ to request a view change 
from $v$ to $v' = v+1$.
        \item A replica that receives $f+1$ \texttt{ReqViewChange} messages 
transitions to the new view and multicasts the \texttt{ViewChange} message to 
other replicas.
        \item A replica that receives $f+1$ \texttt{ViewChange} message also
transitions to the new view and sends its \texttt{ViewChange} message to other 
replicas. The View Change message consists of all the blocks that the replicas 
have a quorum certificate for.
    \end{itemize}
        \item \textbf{New View.} 
    The new primary $P'$ collects $2f+1$ \texttt{ViewChange} messages and 
    computes the sequence of blocks in the new view $v'$. It sends a 
    $\langle \texttt{NewView} \rangle$ message to all replicas.
        \item \textbf{New View Install.} 
    A replica that receives the new primary $P'$'s \texttt{NewView} messages, 
    validates it, and installs $S$, the block sequence in the new view.
    \end{enumerate}
    
    \end{mdframed}
    \caption{\thesystem Protocol Execution}
    \label{fig:duobft:protocol}
\end{figure*}

\subsection{Preliminaries}

\subsubsection{Blockchain} As presented in the previous sections, in classical 
BFT protocols, the agreement happens on a sequence of client commands. 
In a Blockchain protocol, the agreement happens on a chain of blocks, where 
each \emph{block} contains one or more client commands. 
Each block has a reference to a predecessor block except the first block 
in the chain, also called the genesis block, which has no predecessors. 
Every block has a height parameter that indicates its position from 
the genesis block. 
A block $B_k$ at height $k$ has the following format: 
$B_k := (b_k, h_{k-1})$ where $b_k$ refers to the block's 
value and $h_{k-1} = H(B_{k-1})$, the hash of the predecessor block in the 
chain. 
For the genesis block, the predecessor hash is null, thus 
$B_1 := (b_1, \bot)$. 
Note that only the genesis block can have a null 
predecessor hash; other block must specify a valid hash. 

\subsubsection{Block Prefix and Equivocation} 
Let $S$ be a sequence of blocks in increasing height order.
The prefix of a Block $B_k$ at height $k$ in sequence $S$, 
denoted $prefix(S, k)$, is the prefix of the sequence $S$ containing 
the first $k$ blocks from the genesis block.
Equivocation happens when the sequence of 
blocks diverges. 
Given two blocks $B_i$ and $B_j$, we say those blocks diverge 
when $B_i$ is not the ancestor of $B_j$ or vice versa. 

\subsubsection{Block Certificates} 
Replicas vote on the blocks by signing on the hash of the block $H(B_k)$. 
A quorum of these votes form a quorum certificate. 
In \thesystem, replicas collect two kinds of certificates for a block: 
a Hybrid quorum certificate \textbf{$\mathcal{C}_{H}(B_k)$} and a 
BFT quorum certificate \textbf{$\mathcal{C}_{B}(B_k)$}.
The size of the quorums are discussed below.

\subsection{Protocol}

The \thesystem protocol proceeds in a view by view fashion. 
The primary of each view is decided using the formula $v \mod N$ 
i.e. primary roles are assigned to replicas in round-robin order. 
The primary of the view is responsible for proposing blocks that other replicas 
vote on. 
Figure~\ref{fig:duobft:protocol} presents a concise algorithm description.

At a high level, \thesystem works as follows. 
The primary proposes a block to replicas. 
Replicas vote on the block if it is safe to do so.
A quorum of such votes on a block make a quorum certificate.
After collecting a quorum certificate for a block, the primary moves on to 
propose the next block extending the previous one.
We will discuss how commit decisions on blocks are made in 
Section~\ref{sec:duobft:commitrule}.
Replicas use the view change protocol to install a new view if they are unable 
to make progress in the current view. 
The view change protocol begin only if $f+1$ replicas request a view change.

\subsubsection{Normal protocol} The normal protocol is executed when a view is 
stable. 
In a stable view $v$, the primary and $N-f$ replicas behave 
correctly and exchange messages in a timely manner. 
The primary creates a new block $B_k$ that extends the highest block in the 
chain it is aware of and signs the block using its USIG component.
The primary $P$ sends the block to the replicas in a 
$\langle \texttt{Propose}, v, P, B_k, s \rangle$ message, 
where $v$ is the current view and $s$ is the USIG certificate $UI$. 

A replica $R$ that receives the \texttt{Propose} message for block $B_k$, votes 
on the block if it extends the previously proposed block in the view. 
Similar to MinBFT, \thesystem replicas only process blocks in increasing height 
order.
When blocks are received out of order, replicas wait to receive the 
predecessor blocks to $B_k$ and validates the blocks, casts its vote in the 
height order.
Replica $R$ creates a $UI$ certificate and sends its vote in a 
$\langle \texttt{Vote}, v, R, B_k, s_P, s_R \rangle$ message to other replicas.
The votes collected by a replica can form two kinds of quorum 
certificates for the block $B_k$. 
Every replica records the following information for a block:
\begin{itemize}[label=-]
    \item \textbf{$\mathcal{C}_{H}(B_k)$}: 
A set of $f+1$ votes with valid $UI$ certificates form a hybrid quorum certificate 
for block $B_k$.
    \item \textbf{$\mathcal{C}_{B}(B_k)$}:
A set of $2f+1$ votes with valid certificated form a BFT quorum certificate 
for block $B_k$. 
\end{itemize}

\subsubsection{View Change} 
If a replica detects equivocation or lack of progress by the primary, 
it will start the view change procedure to move from the current view $v$ to 
the next stable view $v'$. 
A replica requests a view change by sending a 
$\langle \texttt{ReqViewChange}, R, v, v' \rangle$ message to 
other replicas. 
When a replica receives at least $f+1$ \\ \texttt{ReqViewChange} messages, it 
starts the view transition and sends a 
$\langle \texttt{ViewChange}, R, v', O, UI_i \rangle$  message, where 
$O$ contains the sequence of blocks for which $R$ has collected any 
quorum certificate. 
The primary $P'$ of the new view $v'$ will collect $2f+1$ valid 
\texttt{ViewChange} messages to form the new view certificate. 
$P'$ will use this certificate to compute the set of blocks $S$ for which
quorum certificates exist. 
$P'$ sends a 
$\langle \texttt{NewView}, r_i, v', V_{vc}, S, UI_i \rangle$, where $V_{vc}$ is 
the \textit{new view certificate} that contains the set of 
$f+1$ \texttt{ViewChange} 
messages used to construct the new view and $S$ is the set of prepared or 
committed requests. 
Replicas verify the validity of the $S$ by performing the same computation as 
the new primary using the new view certificate. 
Then, replicas adjust their local state according to $S$ and start voting in 
the new view $v'$.

The new view computation performed by the replicas and the primary is similar 
to signature-based PBFT's view change mechanism~\cite{Castro:2001:PBFT:Thesis}. 
We omit the details for conciseness.
Figure~\ref{fig:duobft:protocol} presents a brief overview of view-change steps.

\subsection{Commit Rules}
\label{sec:duobft:commitrule}

\begin{figure}
    \begin{mdframed}
        \begin{enumerate}
            \item \textbf{Hybrid Commit Rule:} A replica commits a block $B_k$ 
under the Hybrid Commit Rule iff it collects hybrid quorum certificate for 
block $B_k$ i.e. $\mathcal{C}_{H}(B_k)$.
            \item \textbf{BFT Commit Rule:} A replica commits a block $B_k$ 
under the BFT Commit Rule iff it collects BFT quorum certificates for 
the blocks $B_k$ and $B_{k+1}$ i.e. $\mathcal{C}_{B}(B_k)$ and 
$\mathcal{C}_{B}(B_{k+1})$, and $B_{k+1}$ extends $B_k$.
        \end{enumerate}
    \end{mdframed}
    \caption{\thesystem Commit Rules}
\end{figure}

In \thesystem, replicas commit blocks under two different fault models 
by reusing the same set of vote messages for a given block. 
The protocol enforces a set of commit rules that uses the vote 
messages and chain state to decide when to commit the blocks.

\subsubsection{Hybrid Commit Rule}
A replica can commit a block under the Hybrid fault model 
when it receives at least $f+1$ votes, called the hybrid quorum certificate,
from replicas. 
Under the Hybrid Commit Rule, the protocol provides the same safety guarantees 
as Flexible MinBFT.

\subsubsection{BFT Commit Rule} 
A replica can commit blocks under the Byzantine fault model 
when it receives at least $2f+1$ votes, called the BFT quorum certificate, 
for the block $B_k$ and its parent block $B_{k-1}$. 
Under the BFT Commit Rule, the protocol provides the same safety guarantees as 
PBFT.

\subsection{Proof}

\begin{lemma}\label{lm:duobft:s1}
    If a replica commits a block $B_l$ in a view $v$, then no replica
    with the same assumptions will commit $B'_{l}$ that does not equal $B_l$
    in view $v$.
\end{lemma}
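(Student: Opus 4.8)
The plan is to prove the statement by cases on the shared assumption: either both committing replicas use the Hybrid Commit Rule, or both use the BFT Commit Rule. The common ingredient in both cases is an auxiliary fact that I would establish first: in a single view $v$, a correct replica casts at most one vote for any given height $l$. This follows from the normal protocol, where a correct replica processes proposals in increasing USIG-counter order and votes for a block only if it extends the last block it voted for; once it has voted at height $l$ it has advanced past $l$, so it never votes for a second block at that height.

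For the BFT case I would argue by quorum intersection. Suppose a replica commits $B_l$ under the BFT rule, so a certificate $\mathcal{C}_{B}(B_l)$ of $2f+1$ votes exists in view $v$, and another replica commits $B'_l \neq B_l$ with $\mathcal{C}_{B}(B'_l)$. Since $N = 3f+1$, any two sets of $2f+1$ replicas intersect in at least $f+1$ replicas, hence in at least one correct replica. That correct replica would have voted for both $B_l$ and $B'_l$ at height $l$ in view $v$, contradicting the auxiliary fact; therefore $B_l = B'_l$. This is precisely the PBFT-style intersection argument.

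The Hybrid case is the main obstacle, because under Flexible Hybrid Quorums the commit quorums are only of size $f+1$ in a system of $N = 3f+1$ replicas, so two hybrid certificates need not share any replica and intersection yields nothing. Instead I would rely on the USIG component to rule out primary equivocation, mirroring Lemma~\ref{lm:minbft:s1} for Flexible MinBFT. The key sub-step is to show that within a view every height corresponds to a unique block among correct replicas' votes: the single primary of $v$ attests each proposal with a USIG certificate $\langle ctr, H(\cdot) \rangle$, and by USIG uniqueness it cannot bind the same counter to two distinct block hashes, while correct replicas, checking sequentiality, accept proposals only in counter order along a single extending chain. Consequently the counter value at which a height-$l$ block is proposed is fixed, and every correct replica that votes at height $l$ votes for the same block. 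Since each hybrid certificate of $f+1$ votes contains at least one correct vote (at most $f$ replicas are faulty), both $\mathcal{C}_{H}(B_l)$ and $\mathcal{C}_{H}(B'_l)$ require a correct replica's vote for a height-$l$ block, and those must be the same unique block; hence $B_l = B'_l$.

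The crux I expect to need the most care on is making the counter-to-height binding precise: I must argue that a faulty primary cannot cause two correct replicas to vote for different height-$l$ blocks by either (i) reusing a counter for two hashes, which is blocked by USIG uniqueness, or (ii) sending different blocks at the same height under different counters, which is blocked because a correct replica, processing in counter order, has already voted at height $l$ and will refuse any later block that does not extend its current chain tip. Pinning down that these two cases are exhaustive, given that correct replicas only accept proposals from the view's single primary and enforce sequential counters, is the part that carries the whole Hybrid argument.
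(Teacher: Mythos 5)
Your proof is correct and follows the same case split as the paper: for the BFT rule you give the identical quorum-intersection argument (two $2f+1$ quorums in $N=3f+1$ share a correct replica that votes at most once per height), and for the Hybrid rule you rely, as the paper does, on the USIG component to prevent primary equivocation. Where you genuinely diverge is in how much of the Hybrid case you make explicit. The paper's proof states only that a primary cannot sign two messages with the same USIG identifier and concludes directly that two blocks cannot exist at height $l$; it never addresses your case (ii) --- a faulty primary attesting two different height-$l$ blocks under two \emph{different} counter values --- nor does it invoke the fact that an $f+1$ certificate must contain at least one correct replica's vote. Your route (each correct replica votes at most once per height; sequential counter processing plus the chain-extension check force all correct votes in a view onto a single chain; every hybrid certificate contains at least one correct vote) closes exactly that gap. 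The price is the auxiliary one-vote-per-height lemma and the counter-to-height binding analysis; what it buys is an argument that does not tacitly identify USIG counters with block heights, which is the unstated assumption the paper's two-line Hybrid case leans on.
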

\begin{proof}
    \textbf{BFT Commit Rule:}
    We prove by contradiction.
    Say a replica commits block $B_l$. It will have $q_c$ votes for $B_l$ 
    and its immediate successor.  
    Suppose another replica commits block $B'_l$ then it will have $q_c$ votes 
    for $B'_l$ and its immediate successor. 
    However, the intersection of two $q_c$ quorums will have at least one 
    correct replica that will not vote for two blocks at the same height. This 
    is a contradiction.

    \textbf{Hybrid Commit Rule:}
    We prove by contradiction.
    Say a replica commits block $B_l$. It will have $f+1$ USIG votes for $B_l$.  
    Suppose another replica commits block $B'_l$ then it will have $f+1$ USIG
    votes for $B'_l$. 
    However, a primary cannot sign two messages with the same USIG identifier.
    Thus, there is no way there can exist two blocks at the same height $l$.
    This is a contradiction.

    Thus, it is not possible for any two replicas to commit different blocks
    at the same height in view $v$.
\end{proof}

\begin{lemma}\label{lm:duobft:s2}
    If a replica commits a block $B_l$ in a view $v$, no replica with the 
    same assumptions will commit block $B'_{l}$ that does not equal $B_l$ at 
    the same height $l$ in any view $v' > v$.
\end{lemma}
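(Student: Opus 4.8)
The plan is to follow the skeleton of Lemma~\ref{lm:duobft:s1} --- argue by contradiction and split on the commit rule --- but to handle the fact that many view changes may separate $v$ from $v'$ by an induction on the view number. It suffices to show that a block committed (under either rule) at height $l$ in a view is necessarily installed at height $l$ in the immediately following view; the statement for arbitrary $v' > v$ then follows by chaining this argument across every intervening transition. In each step the engine is quorum intersection between the commit quorum that certified $B_l$ and the view-change quorum of $2f+1$ \texttt{ViewChange} messages that the new primary $P'$ collects, which is exactly the intersection guaranteed by Equation~\ref{eq:minbft:flex2}.

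For the \textbf{BFT Commit Rule}, committing $B_l$ in $v$ yields both $\mathcal{C}_B(B_l)$ and $\mathcal{C}_B(B_{l+1})$; since a correct replica votes for the child $B_{l+1}$ only after seeing the certificate $\mathcal{C}_B(B_l)$ carried by its proposal, at least $2f+1$ replicas hold a certificate pinning $B_l$ at height $l$. As the view-change quorum also has size $2f+1$, the two overlap in at least $(2f+1)+(2f+1)-(3f+1) = f+1$ replicas, hence in at least one \emph{correct} replica $r$. A correct $r$ faithfully reports its certificate for $B_l$ in its \texttt{ViewChange} message, and the PBFT-style new-view construction forces $P'$ to adopt the highest certified block at each height; thus $B_l$ sits at height $l$ in the installed sequence $S$ for the next view, and Lemma~\ref{lm:duobft:s1} then forbids a conflicting BFT commit of $B'_l$ there.

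For the \textbf{Hybrid Commit Rule}, committing $B_l$ only furnishes $\mathcal{C}_H(B_l)$, a quorum of $f+1$ votes, so its intersection with the $2f+1$ view-change quorum is merely nonempty and may reduce to a single, possibly Byzantine, replica. This is the main obstacle: I cannot appeal to an honest replica in the intersection, and must instead reuse the trusted-component argument of Flexible MinBFT's cross-view lemma (Lemma~\ref{lm:minbft:s2}). Because the USIG counter is monotonic and sequential, the intersecting replica's attested history cannot omit its vote for $B_l$ without leaving a detectable hole in its counter space; correct replicas reject such a gapped history, so even a malicious replica is forced to reveal $B_l$, and $P'$ again installs it at height $l$ in $S$. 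Lemma~\ref{lm:duobft:s1} then rules out a conflicting hybrid commit in the next view.

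To finish, I would close the induction: $B_l$ is preserved across one transition under either rule, and applying the same reasoning to each successive view keeps $B_l$ at height $l$ through every view up to $v'$, contradicting the assumed commit of $B'_l \neq B_l$ at height $l$ in $v'$. The conceptual subtlety worth highlighting is that the two cases secure the \emph{same} $2f+1$ view-change intersection by structurally different means --- a guaranteed correct replica for BFT versus non-equivocation of the trusted component for Hybrid --- which is exactly why one unified view-change quorum can certify safety simultaneously in both fault models.
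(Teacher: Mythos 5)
Your proposal is correct and takes essentially the same route as the paper's own proof: contradiction with a case split on the two commit rules, intersection of the commit quorum with the $2f+1$ view-change quorum to force the new primary to carry $B_l$ forward in the BFT case, the USIG counter hole-detection argument (as in Flexible MinBFT's cross-view Lemma~\ref{lm:minbft:s2}) in the hybrid case, and reduction of arbitrary $v' > v$ to a chain of single-view transitions. If anything, your hybrid case is stated more carefully than the paper's, which loosely asserts that a \emph{correct} replica in the intersection holds the certificate for $B_l$, whereas you correctly observe that the $(f+1)$-vote hybrid quorum may meet the view-change quorum in only one, possibly Byzantine, replica and that it is USIG non-equivocation that forces this replica's history to reveal $B_l$.
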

\begin{proof}
    \textbf{BFT Commit Rule:}
    We prove by contradiction.
    A replica commits block $B_l$ in view $v$ then it will have $q_c$ votes 
    for $B_l$ and its immediate successor.  
    Suppose another replica commits block $B'_l$ in view $v' > v$ then it 
    should have $q_c$ votes for $B'_l$ and its immediate successor. 
    
    Since $B_l$ is committed in view $v$, there should exist quorum 
    certificates for blocks $B_l$ and $B_{l+1}$. In the new view $v' = v+1$, 
    the \texttt{NewView} message sent by the new primary includes a 
    \texttt{NewView} certificate 
    with $q_r$ \texttt{ViewChange} messages that contains at least one 
    correct replica. 
    The \texttt{ViewChange} message from the correct replica 
    will have the correct certificate for $B_l$ and $B_{l+1}$ 
    Thus, the new primary must enforce blocks $B_l$ and $B_{l+1}$ in the new 
    view $v+1$. It will receive votes only for $B_l$ in the new view $v+1$. 
    For $B'_l$ to be committed, $q_c$ replicas must votes for it, which cannot 
    happen since there is at least one correct replica in the intersection of 
    $q_r$ and $q_c$ that received the NewView message with correct certificates.
    This is a contradiction. Thus, $B'_l$ cannot have been committed in $v+1$. 

    \textbf{Hybrid Commit Rule:}
    We prove by contradiction.
    A replica commits block $B_l$ in view $v$ then it will have $f+1$ votes 
    for $B_l$.  
    Suppose another replica commits block $B'_l$ in view $v' > v$ then it 
    should have $f+1$ votes for $B'_l$. 
    
    Since $B_l$ is committed in view $v$, there should exist a USIG quorum 
    certificate for block $B_l$. In the new view $v' = v+1$, 
    the \texttt{NewView} message sent by the new primary includes a new view certificate 
    with $N-f$ \texttt{ViewChange} messages that contains at least one correct replica. 
    The correct replica's \texttt{ViewChange} message will have the correct 
    quorum certificate for $B_l$. 
    It might happen that the new primary might remove some block entries 
    from the \texttt{ViewChange} message, but this will be detected as the 
    USIG-signed \texttt{NewView} message will reveal the holes in the message log 
    (See Lemma~\ref{lm:minbft:l2} for additional details.)
    Thus, the new primary must enforce blocks $B_l$ and $B_{l+1}$ in the new 
    view $v+1$. It will receive votes only for $B_l$ in the new view $v+1$. 
    For $B'_l$ to be committed, $q_c$ replicas must votes for it, which cannot 
    happen since there is at least one correct replica in the intersection of 
    $q_r$ and $q_c$ that received the NewView message with correct certificates.
    This is a contradiction. Thus, $B'_l$ cannot have been committed in $v+1$.

    For both commit rules above, the case for arbitrary $v' > v$ where 
    $v' = v+k$ will fall under 
    the case of $v+1$, since at each view transition, the information from 
    one view is propagated to the next view.
\end{proof}

\begin{theorem}\label{th:duobft:s1}
    Any two replicas with the same commit rule commit the same sequence of 
    blocks in the same order. 
\end{theorem}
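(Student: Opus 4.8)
The plan is to reduce the statement to agreement on a per-height basis, which Lemmas~\ref{lm:duobft:s1} and~\ref{lm:duobft:s2} already establish, and then to lift this height-wise agreement to agreement on the entire committed sequence using the chained structure of blocks. Throughout, I would fix a single commit rule (either the Hybrid or the BFT rule) and consider two replicas $R_1$ and $R_2$ that both use it; the phrase ``same commit rule'' in the statement plays the role of the ``same assumptions'' hypothesis of the two lemmas.

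First I would argue height-wise agreement: for every height $l$, if $R_1$ commits a block $B_l$ and $R_2$ commits a block $B'_l$ at the same height $l$, then $B_l = B'_l$. Suppose $R_1$ commits in view $v$ and $R_2$ in view $v'$, and, without loss of generality, $v \le v'$. If $v' = v$, then Lemma~\ref{lm:duobft:s1} (agreement within a view) forces $B'_l = B_l$; if $v' > v$, then Lemma~\ref{lm:duobft:s2} (agreement across views) forces $B'_l = B_l$. Since these two cases are exhaustive, the block committed at any fixed height is unique across replicas that share the same commit rule.

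Next I would lift this to the full sequence. Recall that each block has the form $B_k := (b_k, h_{k-1})$ with $h_{k-1} = H(B_{k-1})$, so a committed block uniquely determines, by collision resistance of $H$, its entire prefix $prefix(S,k)$ back to the genesis block; hence committing $B_k$ commits a contiguous chain of blocks ordered by height. Let $R_1$ commit the sequence $B_1,\dots,B_{m_1}$ and $R_2$ commit $B'_1,\dots,B'_{m_2}$. By the height-wise agreement just shown, $B_k = B'_k$ for every $k \le \min(m_1,m_2)$, so the shorter committed sequence is a prefix of the longer one and the two agree on every common height. This is exactly the claim that the two replicas commit the same sequence of blocks in the same order.

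The main obstacle is the lifting step rather than the per-height argument, which is merely a case split over Lemmas~\ref{lm:duobft:s1} and~\ref{lm:duobft:s2}. Care is needed in (i) arguing that a commit of $B_k$ entails a commit of its whole prefix, so that the ``sequence'' committed by a replica is a well-defined contiguous chain rather than an arbitrary collection of heights, and (ii) invoking collision resistance of $H$ so that the predecessor hash $h_{k-1}$ pins down a unique ancestor, ruling out two distinct chains that coincide height-by-height yet differ in their linking. Once both points are in place, per-height uniqueness immediately yields the prefix relation and hence identical ordering.
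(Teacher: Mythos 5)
Your proposal is correct and takes essentially the same approach as the paper: both reduce the sequence-level claim to the existence of two distinct blocks committed at the same height under the same commit rule, and dispatch that case by Lemma~\ref{lm:duobft:s1} when the two commits occur in the same view and by Lemma~\ref{lm:duobft:s2} when they occur in different views, which yields the prefix relation between the two committed sequences. The only differences are presentational --- the paper argues by contradiction at the first point of divergence while you argue directly at every common height, and you make explicit the hash-chaining/collision-resistance detail that the paper's proof leaves implicit.
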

\begin{proof}

    To elaborate on the theorem, if a replica following a commit rule commits 
    the sequence of blocks $S = \langle B_1 ... B_i \rangle$, then another 
    replica that follows the same commit rule will commit the same sequence of 
    blocks $S$ or a prefix of it. 
    We use $prefix(S, i)$ to represent the first $i$ blocks of the sequence $S$.
    We use the $\bullet$ operator to concatenate any two sequences.

    Assume the theorem is false i.e.  there should exist two sequences $S$ and 
    $S'$ committed by two replicas that is not a prefix of each other. Assume 
    the sequences conflict at $i$, such that 
    $prefix(S, i) = prefix(S', i-1) \bullet \langle B_i \rangle$ and
    $prefix(S', i) = prefix(S, i-1) \bullet \langle B'_i \rangle$.    
    Precisely, there exists two blocks $B_i$ and $B'_i$ 
    at the same height $i$ committed by two different replicas with the same 
    commit rule. Assume that block $B_i$ was committed in view $v$ and 
    block $B'_i$ was committed in view $v'$. 
    If $v = v'$, then this will contradict Lemma~\ref{lm:duobft:s1}. 
    If $v' > v'$, then this will contradict Lemma~\ref{lm:duobft:s2}.
    Hence, the theorem must hold.
\end{proof}

\begin{lemma}\label{lm:duobft:l1}
    During a stable view, a proposed block is committed by a replica.
\end{lemma}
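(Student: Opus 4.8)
The plan is to show that in a stable view the proposed blocks accumulate enough honest votes to trigger both commit rules, so that a proposed block is indeed committed. First I would unfold the definition of a stable view: the primary is correct and at least $N-f = 2f+1$ replicas are correct and exchange messages in a timely manner. When the correct primary proposes a block $B_k$ extending the highest certified block, each of those $2f+1$ correct replicas receives the \texttt{Propose} message, verifies that $B_k$ extends the previously proposed block, and casts a valid \texttt{Vote} carrying a fresh $UI$ certificate. Hence $B_k$ collects at least $2f+1$ valid votes.

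For the Hybrid Commit Rule the conclusion is then immediate: since $2f+1 \geq f+1$, those votes form a hybrid quorum certificate $\mathcal{C}_H(B_k)$, so a correct replica commits $B_k$ under the hybrid rule. For the BFT Commit Rule I would observe that the same $2f+1$ votes already constitute a BFT quorum certificate $\mathcal{C}_B(B_k)$; what remains is to produce $\mathcal{C}_B(B_{k+1})$ for a successor $B_{k+1}$ extending $B_k$. Because the view is stable, once the primary collects the certificate for $B_k$ it proposes the next block $B_{k+1}$ extending $B_k$, and by the identical counting argument $B_{k+1}$ gathers $2f+1$ votes, yielding $\mathcal{C}_B(B_{k+1})$. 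With both certificates present and $B_{k+1}$ extending $B_k$, the BFT rule fires and $B_k$ is committed under the BFT model.

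The main obstacle is the successor requirement baked into the BFT rule: a block is hybrid-committed as soon as it is proposed and voted on, but it becomes BFT-committed only after its successor is certified. I would handle this by arguing that in a stable view the primary keeps proposing---either draining pending client commands or, in their absence, inserting a no-op block---so every proposed block eventually acquires a certified successor, and thus the ``last block'' gap cannot persist. Finally I would appeal to quorum availability, analogous to Lemma~\ref{lm:minbft:l1}: with $2f+1$ correct replicas always responding in bounded time, the primary never stalls waiting on a quorum for either $B_k$ or $B_{k+1}$, which guarantees that both commit rules are satisfied within the stable view.
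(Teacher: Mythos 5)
Your proposal is correct and follows the same underlying route as the paper's proof: in a stable view the correct primary proposes in a timely manner, the $N-f = 2f+1$ correct replicas all vote, and therefore both the $f+1$ hybrid quorum and the $2f+1$ BFT quorum are attainable, so each commit rule can fire. The difference is one of thoroughness, and it favors you. The paper's proof is a single counting argument that ends with ``a replica will receive the votes on time and will commit the block using their commit rule''; it never confronts the fact that the BFT commit rule requires a certificate not just for $B_k$ but also for a certified successor $B_{k+1}$ extending it. You isolate exactly this obstacle and close it by observing that in a stable view the primary keeps proposing, so every certified block eventually acquires a certified successor. That is the right repair, and it is the step the paper's argument implicitly relies on without stating. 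One caveat: your device of a no-op block when no client commands are pending is not a mechanism the paper defines anywhere, so strictly speaking you are adding an assumption; alternatively you could argue that the lemma's conclusion for the BFT rule is contingent on the primary having a further block to propose, which in the paper's setting is implied by the pipelined, chained structure of the protocol (the primary sends the next block upon receiving a quorum certificate for the previous one). Either phrasing would make your proof self-contained relative to the paper's protocol description.
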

\begin{proof}
    In a stable view, the correct primary will propose blocks in a timely 
    fashion. If the primary is hybrid, then it will generate an 
    $UI = \langle i, H(b) \rangle_p$ for the block. Correct replicas that 
    receive the proposal will vote for it. Replicas that are hybrid 
    will generate an $UI$ for their votes. Since there are at most $f$ 
    faulty replicas, they will remain $N - f$ correct ones. For a hybrid quorum,
    at least $f+1$ of these $N - f$ replicas will reply on time. Similarly, 
    for a BFT quorum $N-f = 2f+1$ replicas will reply on time. Thus, a replica 
    will receive the votes on time and will commit the block using their commit 
    rule.
\end{proof}

\begin{lemma}\label{lm:duobft:l2}
    A view $v$ will eventually transition to a new view $v' > v$ 
    if at least $N-f$ replicas request for it.
\end{lemma}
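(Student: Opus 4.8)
The plan is to argue directly from the view-change mechanics in Figure~\ref{fig:duobft:protocol} together with the partial-synchrony assumption, showing that a sufficiently large set of \texttt{ReqViewChange} messages cascades into every correct replica installing a higher view. First I would fix the counting: since $N = 3f+1$, the hypothesis supplies at least $N-f = 2f+1$ replicas issuing \texttt{ReqViewChange} from $v$ to $v' = v+1$. Because at most $f$ of these can be faulty, at least $(N-f)-f = f+1$ of the requesters are correct, and once the system becomes synchronous (after GST) their messages are delivered to every correct replica within a bounded delay.

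The core step is the cascade. Each correct replica, upon receiving the $f+1$ \texttt{ReqViewChange} messages from the correct requesters, meets the $f+1$ threshold, transitions to $v'$, and broadcasts a \texttt{ViewChange} message; hence all $2f+1$ correct replicas eventually broadcast \texttt{ViewChange}. The alternate trigger, namely receiving $f+1$ \texttt{ViewChange} messages, only reinforces this, so no correct replica is left behind. If the primary $P'$ of $v'$ is correct, it collects the $2f+1$ \texttt{ViewChange} messages that are guaranteed to arrive, forms the new-view certificate, computes the block sequence $S$, and broadcasts \texttt{NewView}; every correct replica then validates it by recomputing $S$ from the certificate and installs $v'$, completing the transition.

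The main obstacle is the case where $P'$ is faulty and withholds or corrupts the \texttt{NewView} message. I would handle this with the timeout-and-rotate structure: a correct replica that fails to install $v'$ in time re-arms its timer and issues \texttt{ReqViewChange} for the next view, so the identical cascade applies one view higher. Since primaries rotate round-robin as $v \bmod N$ and at most $f$ of any $f+1$ consecutive primaries are faulty, within at most $f+1$ successive view changes the protocol reaches a view whose primary is correct, and the preceding paragraph then drives that view to completion. Because partial synchrony guarantees bounded delays after GST while faulty primaries can stall progress only finitely often, a transition to some $v' > v$ occurs eventually, which establishes the lemma.
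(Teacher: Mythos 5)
Your overall route matches the paper's: derive $f+1$ correct requesters from the $N-f$ hypothesis, cascade \texttt{ReqViewChange} into \texttt{ViewChange} at every correct replica, let a correct new primary assemble $2f+1$ \texttt{ViewChange} messages into a \texttt{NewView}, and fall back to timeout-and-rotate when the new primary is faulty. Your explicit round-robin bound (a correct primary within at most $f+1$ consecutive views) is an addition the paper defers to the liveness theorem that follows this lemma, but it is consistent with the paper's structure.

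There is, however, one genuine gap, and it is exactly the scenario the paper isolates as its second sub-case for an unstable new view. You assert that if the primary of $v'$ withholds or corrupts \texttt{NewView}, then correct replicas time out and ``the identical cascade applies one view higher.'' But a faulty primary can send a valid \texttt{NewView} to \emph{some} correct replicas and withhold it from the rest. The replicas that received it install $v'$ and have no reason to request another view change; only the remaining correct replicas --- possibly fewer than $f+1$ of them --- time out and issue \texttt{ReqViewChange} for $v+2$. Since the trigger threshold is $f+1$, your cascade for $v+2$ never fires, and the rotation argument you rely on stalls. The lemma still holds in this scenario, but for a different reason: the correct replicas that did receive \texttt{NewView} have already transitioned to $v' > v$, and the stragglers either make do or, if the faulty replicas later deviate from the protocol, enough correct replicas join to trigger the next view change --- this is how the paper closes the case. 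Your proof should therefore split on whether at least $f+1$ correct replicas fail to install $v'$ (in which case your cascade fires) or fewer do (in which case the transition has de facto succeeded for a quorum), rather than claiming the cascade always re-fires.
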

\begin{proof}

    A replica $R$ can request a view change by sending a 
    $\langle \texttt{ReqViewChange}, R, v, v' \rangle$
    message. The view change mechanism is triggered when replicas 
    receive $f+1$ ReqViewChange messages for the same view. Assume that 
    replicas collect $f+1$ messages for transitioning from $v$ to $v+1$. The 
    primary for the new view is $(v+1) \mod N$ by definition. Consider the 
    two cases:
    \begin{enumerate}
        \item \textit{the new view is stable}: correct replicas will receive the \\
\texttt{ReqViewChange} messages. Consequently, correct replicas that 
receive at least $f+1$ \texttt{ReqViewChange} messages will enter the new 
view $v'$ and send a \texttt{ViewChange} message to all replicas. The primary 
$p$, being stable, for view $v+1$ will send a valid \texttt{NewView} message 
in time. Thus, correct replica that receive the message will transition to 
new view $v' = v+1$.
        \item \textit{the new view is not stable}: We consider two cases:
        \begin{enumerate}
            \item \textit{the primary $p$ is faulty and does not send the 
\texttt{NewView} message in time, or $p$ is faulty and sends an invalid 
\texttt{NewView} message}, or $p$ is not faulty but the network delays $p$'s 
message indefinitely. In all these cases, the timer on other correct replicas 
that sent the \texttt{ViewChange} message will expire waiting for the NewView 
message. These replicas will trigger another view change to view $v+2$.
            \item \textit{the primary $p$ is faulty and sends the NewView 
message to only a quorum $Q'$ of $q_{vc}$ replicas but less than $q_{vc}$ 
replicas are correct, or $p$ is correct but there are communication delays.} 
The replicas in quorum $Q'$ may enter the new view and process requests in 
time. However, the correct replicas that does not receive the NewView message 
will timeout and request change to view $v+2$. However, there will be less than
$f+1$ replicas, so a successful view change trigger will not happen. If the 
faulty replicas deviate from the algorithm, other correct replicas will join 
to change the view. 
        \end{enumerate}
    \end{enumerate}
\end{proof}

\begin{theorem}
    A proposed block is eventually committed by replicas with correct commit rules.
\end{theorem}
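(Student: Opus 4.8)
The plan is to establish liveness by reducing the claim to the existence of a stable view with a correct primary occurring after the network stabilizes, and then invoking the two liveness lemmas already proven. First I would appeal to the partially synchronous timing model from Section~\ref{sec:bg:model}: there is an (unknown) global stabilization time after which every message between correct replicas is delivered within some bound $\Delta$. It therefore suffices to show that, after this time, \thesystem reaches a stable view---one in which the primary is correct, $N-f$ replicas behave correctly, and messages are exchanged in a timely manner---because Lemma~\ref{lm:duobft:l1} then guarantees that the proposed block is committed under each replica's commit rule.

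The next step is to argue that the protocol cannot remain stuck in a non-progressing view indefinitely. If a block fails to be committed, correct replicas eventually time out waiting for the primary and broadcast \texttt{ReqViewChange}; since at most $f$ replicas are faulty, the $N-f \ge f+1$ correct replicas supply more than enough requests to trigger the transition, and Lemma~\ref{lm:duobft:l2} ensures the view actually advances to $v+1$. Combined with the round-robin primary schedule $v \bmod N$, this guarantees that within any $f+1$ consecutive views at least one has a correct primary, so correct primaries recur infinitely often.

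To convert ``a correct primary recurs'' into ``a stable view occurs,'' I would rely on an increasing-timeout (exponential backoff) discipline for the view-change timers: each time a view fails, correct replicas enlarge their timeout, so after the stabilization time the timeout eventually exceeds $\Delta$. At that point a view $v^\star$ with a correct primary will no longer time out prematurely, and all $N-f$ correct replicas will process the primary's proposals within the timeout---exactly the stable-view hypothesis of Lemma~\ref{lm:duobft:l1}. Applying that lemma, the proposed block is committed: $f+1$ timely votes suffice for replicas using the Hybrid commit rule, and $N-f = 2f+1$ timely votes suffice for the BFT commit rule, so replicas under both correct commit rules commit.

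The main obstacle is precisely this last synchronization argument---ruling out the degenerate schedule in which premature timeouts keep deposing correct primaries forever. This is the familiar gap in every partially synchronous liveness proof, closed only by the growing-timeout mechanism; since the excerpt describes the timers only informally, the delicate part of a rigorous proof would be to state this backoff policy explicitly and to show that the timeout provably overtakes the post-stabilization delay bound before a correct primary is skipped.
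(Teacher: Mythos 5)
Your proof is correct and shares the paper's overall skeleton---reduce liveness to reaching a stable view, invoke Lemma~\ref{lm:duobft:l1} there, use Lemma~\ref{lm:duobft:l2} to advance past bad views, and use round-robin rotation plus the bound of $f$ faulty replicas to argue a correct primary recurs---but it goes a genuinely different (and stronger) route on the crucial final step. The paper's own proof simply asserts that ``after at most $f+1$ view changes, a stable view will be installed,'' and separately handles the corner case where fewer than $f+1$ replicas request a view change (the system then stays in view $v$ and keeps committing, which your contrapositive framing---``if a block fails to commit, all correct replicas eventually time out''---covers implicitly). What the paper never addresses is exactly what you isolate: why a view led by a correct primary is actually \emph{stable}, i.e., why premature timeouts cannot depose correct primaries forever. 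Your GST-plus-exponential-backoff argument closes that gap, at the cost of assuming a timer-growth discipline the paper never specifies; in effect you prove liveness for a slightly strengthened protocol, whereas the paper proves a weaker statement for the protocol as written. The paper's version buys brevity and explicit treatment of the partial-view-change corner case; your version buys rigor on the one step that every partially synchronous liveness proof must confront, and your closing observation that this is the delicate, under-specified part of the protocol is accurate---it is a real omission in the paper's argument, not merely a stylistic difference.
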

\begin{proof}
    When the view is stable, Lemma~\ref{lm:duobft:l1} shows that the proposed block is 
    committed by the replicas. When the view is not stable and the replica timers 
    expire properly, $f+1$ replicas will request a view change. By 
    Lemma~\ref{lm:duobft:l2}, a new view $v'$ will be installed.

    However, if less than $f+1$ replicas request the view change, then the 
    remaining replicas that do not request the view change will follow the 
    protocol properly. Thus, the system will stay in view $v$ and the replicas 
    will continue to commit blocks in the view. When proposals are not 
    committed in time or when more than $f$ replicas request a view change, 
    then all correct replicas will request a view change and it will be 
    processed as in Lemma~\ref{lm:duobft:l1}.

    Even after a view change, the new view $v'$ may not necessarily be stable. 
    If the new primary deviates from the algorithm or does not process messages 
    in time, this will cause correct replicas to request another view change 
    and move to the next view. Since there can only be at most $f$ faulty 
    replicas, after at most $f+1$ view changes, a stable view will be installed.
    Furthermore, if the faulty primary follows the algorithm enough such that 
    a view change cannot be triggered, by Lemma~\ref{lm:duobft:l1}, replicas 
    will continue to commit the blocks.
    
\end{proof}

\subsection{Optimizations}

\subsubsection{Reducing Message Complexity.}
In the description of \thesystem presented in the previous section, 
the replicas multicast their  votes to all other replicas incurring an 
$O(N^2)$ message complexity in the common case. 
This complexity can be reduced to $O(N)$ by modifying the replicas to send 
their votes only to the primary and enabling the primary to collect the 
votes and share the quorum certificate with other replicas. 
This technique enables the use of 
threshold signatures schemes~\cite{stathakopoulous2017threshold} to reduce 
the size of outgoing messages from the primary and to reduce the 
verification compute overheads when $N$ is large.
Many existing protocols use this technique
\cite{Gutea:2019:SBFT:DSN.2019.00063}.
An illustration of this optimization is shown in Figure~\ref{fig:duobft}. 
While the aggregation increases the number of communication steps of the 
protocol, as we will show in Section~\ref{sec:eval}, 
reducing the complexity helps reduce the latency of the protocol at large 
system sizes ($N \geq 49$).

\subsubsection{MultiChain-\thesystem}

Chain-based protocols including \thesystem do not support out-of-order 
processing and thus exhibit poor throughput compared to protocols that 
support out-of-order processing (e.g. PBFT). 
Since the protocol phases are pipelined in chain-based protocols, 
the votes for the previous block must be available before sending 
the next block. 
Therefore, the throughput of such protocols is dependent on the network message 
delays that prominently determine how quickly replicas can collect 
a quorum of votes.
Hence, these protocols perform poorly in wide-area deployments where latencies 
between regions are large~\cite{Gupta:2020:ResilientDB:3380750.3380757}.
On the other hand, protocols such as PBFT can send propose \emph{blocks} 
simultaneously and collect multiple phases of votes for each of those blocks 
and provide higher throughput and lower latency. 

Despite, chain-based protocols are efficient in terms of the number 
of messages exchanged per block because they pipeline their votes. 
For \thesystem, this means that collecting different kinds of votes is 
possible without increasing the number of message types and messages 
exchanged to commit per block.
To compensate for the lost throughput, we propose running multiple instances of 
\thesystem concurrently to facilitate collecting votes for multiple blocks 
at the same time.
Various techniques to run multiple instances of a BFT protocol and coordinate 
ordering among those instances have been proposed in the past
\cite{Eischer:2017:OMADA:EDCC.2017.15,Voron:2019:Dispel:arxiv:1912.10367,Gupta:2021:RCC:ICDE,Stathakopoulou:2019:MirBFT:arxiv:1906.05552,Behl:2017:Hybster:Eurosys:3064176.3064213}. 
We adopt a recent multi-primary paradigm RCC~\cite{Gupta:2021:RCC:ICDE} and 
modify it slightly to run multiple instances with the same primary.
Our choice of RCC was due to the fact that the approach does not require 
changing the underlying protocol unlike COP~\cite{Eischer:2017:OMADA:EDCC.2017.15}. 
In RCC, each replica is primary for an instance of the Byzantine Agreement 
protocol and commits client commands in rounds, where in each round one 
command per replica is committed. 
Once all replicas have committed commands in a given round, they are executed 
in a pre-determined order.
Since our intent is to improve \thesystem's performance with a single 
primary, we simply assign the same replica to be the primary of multiple 
instances. 
We call this variant of \thesystem as MultiChain\thesystem or 
simply \themcsystem. 

The performance of the \themcsystem protocol now depends on the number 
of concurrent instances.
In our experiments, we manually fixed the number of instances depending on 
the system size. 
Typically, the number of instances was between 4 for large systems ($N=97$)
and 40 for smaller systems ($N=25$).
However, note that prior works~\cite{Voron:2019:Dispel:arxiv:1912.10367} have 
investigated the idea of automatically tuning the number of concurrent 
instances based on the available network and compute resources.
How those ideas integrate with RCC is beyond the scope of this paper.


\section{Discussion}\label{sec:discussion}

\textbf{Implications of Trusted Environment Compromises.}
As mentioned in the introduction, trusted execution environments are 
increasingly being scrutinized for security vulnerabilities.
In the hybrid fault model, the compromise of the trusted component is enough 
to break the safety of the protocol. 
However, \thesystem holds safety in such cases via the BFT commit rule. 
If the trusted component is compromised, per our assumption, this can only 
affect at most $f$ replicas. 
Thus, the remaining $2f+1$ replicas will follow the algorithm correctly.
While the hybrid quorum certificates can become invalid, recall that 
replicas also collect BFT quorum certificates in tandem.
Thus, safety is still preserved for the  sequence of blocks that have collected 
the BFT quorum certificates.

\textbf{Comparison to FlexibleBFT.}
We now highlight the differences of our protocol from 
Flexible BFT~\cite{Malkhi:2019:FlexibleBFT:CCS:3319535.3354225}, 
a recent protocol that provides diverse learner assumptions.
The first important distinction is that Flexible BFT provides the a-b-c fault 
model in addition to the BFT model. 
The replicas under the a-b-c model are allowed to attack the safety of the 
system, but when they aren't able to attach safety, they will ensure liveness. 
However, the implication of using this fault model is that Flexible BFT 
quorums are much larger than our flexible hybrid quorums.
In Flexible BFT, the commit quorums used by the client $q_c$ should be at 
least as large as the view change quorum $q_r$ used by the replicas, 
i.e $q_c \geq q_r$. 
In contrast, \thesystem uses hybrid commit quorums that are 
smaller than the view change quorums, and the BFT commit quorums are as large 
as the view change quorums. 
That is, in \thesystem, $q_c \leq q_{vc}$.

Furthermore, Flexible BFT uses the synchrony timing model as a means to 
provide commits using simple majority quorums, which are smaller than Byzantine 
quorums. The protocol also tolerate $< 1/2$ failures. 
On the other hand, \thesystem tolerates only $1/3$ failures, but under 
the hybrid model, its commit quorum sizes are really efficient, only a little 
over $1/3$ replicas. 
Thus, Flexible BFT uses the timing model to reduce 
quorum sizes, while \thesystem uses the trusted component to achieve a very 
similar purpose.  
Furthermore, partially synchrony model enables ``network-speed'' replicas 
those that do not need lock-step executions unlike in the synchrony model. 
Thus, assumptions such as globally synchronized clocks are not 
required in our case.


\section{Evaluation}\label{sec:eval}

\begin{figure*}[!h]
	\centering
	\begin{subfigure}[b]{\textwidth}
		\centering
		\includegraphics[width=\textwidth]{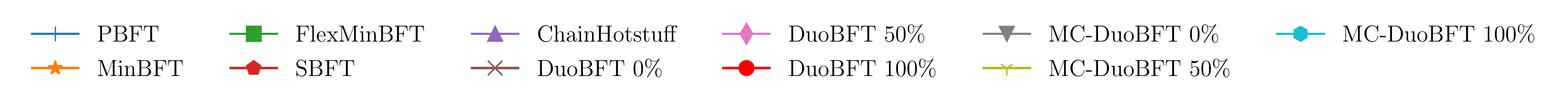}
	\end{subfigure}
	\begin{minipage}{.49\textwidth}
			\begin{subfigure}[b]{0.49\columnwidth}
				\centering
				\includegraphics[width=\textwidth]{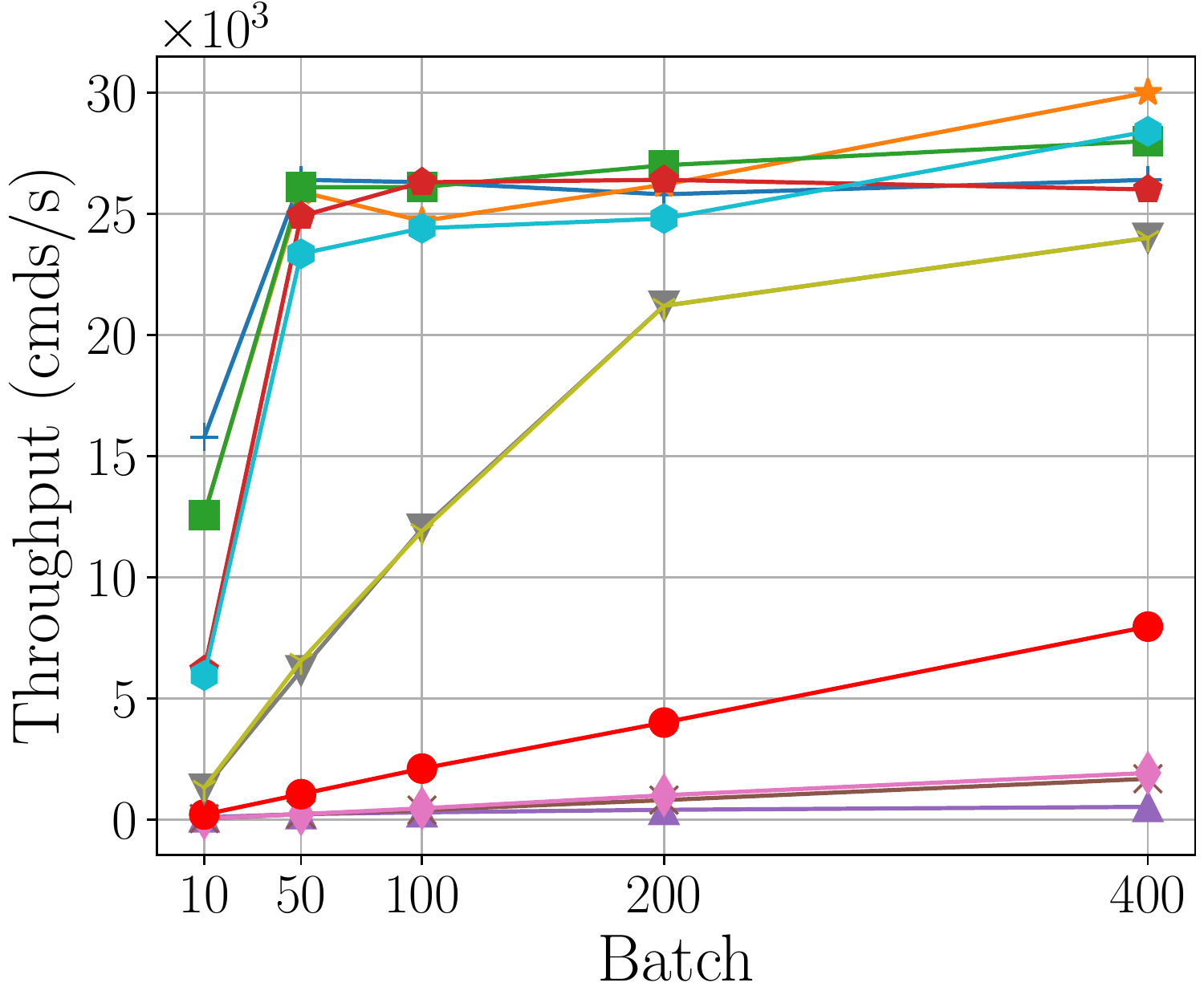}
			\end{subfigure}
			\begin{subfigure}[b]{0.49\columnwidth}
				\centering
				\includegraphics[width=\textwidth]{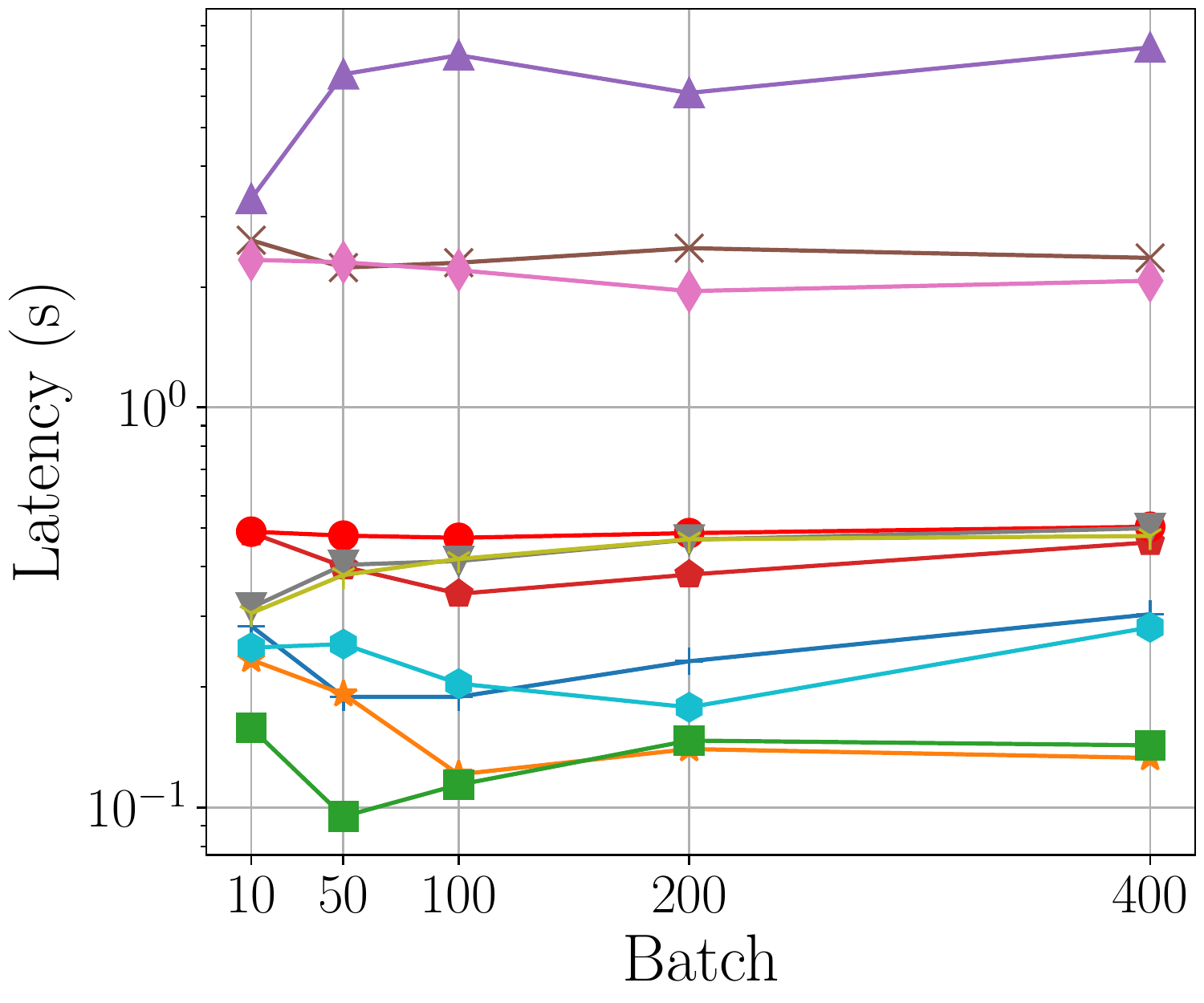}
			\end{subfigure}
			\caption{Performance versus Batch Size (N=49).}
			\label{fig:batching-plot}
	\end{minipage}
	\begin{minipage}{.49\textwidth}
			\centering
			\begin{subfigure}[b]{0.49\columnwidth}
				\centering
				\includegraphics[width=\textwidth]{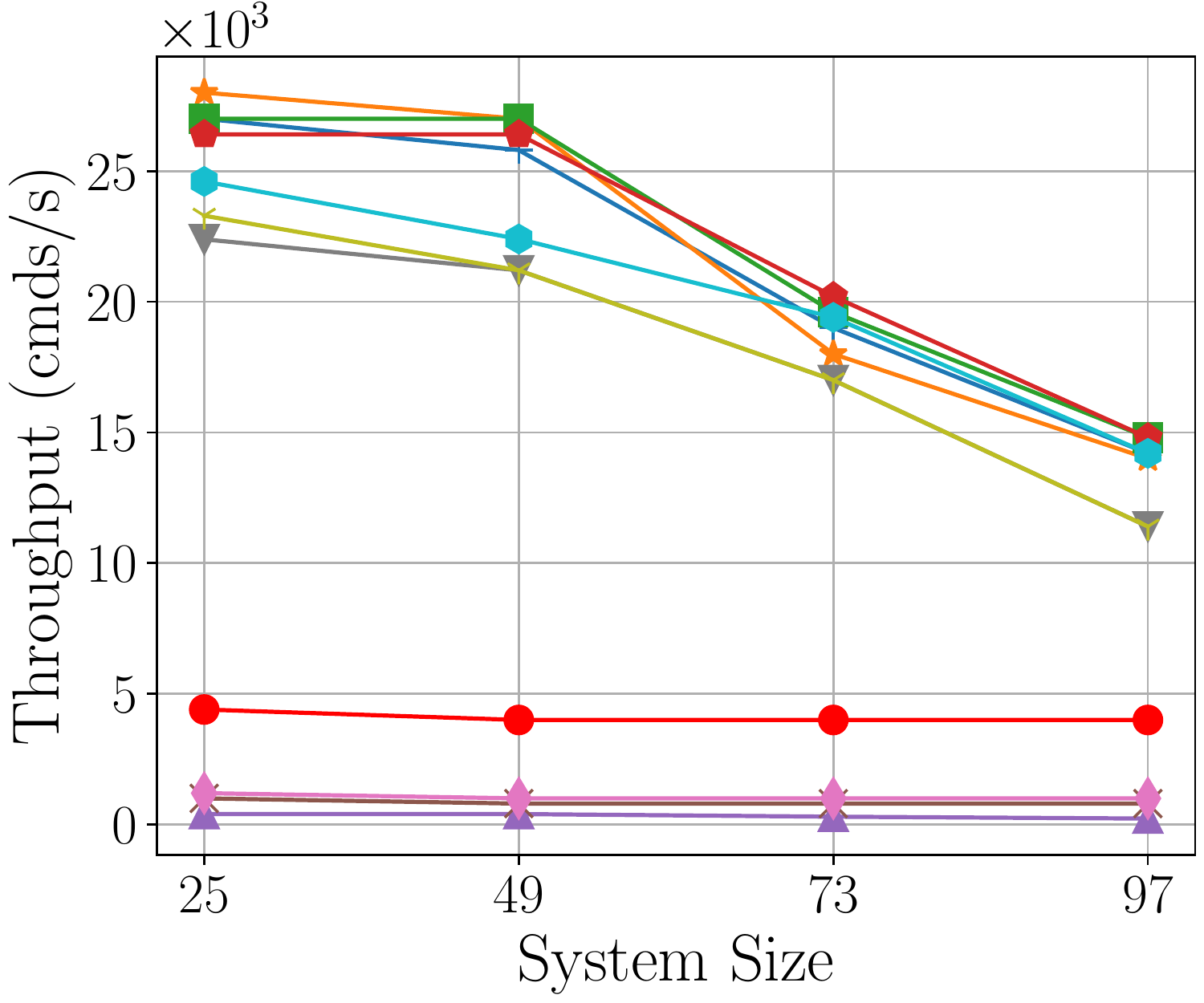}
			\end{subfigure}
			\begin{subfigure}[b]{0.49\columnwidth}
				\centering
				\includegraphics[width=\textwidth]{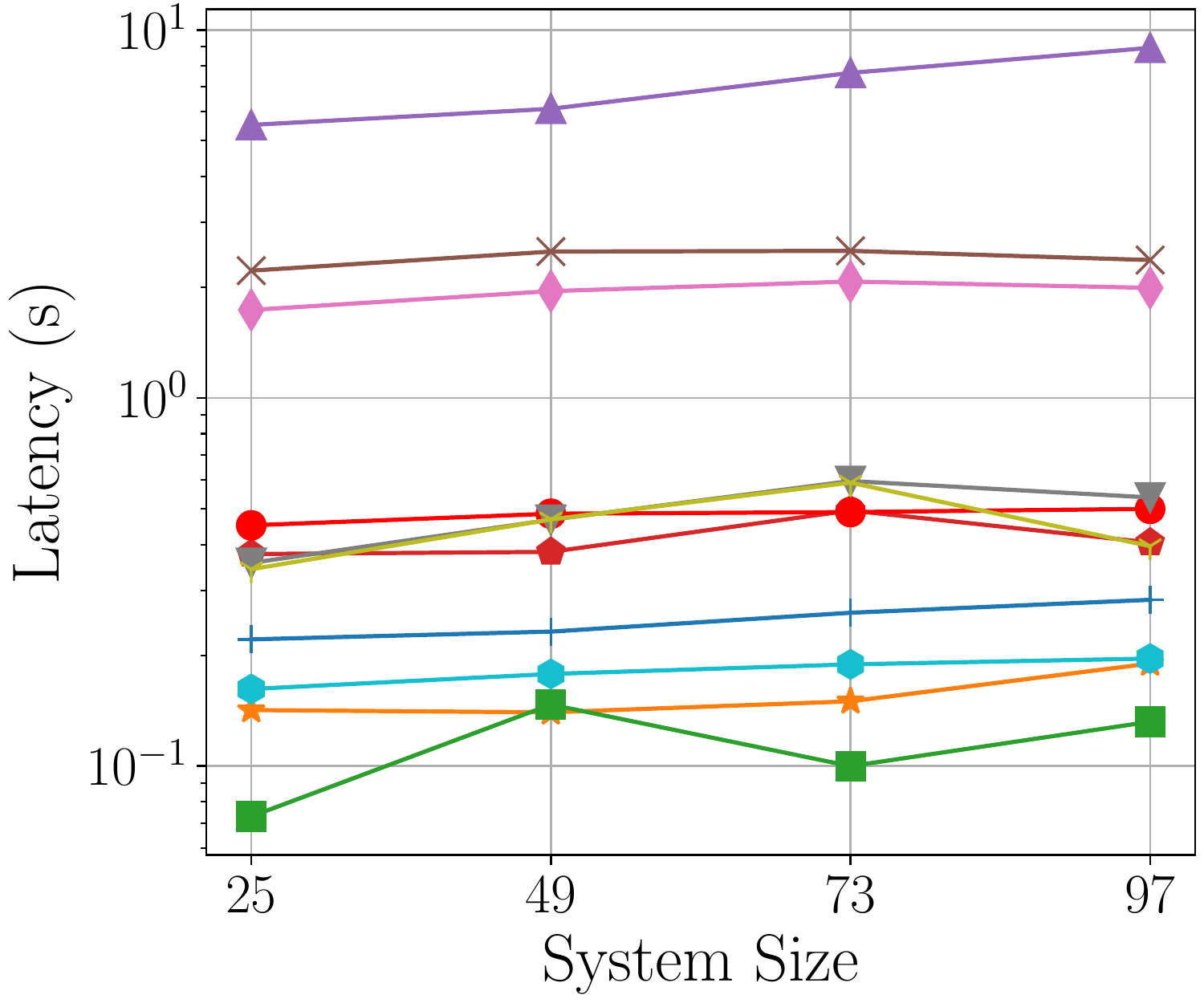}
			\end{subfigure}
			\caption{Performance versus System Size.}
			\label{fig:scalability-plot}
	\end{minipage}
\end{figure*}

In this section, we evaluate Flexible MinBFT, \thesystem, and \themcsystem 
alongside state-of-the-art protocols to answer the following questions:
\begin{enumerate}[label=(\arabic*)]
	\item What is the impact of batching on protocol performance?
    \item How does scale affect protocol performance?
    \item How well do the protocols cope with replica failures?
    \item Does \thesystem integrate with the recent multi-primary paradigms?
\end{enumerate}

Throughout our evaluation of \thesystem, we also measured the overhead of 
committing commands under two different fault models within the same protocol.

\subsection{Protocols under test}

We evaluate the following single-primary protocols: 
PBFT~\cite{Castro:2002:PBFT:TOCS:571637.571640}, 
SBFT~\cite{Gutea:2019:SBFT:DSN.2019.00063}, and 
MinBFT~\cite{Veronese:2013:Efficient:TC.2011.221}.
We use the variant PBFT~\cite{Castro:2002:PBFT:TOCS:571637.571640}
that uses MACs that are computationally cheaper than signatures.
SBFT~\cite{Gutea:2019:SBFT:DSN.2019.00063} provides fast-path commitment 
using $3f+c+1$ replicas out of $3f+2c+1$ replicas and linear communication. 
Chained Hotstuff~\cite{Yin:2019:HotStuff:3293611.3331591} is a rotating-primary 
protocol that changes its view for each proposal, 
and pipelines protocol messages as well as commit decisions.

Flexible MinBFT is evaluated with $f$ failures among $N=3f+1$ replicas, thus 
the normal commit quorums are $f+1$ while the view-change quorums are $2f+1$.
We evaluate both the single-chain and multi-chain variants of \thesystem, 
namely \thesystem, and \themcsystem.

We also apply \thesystem in the context of multi-primary paradigm leveraging 
the RCC~\cite{Gupta:2021:RCC:ICDE} in order to allow each replica to act as 
primary. 
With this approach, replicas can use all their resources effectively and 
provide better throughput over single-primary solutions.
We evaluate the RCC variant of \themcsystem with 
RCC-PBFT and MirBFT~\cite{Stathakopoulou:2019:MirBFT:arxiv:1906.05552}.

We implemented all the protocols within a common framework written in Go.
The framework uses gRPC~\cite{grpc} for communication and 
protobuf~\cite{protobuf} for message serialization. 
The ECDSA~\cite{Johnson:2001:ECD:2701775.2701951} algorithms in Go's 
crypto package are used for 
authenticating the messages exchanged by the clients and the replicas.
The trusted component, namely USIG~\cite{Veronese:2013:Efficient:TC.2011.221},
was implemented in C using the Intel SGX SDK~\cite{Costan:2016:SGX:TR}. 
We implemented two variants of USIG. 
For MinBFT and Flexible MinBFT, the signatures were computed using the ECDSA 
algorithm.
For \thesystem, the signatures were computed using Ed25519 signature scheme 
~\cite{Bernstein:2012:edd25519:JC:Springer} 
that supports batch verification to facilitate linear communication pattern 
(see Figure~\ref{fig:duobft:linear}).

Using our own implementation ensures a consistent evaluation of 
all protocols. 
Moreover, the source code for RCC and MinBFT were not publicly available at the 
time of evaluation. 
The publicly available Hotstuff implementation only 
proposed command hashes~\cite{Stathakopoulou:2019:MirBFT:arxiv:1906.05552}, 
whereas our implementations propose the actual payload.
The evaluation uses a key-value store benchmark because 
it serves as a good abstraction for building higher level systems
~\cite{Gutea:2019:SBFT:DSN.2019.00063}.

\subsection{Experimental setup}

We deployed the protocols using the SGX-enabled DC8v2 
virtual machines (8 vCPUs and 32GB of memory) available in the 
Microsoft Azure cloud platform~\cite{azure}. 
The virtual machines were evenly spread across ten different 
geographical regions.
The regions were East US, West US, South Central US, Canada Central, 
Canada East, UK South, North Europe, West Europe and South East Asia.
The round-trip latencies were under 30ms between regions in North America, 
under 150ms between regions in Europe and North America, and around 240ms 
between Canada and South East Asia.
We obtained multiple VMs per region and organized them into a 
Kubernetes cluster.
Each replica pod was deployed in its own VM, while multiple clients pods 
were deployed per VM.
The primary role was assigned to a replica in the East US region.

The clients are spread equally across all regions, and 
they send requests to the replicas in a closed-loop, i.e. clients wait 
for the response before sending the next request. 
We measured the throughput and latency for each of the protocols. 
The payload size is set at 512 bytes. 
Unless otherwise stated, the batch size defaults to 200 commands per batch. 
We evaluated \thesystem and \themcsystem by varying the ratio 
of Hybrid and BFT commit responses received by the client.
The suffix in the legend indicates the percentage of hybrid commit responses.

\begin{figure}
	\centering
	\begin{subfigure}[b]{0.49\columnwidth}
		\centering
		\includegraphics[width=\textwidth]{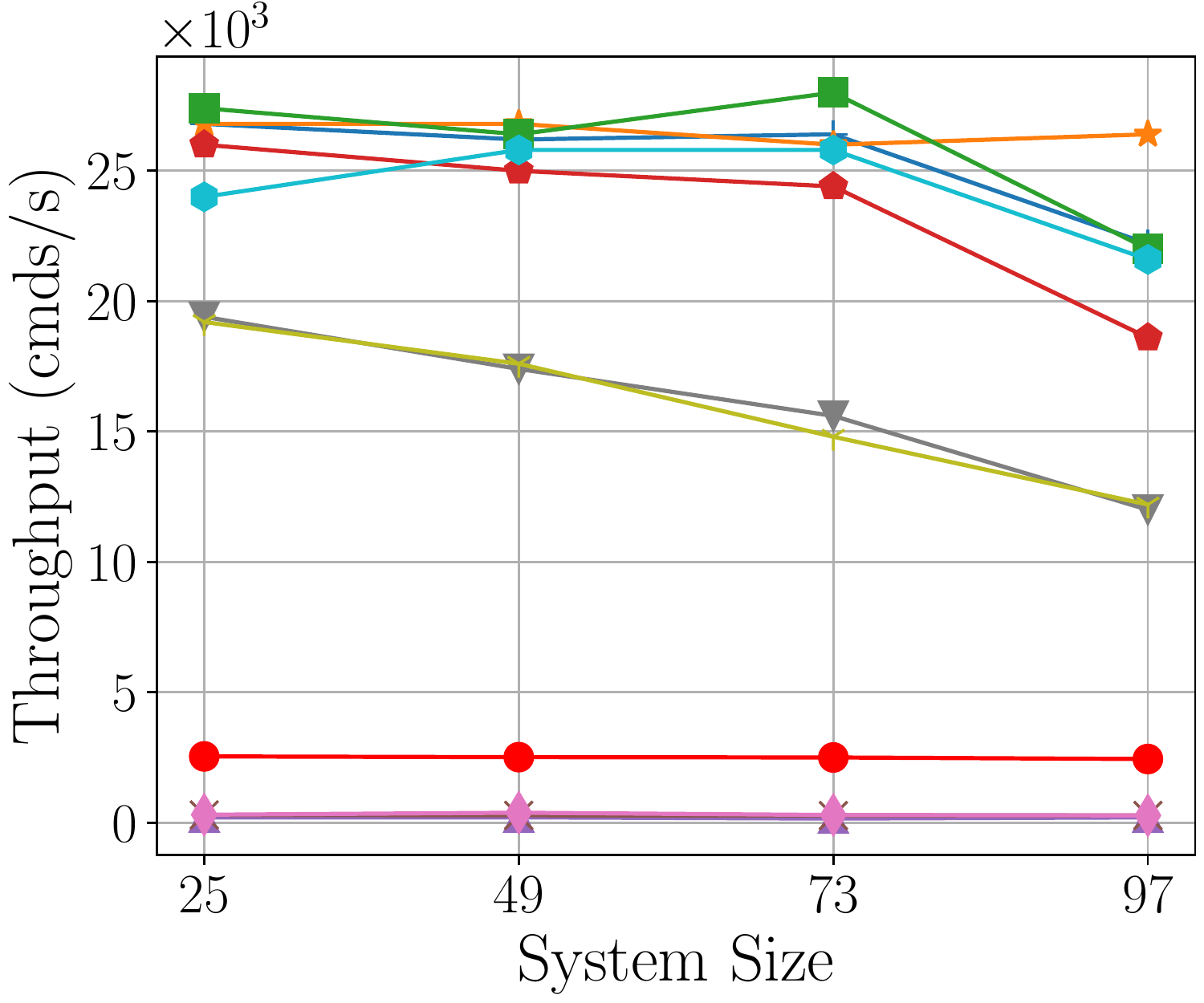}
	\end{subfigure}
	\begin{subfigure}[b]{0.49\columnwidth}
		\centering
		\includegraphics[width=\textwidth]{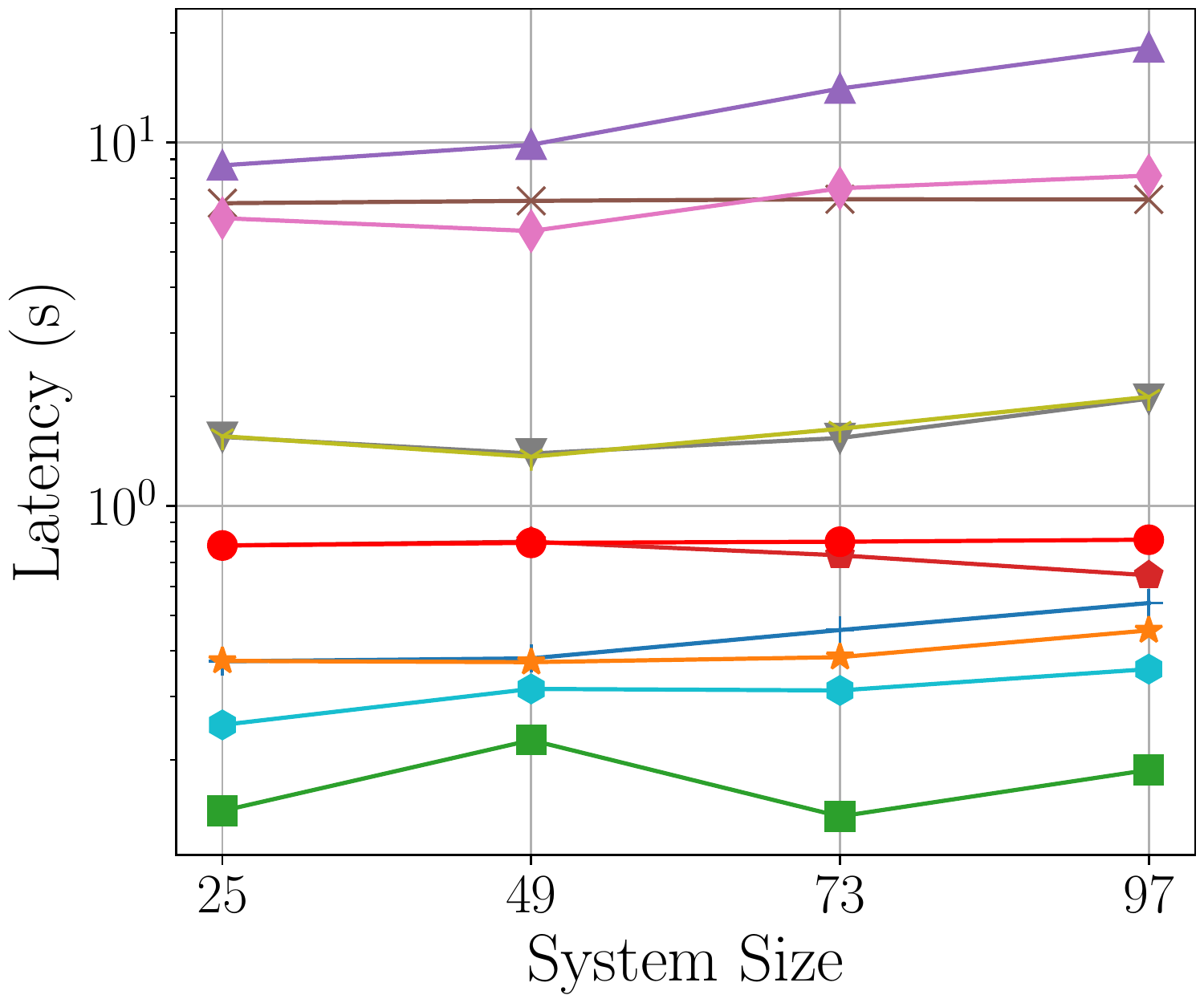}
	\end{subfigure}
	\caption{Performance under $f$ failures.}
	\label{fig:failures-scalability}
\end{figure}

\subsection{Batching Experiment}

We measured the impact of command batch size on the performance of the 
protocols.
Batching amortizes the cost of consensus by having proposing multiple commands 
together in a single block, but it also increases the network consumption on 
the primary since it should multicast the batch to all replicas.
For this experiment, we deployed 49 replicas, 
and varied the number of commands per batch between 10 and 400 and 
measure the throughput and latency for each of the protocols. 
All protocols tolerate 16 failures except MinBFT that tolerates 24 failures.
The results are in Figure~\ref{fig:batching-plot}.

Due to the lack of out-of-order processing in chain-based protocols including 
\thesystem and Chain Hotstuff, replicas cannot pipeline 
multiple proposals at the same time.
Thus, their throughput tend to be very low compared to other protocols. 
On the other hand, \themcsystem protocols leverages multiple instances to 
boost throughput and thus is able to compete better with other 
single-primary protocols. 
Since all single-primary protocols are able to process multiple  
command batches at the same time, they perform as fast as their primary 
replicas are able to disseminate batches.

Flexible MinBFT provides the lowest latency among all protocols, due to its 
two phase commit protocol with only $f+1$ quorum.
\themcsystem due to its linear communication exhibits slightly 
higher latency but within 200-milliseconds. 
\themcsystem 100\% provides 25\% and 50\% lower latency than PBFT 
and SBFT without sacrificing throughput.
Due to additional communication steps as a result of linear message patterns,
\themcsystem with 50\% and 0\% hybrid commit responses incur a latency penalty 
compared to PBFT. 
Furthermore, the overhead of collecting multiple quorums per command batch 
affects the throughput of \themcsystem 50\% and 0\%. 
Thus, these two protocols reach their peak throughput only at batch size 400, 
and incur a 10\% throughput hit compared to PBFT.

\subsection{Scalability Experiment}
\label{sec:eval:scalability}

Next, we measured the performance of the protocols as the system size is 
increased. 
In this experiment, we measure the performance of the protocols at four 
system sizes: 25, 49, 73, and 97, tolerating 8, 16, 24, and 32 failures. 
MinBFT tolerates 12, 24, 36, and 48 hybrid failures, respectively. 
The batch size is set to 200 because most of the protocols reached their peak 
throughput at this batch size in the previous experiment.
The result is in Figure~\ref{fig:scalability-plot}.

Similar to the previous experiment, 
the chain-based protocols yield low performance due to lack of out-of-order 
processing. 
\themcsystem makes up for the performance impact by 
using multiple chains, which allows it to provide similar throughput as 
other protocols. 
\themcsystem 100\% is able to provide sub-200-milliseconds latency 
even at 97 replicas. 
While \themcsystem's throughput is at least 10\% lower than other protocols 
at 25 replicas, it scales better as system size increases and performs 
at par with other protocol starting at 73 replicas.
Furthermore, its latency is at least 30\% lower than PBFT at all system sizes.
This shows that with \themcsystem, it is possible to provide 
low-latency commits under the hybrid model at scale.

On the other hand, \themcsystem incurs both throughput and latency penalty when 
providing BFT commits. The throughput overhead was around 5\% while the 
latencies were 30\% higher due to an additional communication round-trip.

\subsection{Failure Experiment}

We also measured the impact of minority $f$ failures on performance at scale. 
To do so, we repeated the previous experiment but in the presence of $f$ 
failures evenly spread across all regions.
The result is in Figure~\ref{fig:failures-scalability}.

Since all protocols must visit additional regions to collect quorums, the 
latencies of all protocols are higher. 
The throughput gap between \themcsystem 100\% and other protocols 
narrows at smaller system sizes, while the gap is larger for 50\% and 0\% cases. 
This is because for BFT commits replicas need to gather votes from all the 
regions, which slows down \themcsystem due to their additional communication 
steps. Since the hybrid commits only needs half of replicas, the throughput 
of \themcsystem 100\% is same as other protocols at all system sizes 
despite its overheads.
Furthermore, Flexible MinBFT provides the lowest latency due to its two-step 
protocol without any overheads unlike \themcsystem. 

\subsection{Multi-Primary Experiment}

\begin{figure}
	\centering
	\begin{subfigure}[b]{\columnwidth}
		\centering
		\includegraphics[width=\textwidth]{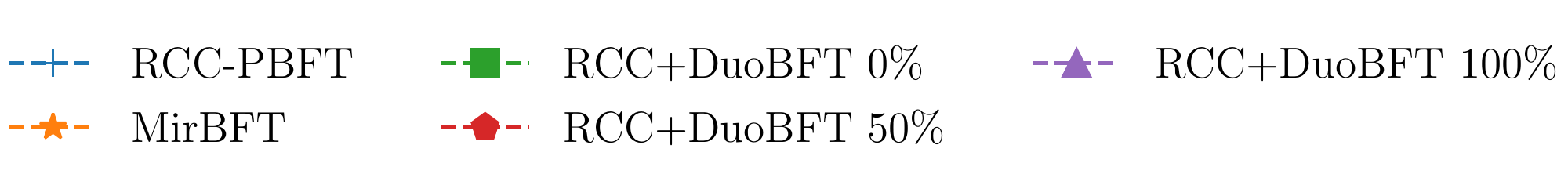}
	\end{subfigure}
	\begin{subfigure}[b]{0.49\columnwidth}
		\centering
		\includegraphics[width=\textwidth]{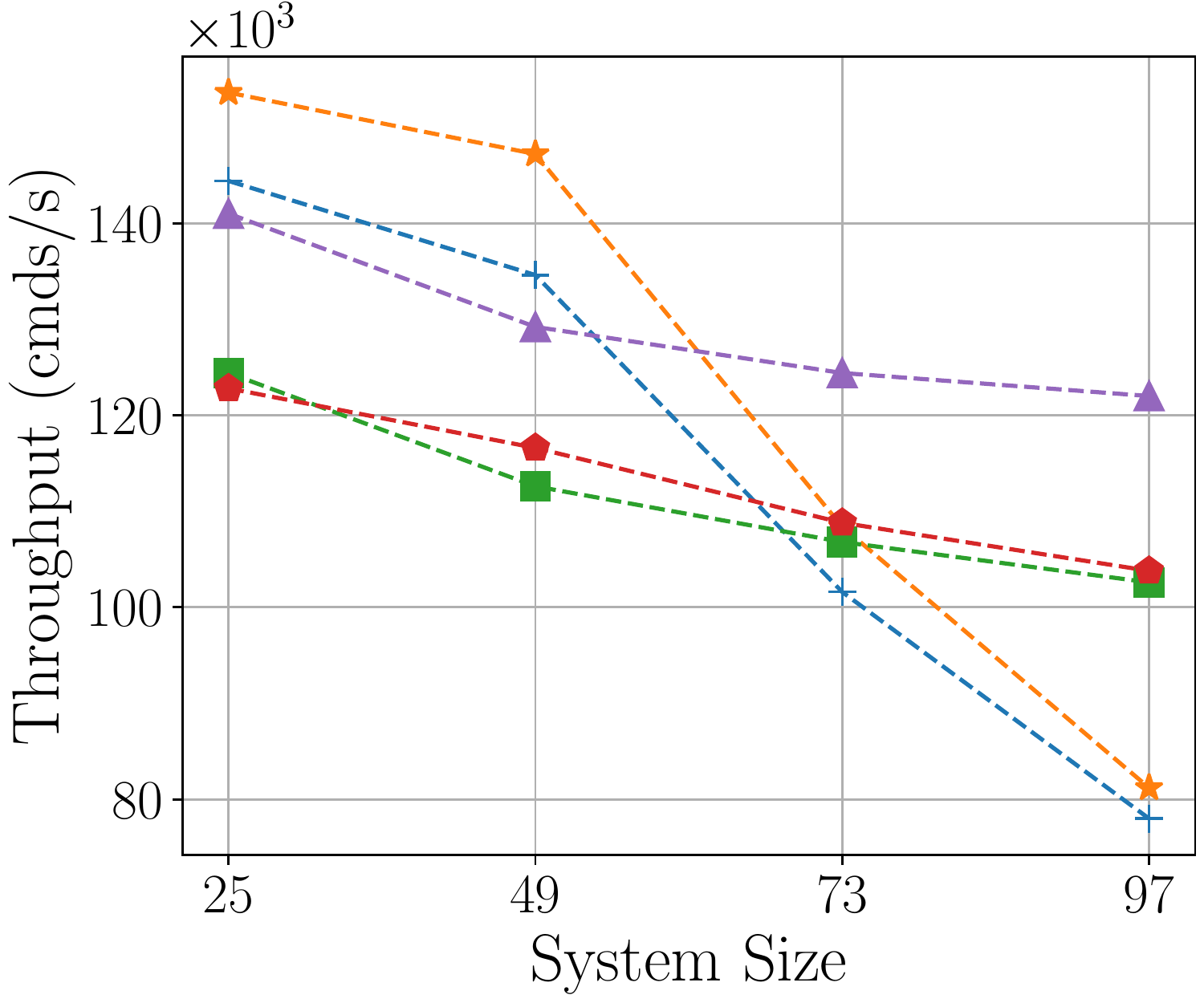}
	\end{subfigure}
	\begin{subfigure}[b]{0.49\columnwidth}
		\centering
		\includegraphics[width=\textwidth]{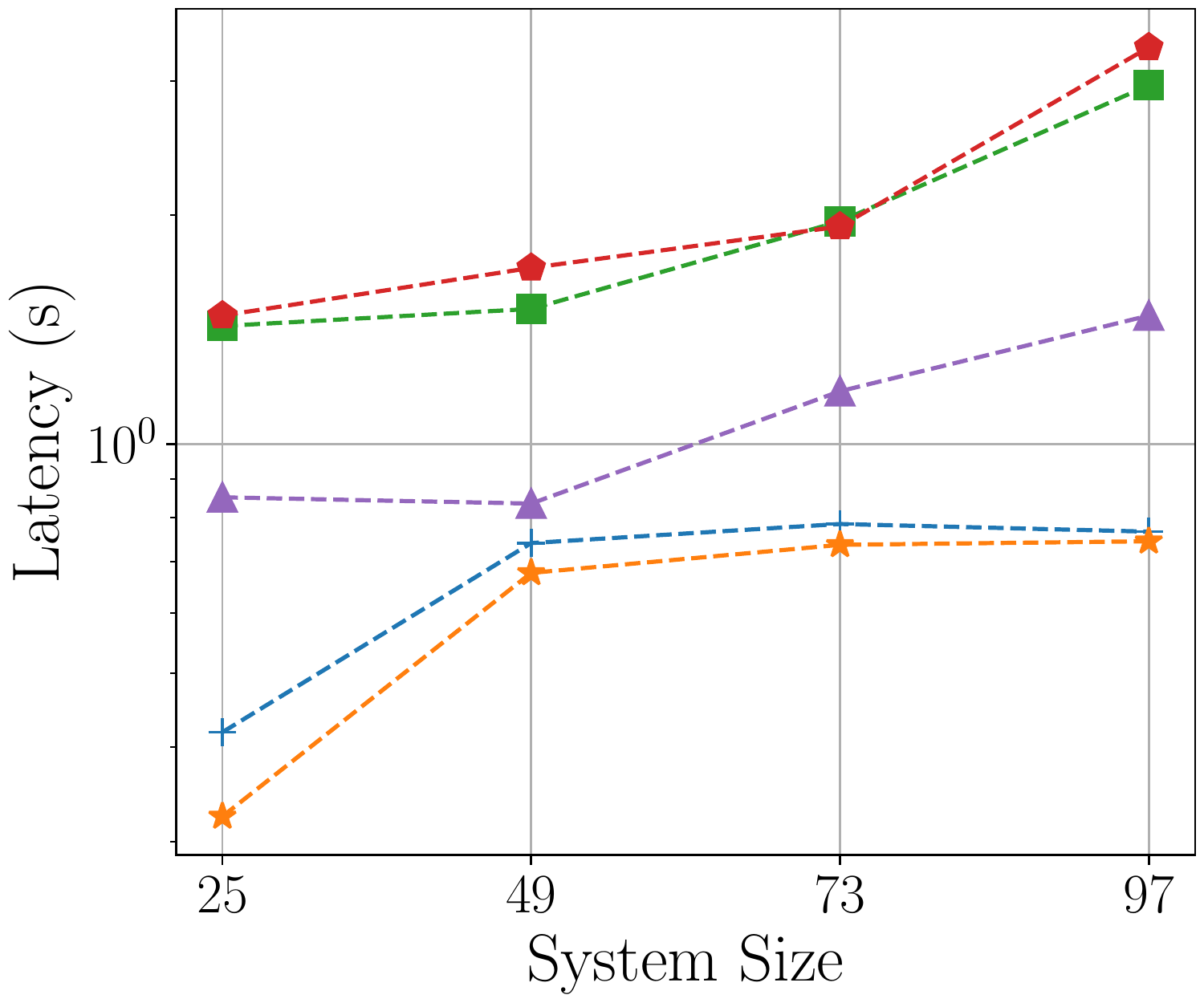}
	\end{subfigure}
	\caption{Multi-primary Performance versus System Size.}
	\label{fig:rcc-scalability}
\end{figure}

In the previous experiments, we showed that the \themcsystem protocol, at 
least with hybrid commits,
is competitive with other single-primary protocols in terms of throughput
and provide better latency than single-primary protocols.
We ran a separate experiment to observe whether \themcsystem can 
improve throughput in the context of multi-primary paradigms at scale.
The experiment parameters are same as described in 
Section~\ref{sec:eval:scalability}.
We evaluated the RCC paradigm with PBFT and \themcsystem, and MirBFT.
The results are in Figure~\ref{fig:rcc-scalability}.

We observed that \themcsystem 100\% takes advantage of the 
linear communication along with one round-trip commit and scales better 
than any other protocol. 
At 97-replicas, \themcsystem 100\% provides 1.5$\times$ more throughput than 
RCC-PBFT and MirBFT.
\themcsystem with 50\% and 0\% Hybrid commits perform better by up to 30\% 
than RCC-PBFT at higher scale by taking advantage of the linear communication. 
By distributing the responsibility of collecting signatures to different 
replicas, the RCC paradigm and the linear communication pattern 
distributes any compute and network bottlenecks. 
Thus, protocols scale better than protocols with quadratic message complexity.
We also observed higher latencies for \thesystem protocols 
since the RCC paradigm optimizes for throughput. 
Since, all instances must complete a round before the commands belonging to the 
round can be executed, the commands are usually executed at the speed in which 
the slowest/farthest replica commits its commands.
We also noticed that \themcsystem's signature computation overheads increased 
linearly with the number of replicas, and it contributed to elevated latencies 
at higher system sizes. 
We believe replacing the Ed25519 signatures with threshold signatures such as 
BLS~\cite{stathakopoulous2017threshold} signatures can help lower the latency.


\section{Related Work}
\label{sec:relwork}

Since Lamport et al. formulated the Byzantine Generals problem
\cite{Lamport:1982:ByzGenerals:TOPLS:357172.357176}, numerous solutions to 
solve the agreement problem in the face of Byzantine failures have been 
proposed. 
These solutions have varied widely in terms of their fault assumptions 
and the timing models.
A review of these solutions is beyond the scope of the paper. 
We defer the interested readers to books on distributed computing
\cite{Lynch:1996:DistAlgBook,Cachin:2011:RSDP}.

The literature is rich in protocols that adopt the hybrid fault model
\cite{Liu:2019:FastBFT:TC.2018.2860009}. 
While early protocols depended on an attested append-only log abstraction 
provided by the trusted component~\cite{Chun:2007:A2M:SOSP:1294261.1294280}, the 
counter-based abstraction~\cite{Levin:2009:TrInc:NSDI:1558977.1558978} 
became popular due to its simplicity, and have been adopted by numerous 
protocols~\cite{Liu:2019:FastBFT:TC.2018.2860009,Behl:2017:Hybster:Eurosys:3064176.3064213,Veronese:2010:EEB:1909626.1909800,Veronese:2013:Efficient:TC.2011.221,Kapitza:2012:CheapBFT:Eurosys:2168836.2168866}.
We used MinBFT to perform our analysis and construction, 
because of its presentation as the hybrid counterpart to PBFT, 
and its use of a simple counter-based trusted attestation mechanism.

\textbf{Speculation.} Some partially-synchronous
BFT protocols
\cite{Gutea:2019:SBFT:DSN.2019.00063,Kotla:2007:Zyzzyva:SOSP:1294261.1294267, Arun:2019:ezBFT:ICDCS.2019.00063,Martin:2006:FaB:TDSC.2006.35} 
use speculation to make commit decisions using fewer communication steps. 
These protocol adopt fast BFT quorums that are at least $50\%$ larger than 
normal BFT quorums and reduce communication delays under favorable 
conditions.
At times, a fast quorum may not be enough or unattainable, in which case, the 
replicas fall back to a slow protocol with normal quorums and additional 
communication steps. 
Thus, such protocol collect larger quorum than normal protocols in the best 
case, and spend more communication steps than normal protocols in the worst 
case.
\thesystem's provides a cheaper and stronger alternative to these solutions. 
The hybrid commit rule can be used to make commit decisions on a block 
by collecting votes from quorums of $1/3$ replicas. 
For the BFT commit rule, a replica only needs to collect additional 
$1/3$ votes for the block and $2/3$ votes for the next block. In total, 
a replica does not need to collect more votes or spend more communication steps 
than that in normal BFT protocols.

Similarly, Thunderella~\cite{Shi:2018:Thunderella} and 
Sync Hotstuff~\cite{Abraham:2019:SyncHotStuff:270} 
commit optimistically under the partial synchrony model using quorums of size 
$\geq 3/4$, and fallback to a synchronous slow commit rule.
\thesystem's guarantees are completely based in the partial synchrony model.

SACZyzzyva~\cite{Gunn:2019:SACZyzzyva:SRDS47363.2019.00024} uses a trusted 
counter to increase the resilience of Zyzzyva to tolerate $f$ slow nodes. 
The protocols requires only $f+1$ replicas to host the trusted 
counters in a system with $3f+1$ replicas.
While \thesystem requires all replicas to host the trusted component, 
it allows any replica to be the primary.  
Moreover, with the Flexible Hybrid Quorums approach, 
we also provide the flexibility to choose 
$f$ independently of $N$, the number of replicas.

\textbf{Tentative Execution and Optimistic Agreement.} 
Some protocols like PBFT and MinBFT use tentative execution
\cite{Castro:2001:PBFT:Thesis} to execute the 
proposed operations before the final commit step. 
This can improve the overall performance under favorable conditions.
Furthermore, the optimistic agreement~\cite{Distler:2016:ReBFT:TC.2015.2495213}
technique uses only a subset of replicas to run agreement, while the remaining 
replicas update their state passively. 
Such techniques are orthogonal and applicable to \thesystem as well.

Some benign protocols~\cite{Zhang:2018:Tapir:TOCS:3269981} 
enable clients to obtain a speculative response that sometimes deviates from 
the final response due to replica crashes or network faults.
Similarly, \thesystem's hybrid commits can deviate from BFT commits during 
trusted component compromises.

\textbf{Hierarchical Protocols.}
Steward~\cite{Amir:2010:Steward:TDSC.2008.53,Amit:2006:Steward:DSN.2006.63} 
and GeoBFT~\cite{Gupta:2020:ResilientDB:3380750.3380757} 
follow the hierarchical fault model by using a combination of crash and 
byzantine fault tolerant mechanisms.
The replicas are divided into groups. 
Replicas with a group run a BFT protocol while inter-group agreement is 
achieved using a crash-fault tolerant protocol. 
However, the protocol exposes a single combined fault model to the learners: 
the protocols can tolerate $fz$ failures in $z$ groups of which at most $f$ 
can happen in a single group.
Such techniques are aimed towards WAN deployments.

\textbf{Flexible Quorums.} 
In Flexible Paxos~\cite{Howard:FlexiblePaxos:LIPIcs:2017:7094}, Howard et al. 
introduced the notion of flexible quorums in the crash fault model.  
Malkhi et al. then developed the flexible quorums approach in the Byzantine 
fault model.
\cite{Malkhi:2019:FlexibleBFT:CCS:3319535.3354225}.
The Flexbile Hybrid Quorums presented in this paper can be seen as the 
hybrid variant of the flexible quorum technique. 
\thesystem adds supports for the BFT fault model over the hybrid fault model 
and exposes the choice to the learners.

\textbf{Adversaries.} Different kinds of adversaries have been explored in 
prior works.
Both the BAR and a-b-c fault models
\cite{Malkhi:2019:FlexibleBFT:CCS:3319535.3354225,Aiyer:2005:BAR:1095809.1095816} 
consider an adversary that does not collude. 
With the hybrid fault model, we consider an adversary that does not break the 
protections around the trusted component, but they can otherwise collude with 
other Byzantine replicas.

\textbf{Diverse learners.} Bitcoin~\cite{Nakamoto:2008:Bitcoin} uses a 
probabilistic commit rule that depends on the depth of the confirmation. 
Typically, a block depth of six implies a commit with a very high probability, 
although a block depth of one is enough to commit for some learners.
The Cross fault-tolerant (XFT)~\cite{Liu:2016:XFT:OSDI:3026877.3026915} model 
offers two kinds of learners: learners that follow the crash fault model 
under the asynchronous timing model, or learners that from the Byzantine fault
model under the synchronous timing model.

\section{Conclusion and Future Work}\label{sec:conclusion}

In this paper, we present \thesystem, a BFT protocol that provides 
commits under two fault models -- BFT and Hybrid --
under the partial synchrony timing model.
The clients can wait for responses from either or both the hybrid and BFT 
commit.
\thesystem uses the Flexible Hybrid Quorum technique to provide cheap 
Hybrid commits with $f+1$ replicas and has a unified view-change mechanism 
for both fault models.
Our experimental evaluation show that \themcsystem, the multi-chain variant of 
\thesystem, is able to provide up to 30\% lower latency than state-of-the-art 
protocols with comparable throughput.
Furthermore, \themcsystem is compatible with recent multi-primary paradigms 
and provides better scalable throughput than existing protocols.
These characteristics make \thesystem a better fit for applications that almost 
always only requires hybrid commits with a small percentage of BFT commits.

\bibliographystyle{ACM-Reference-Format}
\bibliography{bft,cft,tools,mypapers}


\begin{thebibliography}{53}


\ifx \showCODEN    \undefined \def \showCODEN     #1{\unskip}     \fi
\ifx \showDOI      \undefined \def \showDOI       #1{#1}\fi
\ifx \showISBNx    \undefined \def \showISBNx     #1{\unskip}     \fi
\ifx \showISBNxiii \undefined \def \showISBNxiii  #1{\unskip}     \fi
\ifx \showISSN     \undefined \def \showISSN      #1{\unskip}     \fi
\ifx \showLCCN     \undefined \def \showLCCN      #1{\unskip}     \fi
\ifx \shownote     \undefined \def \shownote      #1{#1}          \fi
\ifx \showarticletitle \undefined \def \showarticletitle #1{#1}   \fi
\ifx \showURL      \undefined \def \showURL       {\relax}        \fi
\providecommand\bibfield[2]{#2}
\providecommand\bibinfo[2]{#2}
\providecommand\natexlab[1]{#1}
\providecommand\showeprint[2][]{arXiv:#2}

\bibitem[\protect\citeauthoryear{??}{azu}{[n.d.]}]%
        {azure}
 \bibinfo{year}{[n.d.]}\natexlab{}.
\newblock \bibinfo{title}{{Azure}}.
\newblock \bibinfo{howpublished}{\url{https://azure.microsoft.com/}}.
\newblock
\newblock
\shownote{Accessed: 2019-01-06.}


\bibitem[\protect\citeauthoryear{??}{grp}{[n.d.]}]%
        {grpc}
 \bibinfo{year}{[n.d.]}\natexlab{}.
\newblock \bibinfo{title}{{gRPC}}.
\newblock \bibinfo{howpublished}{\url{https://github.com/grpc/}}.
\newblock


\bibitem[\protect\citeauthoryear{Abraham, Malkhi, Nayak, Ren, and Yin}{Abraham
  et~al\mbox{.}}{2019}]%
        {Abraham:2019:SyncHotStuff:270}
\bibfield{author}{\bibinfo{person}{Ittai Abraham}, \bibinfo{person}{Dahlia
  Malkhi}, \bibinfo{person}{Kartik Nayak}, \bibinfo{person}{Ling Ren}, {and}
  \bibinfo{person}{Maofan Yin}.} \bibinfo{year}{2019}\natexlab{}.
\newblock \bibinfo{title}{Sync HotStuff: Simple and Practical Synchronous State
  Machine Replication}.
\newblock \bibinfo{howpublished}{Cryptology ePrint Archive, Report 2019/270}.
\newblock
\newblock
\shownote{\url{https://eprint.iacr.org/2019/270}.}


\bibitem[\protect\citeauthoryear{Aiyer, Alvisi, Clement, Dahlin, Martin, and
  Porth}{Aiyer et~al\mbox{.}}{2005}]%
        {Aiyer:2005:BAR:1095809.1095816}
\bibfield{author}{\bibinfo{person}{Amitanand~S. Aiyer},
  \bibinfo{person}{Lorenzo Alvisi}, \bibinfo{person}{Allen Clement},
  \bibinfo{person}{Mike Dahlin}, \bibinfo{person}{Jean-Philippe Martin}, {and}
  \bibinfo{person}{Carl Porth}.} \bibinfo{year}{2005}\natexlab{}.
\newblock \showarticletitle{{BAR} Fault Tolerance for Cooperative Services}.
\newblock \bibinfo{journal}{\emph{SIGOPS Oper. Syst. Rev.}}
  \bibinfo{volume}{39}, \bibinfo{number}{5} (\bibinfo{date}{Oct.}
  \bibinfo{year}{2005}), \bibinfo{pages}{45–58}.
\newblock
\showISSN{0163-5980}
\urldef\tempurl%
\url{https://doi.org/10.1145/1095809.1095816}
\showDOI{\tempurl}


\bibitem[\protect\citeauthoryear{{Amir}, {Danilov}, {Dolev}, {Kirsch}, {Lane},
  {Nita-Rotaru}, {Olsen}, and {Zage}}{{Amir} et~al\mbox{.}}{2010}]%
        {Amir:2010:Steward:TDSC.2008.53}
\bibfield{author}{\bibinfo{person}{Y. {Amir}}, \bibinfo{person}{C. {Danilov}},
  \bibinfo{person}{D. {Dolev}}, \bibinfo{person}{J. {Kirsch}},
  \bibinfo{person}{J. {Lane}}, \bibinfo{person}{C. {Nita-Rotaru}},
  \bibinfo{person}{J. {Olsen}}, {and} \bibinfo{person}{D. {Zage}}.}
  \bibinfo{year}{2010}\natexlab{}.
\newblock \showarticletitle{Steward: Scaling Byzantine Fault-Tolerant
  Replication to Wide Area Networks}.
\newblock \bibinfo{journal}{\emph{IEEE Transactions on Dependable and Secure
  Computing}} \bibinfo{volume}{7}, \bibinfo{number}{1} (\bibinfo{date}{Jan}
  \bibinfo{year}{2010}), \bibinfo{pages}{80--93}.
\newblock
\showISSN{1941-0018}
\urldef\tempurl%
\url{https://doi.org/10.1109/TDSC.2008.53}
\showDOI{\tempurl}


\bibitem[\protect\citeauthoryear{{Amir}, {Danilov}, {Kirsch}, {Lane}, {Dolev},
  {Nita-Rotaru}, {Olsen}, and {Zage}}{{Amir} et~al\mbox{.}}{2006}]%
        {Amit:2006:Steward:DSN.2006.63}
\bibfield{author}{\bibinfo{person}{Y. {Amir}}, \bibinfo{person}{C. {Danilov}},
  \bibinfo{person}{J. {Kirsch}}, \bibinfo{person}{J. {Lane}},
  \bibinfo{person}{D. {Dolev}}, \bibinfo{person}{C. {Nita-Rotaru}},
  \bibinfo{person}{J. {Olsen}}, {and} \bibinfo{person}{D. {Zage}}.}
  \bibinfo{year}{2006}\natexlab{}.
\newblock \showarticletitle{Scaling Byzantine Fault-Tolerant Replication to
  Wide Area Networks}. In \bibinfo{booktitle}{\emph{International Conference on
  Dependable Systems and Networks (DSN'06)}}. \bibinfo{pages}{105--114}.
\newblock
\showISSN{2158-3927}
\urldef\tempurl%
\url{https://doi.org/10.1109/DSN.2006.63}
\showDOI{\tempurl}


\bibitem[\protect\citeauthoryear{{Arun}, {Peluso}, and {Ravindran}}{{Arun}
  et~al\mbox{.}}{2019}]%
        {Arun:2019:ezBFT:ICDCS.2019.00063}
\bibfield{author}{\bibinfo{person}{B. {Arun}}, \bibinfo{person}{S. {Peluso}},
  {and} \bibinfo{person}{B. {Ravindran}}.} \bibinfo{year}{2019}\natexlab{}.
\newblock \showarticletitle{ezBFT: Decentralizing Byzantine Fault-Tolerant
  State Machine Replication}. In \bibinfo{booktitle}{\emph{2019 IEEE 39th
  International Conference on Distributed Computing Systems (ICDCS)}}.
  \bibinfo{pages}{565--577}.
\newblock
\showISSN{1063-6927}
\urldef\tempurl%
\url{https://doi.org/10.1109/ICDCS.2019.00063}
\showDOI{\tempurl}


\bibitem[\protect\citeauthoryear{Baudet, Ching, Chursin, Danezis, Garillot, Li,
  Malkhi, Naor, Perelman, and Sonnino}{Baudet et~al\mbox{.}}{2019}]%
        {Baudet:2019:LibraBFT:TechRep}
\bibfield{author}{\bibinfo{person}{Mathieu Baudet}, \bibinfo{person}{Avery
  Ching}, \bibinfo{person}{Andrey Chursin}, \bibinfo{person}{George Danezis},
  \bibinfo{person}{Fran{\c{c}}ois Garillot}, \bibinfo{person}{Zekun Li},
  \bibinfo{person}{Dahlia Malkhi}, \bibinfo{person}{Oded Naor},
  \bibinfo{person}{Dmitri Perelman}, {and} \bibinfo{person}{Alberto Sonnino}.}
  \bibinfo{year}{2019}\natexlab{}.
\newblock \showarticletitle{State Machine Replication in the Libra Blockchain}.
\newblock \bibinfo{journal}{\emph{The Libra Assn., Tech. Rep}}
  (\bibinfo{year}{2019}).
\newblock


\bibitem[\protect\citeauthoryear{Baudet, Danezis, and Sonnino}{Baudet
  et~al\mbox{.}}{2020}]%
        {Baudet:2020:FastPay:AFT:3419614.3423249}
\bibfield{author}{\bibinfo{person}{Mathieu Baudet}, \bibinfo{person}{George
  Danezis}, {and} \bibinfo{person}{Alberto Sonnino}.}
  \bibinfo{year}{2020}\natexlab{}.
\newblock \showarticletitle{FastPay: High-Performance Byzantine Fault Tolerant
  Settlement}. In \bibinfo{booktitle}{\emph{Proceedings of the 2nd ACM
  Conference on Advances in Financial Technologies}} (New York, NY, USA)
  \emph{(\bibinfo{series}{AFT '20})}. \bibinfo{publisher}{Association for
  Computing Machinery}, \bibinfo{address}{New York, NY, USA},
  \bibinfo{pages}{163–177}.
\newblock
\showISBNx{9781450381390}
\urldef\tempurl%
\url{https://doi.org/10.1145/3419614.3423249}
\showDOI{\tempurl}


\bibitem[\protect\citeauthoryear{Behl, Distler, and Kapitza}{Behl
  et~al\mbox{.}}{2017}]%
        {Behl:2017:Hybster:Eurosys:3064176.3064213}
\bibfield{author}{\bibinfo{person}{Johannes Behl}, \bibinfo{person}{Tobias
  Distler}, {and} \bibinfo{person}{Rüdiger Kapitza}.}
  \bibinfo{year}{2017}\natexlab{}.
\newblock \showarticletitle{Hybrids on Steroids: {SGX}-Based High Performance
  {BFT}}. In \bibinfo{booktitle}{\emph{Proceedings of the Twelfth European
  Conference on Computer Systems}} (Belgrade, Serbia)
  \emph{(\bibinfo{series}{EuroSys '17})}. \bibinfo{publisher}{ACM},
  \bibinfo{pages}{222--237}.
\newblock
\showISBNx{978-1-4503-4938-3}
\urldef\tempurl%
\url{https://doi.org/10.1145/3064176.3064213}
\showDOI{\tempurl}


\bibitem[\protect\citeauthoryear{Bernstein, Duif, Lange, Schwabe, and
  Yang}{Bernstein et~al\mbox{.}}{2012}]%
        {Bernstein:2012:edd25519:JC:Springer}
\bibfield{author}{\bibinfo{person}{Daniel~J Bernstein}, \bibinfo{person}{Niels
  Duif}, \bibinfo{person}{Tanja Lange}, \bibinfo{person}{Peter Schwabe}, {and}
  \bibinfo{person}{Bo-Yin Yang}.} \bibinfo{year}{2012}\natexlab{}.
\newblock \showarticletitle{High-speed high-security signatures}.
\newblock \bibinfo{journal}{\emph{Journal of Cryptographic Engineering}}
  \bibinfo{volume}{2}, \bibinfo{number}{2} (\bibinfo{year}{2012}),
  \bibinfo{pages}{77--89}.
\newblock


\bibitem[\protect\citeauthoryear{Buchman}{Buchman}{2016}]%
        {Buchman:2016:Tendermint:Thesis}
\bibfield{author}{\bibinfo{person}{Ethan Buchman}.}
  \bibinfo{year}{2016}\natexlab{}.
\newblock \emph{\bibinfo{title}{Tendermint: Byzantine fault tolerance in the
  age of blockchains}}.
\newblock \bibinfo{thesistype}{Master's\ thesis}. \bibinfo{school}{University
  of Guelph}.
\newblock


\bibitem[\protect\citeauthoryear{Buterin and Griffith}{Buterin and
  Griffith}{2017}]%
        {Buterin:2017:Casper:1710.09437}
\bibfield{author}{\bibinfo{person}{Vitalik Buterin} {and}
  \bibinfo{person}{Virgil Griffith}.} \bibinfo{year}{2017}\natexlab{}.
\newblock \showarticletitle{Casper the Friendly Finality Gadget}.
\newblock \bibinfo{journal}{\emph{CoRR}}  \bibinfo{volume}{abs/1710.09437}
  (\bibinfo{year}{2017}).
\newblock
\showeprint[arxiv]{1710.09437}
\urldef\tempurl%
\url{http://arxiv.org/abs/1710.09437}
\showURL{%
\tempurl}


\bibitem[\protect\citeauthoryear{Cachin, Guerraoui, and Rodrigues}{Cachin
  et~al\mbox{.}}{2011}]%
        {Cachin:2011:RSDP}
\bibfield{author}{\bibinfo{person}{Christian Cachin}, \bibinfo{person}{Rachid
  Guerraoui}, {and} \bibinfo{person}{Luís Rodrigues}.}
  \bibinfo{year}{2011}\natexlab{}.
\newblock \bibinfo{booktitle}{\emph{Introduction to Reliable and Secure
  Distributed Programming} (\bibinfo{edition}{2. ed} ed.)}.
\newblock \bibinfo{publisher}{Springer}.
\newblock
\showISBNx{978-3-642-15260-3}


\bibitem[\protect\citeauthoryear{Castro}{Castro}{2001}]%
        {Castro:2001:PBFT:Thesis}
\bibfield{author}{\bibinfo{person}{Miguel Castro}.}
  \bibinfo{year}{2001}\natexlab{}.
\newblock \emph{\bibinfo{title}{Practical Byzantine Fault Tolerance}}.
\newblock \bibinfo{thesistype}{Ph.D. Dissertation}.
  \bibinfo{school}{Massachusetts Institute of Technology}.
\newblock


\bibitem[\protect\citeauthoryear{Castro and Liskov}{Castro and Liskov}{1999}]%
        {Castro:1999:PBFT:OSDI:296806.296824}
\bibfield{author}{\bibinfo{person}{Miguel Castro} {and}
  \bibinfo{person}{Barbara Liskov}.} \bibinfo{year}{1999}\natexlab{}.
\newblock \showarticletitle{Practical Byzantine Fault Tolerance}. In
  \bibinfo{booktitle}{\emph{Proceedings of the Third Symposium on Operating
  Systems Design and Implementation}} (New Orleans, Louisiana, USA)
  \emph{(\bibinfo{series}{OSDI '99})}. \bibinfo{publisher}{USENIX Association},
  \bibinfo{address}{Berkeley, CA, USA}, \bibinfo{pages}{173--186}.
\newblock
\showISBNx{1-880446-39-1}
\urldef\tempurl%
\url{http://dl.acm.org/citation.cfm?id=296806.296824}
\showURL{%
\tempurl}


\bibitem[\protect\citeauthoryear{Castro and Liskov}{Castro and Liskov}{2002}]%
        {Castro:2002:PBFT:TOCS:571637.571640}
\bibfield{author}{\bibinfo{person}{Miguel Castro} {and}
  \bibinfo{person}{Barbara Liskov}.} \bibinfo{year}{2002}\natexlab{}.
\newblock \showarticletitle{Practical Byzantine Fault Tolerance and Proactive
  Recovery}.
\newblock \bibinfo{journal}{\emph{ACM Trans. Comput. Syst.}}
  \bibinfo{volume}{20}, \bibinfo{number}{4} (\bibinfo{date}{Nov.}
  \bibinfo{year}{2002}), \bibinfo{pages}{398--461}.
\newblock
\showISSN{0734-2071}
\urldef\tempurl%
\url{https://doi.org/10.1145/571637.571640}
\showDOI{\tempurl}


\bibitem[\protect\citeauthoryear{Chun, Maniatis, Shenker, and Kubiatowicz}{Chun
  et~al\mbox{.}}{2007}]%
        {Chun:2007:A2M:SOSP:1294261.1294280}
\bibfield{author}{\bibinfo{person}{Byung-Gon Chun}, \bibinfo{person}{Petros
  Maniatis}, \bibinfo{person}{Scott Shenker}, {and} \bibinfo{person}{John
  Kubiatowicz}.} \bibinfo{year}{2007}\natexlab{}.
\newblock \showarticletitle{Attested Append-only Memory: Making Adversaries
  Stick to Their Word}. In \bibinfo{booktitle}{\emph{Proceedings of
  Twenty-first ACM SIGOPS Symposium on Operating Systems Principles}}
  (Stevenson, Washington, USA) \emph{(\bibinfo{series}{SOSP '07})}.
  \bibinfo{publisher}{ACM}, \bibinfo{pages}{189--204}.
\newblock
\showISBNx{978-1-59593-591-5}
\urldef\tempurl%
\url{https://doi.org/10.1145/1294261.1294280}
\showDOI{\tempurl}


\bibitem[\protect\citeauthoryear{Clement, Junqueira, Kate, and
  Rodrigues}{Clement et~al\mbox{.}}{2012}]%
        {Clement:2012:OLPNE:2332432.2332490}
\bibfield{author}{\bibinfo{person}{Allen Clement}, \bibinfo{person}{Flavio
  Junqueira}, \bibinfo{person}{Aniket Kate}, {and} \bibinfo{person}{Rodrigo
  Rodrigues}.} \bibinfo{year}{2012}\natexlab{}.
\newblock \showarticletitle{On the (Limited) Power of Non-Equivocation}. In
  \bibinfo{booktitle}{\emph{Proceedings of the 2012 ACM Symposium on Principles
  of Distributed Computing}} (Madeira, Portugal) \emph{(\bibinfo{series}{PODC
  '12})}. \bibinfo{publisher}{Association for Computing Machinery},
  \bibinfo{address}{New York, NY, USA}, \bibinfo{pages}{301–308}.
\newblock
\showISBNx{9781450314503}
\urldef\tempurl%
\url{https://doi.org/10.1145/2332432.2332490}
\showDOI{\tempurl}


\bibitem[\protect\citeauthoryear{Costan and Devadas}{Costan and
  Devadas}{2016}]%
        {Costan:2016:SGX:TR}
\bibfield{author}{\bibinfo{person}{Victor Costan} {and}
  \bibinfo{person}{Srinivas Devadas}.} \bibinfo{year}{2016}\natexlab{}.
\newblock \bibinfo{title}{Intel SGX Explained}.
\newblock \bibinfo{howpublished}{Cryptology ePrint Archive, Report 2016/086}.
\newblock
\newblock
\shownote{\url{https://ia.cr/2016/086}.}


\bibitem[\protect\citeauthoryear{{Distler}, {Cachin}, and {Kapitza}}{{Distler}
  et~al\mbox{.}}{2016}]%
        {Distler:2016:ReBFT:TC.2015.2495213}
\bibfield{author}{\bibinfo{person}{T. {Distler}}, \bibinfo{person}{C.
  {Cachin}}, {and} \bibinfo{person}{R. {Kapitza}}.}
  \bibinfo{year}{2016}\natexlab{}.
\newblock \showarticletitle{Resource-Efficient Byzantine Fault Tolerance}.
\newblock \bibinfo{journal}{\emph{IEEE Trans. Comput.}} \bibinfo{volume}{65},
  \bibinfo{number}{9} (\bibinfo{date}{Sep.} \bibinfo{year}{2016}),
  \bibinfo{pages}{2807--2819}.
\newblock
\showISSN{1557-9956}
\urldef\tempurl%
\url{https://doi.org/10.1109/TC.2015.2495213}
\showDOI{\tempurl}


\bibitem[\protect\citeauthoryear{{Eischer} and {Distler}}{{Eischer} and
  {Distler}}{2017}]%
        {Eischer:2017:OMADA:EDCC.2017.15}
\bibfield{author}{\bibinfo{person}{M. {Eischer}} {and} \bibinfo{person}{T.
  {Distler}}.} \bibinfo{year}{2017}\natexlab{}.
\newblock \showarticletitle{Scalable Byzantine Fault Tolerance on Heterogeneous
  Servers}. In \bibinfo{booktitle}{\emph{2017 13th European Dependable
  Computing Conference (EDCC)}}. \bibinfo{pages}{34--41}.
\newblock
\urldef\tempurl%
\url{https://doi.org/10.1109/EDCC.2017.15}
\showDOI{\tempurl}


\bibitem[\protect\citeauthoryear{Garcia, Bessani, and Neves}{Garcia
  et~al\mbox{.}}{2019}]%
        {Garcia:2019:Lazarus:3361525.3361550}
\bibfield{author}{\bibinfo{person}{Miguel Garcia}, \bibinfo{person}{Alysson
  Bessani}, {and} \bibinfo{person}{Nuno Neves}.}
  \bibinfo{year}{2019}\natexlab{}.
\newblock \showarticletitle{Lazarus: Automatic Management of Diversity in BFT
  Systems}. In \bibinfo{booktitle}{\emph{Proceedings of the 20th International
  Middleware Conference}} (Davis, CA, USA) \emph{(\bibinfo{series}{Middleware
  '19})}. \bibinfo{publisher}{Association for Computing Machinery},
  \bibinfo{address}{New York, NY, USA}, \bibinfo{pages}{241–254}.
\newblock
\showISBNx{9781450370097}
\urldef\tempurl%
\url{https://doi.org/10.1145/3361525.3361550}
\showDOI{\tempurl}


\bibitem[\protect\citeauthoryear{{Golan Gueta}, {Abraham}, {Grossman},
  {Malkhi}, {Pinkas}, {Reiter}, {Seredinschi}, {Tamir}, and {Tomescu}}{{Golan
  Gueta} et~al\mbox{.}}{2019}]%
        {Gutea:2019:SBFT:DSN.2019.00063}
\bibfield{author}{\bibinfo{person}{G. {Golan Gueta}}, \bibinfo{person}{I.
  {Abraham}}, \bibinfo{person}{S. {Grossman}}, \bibinfo{person}{D. {Malkhi}},
  \bibinfo{person}{B. {Pinkas}}, \bibinfo{person}{M. {Reiter}},
  \bibinfo{person}{D. {Seredinschi}}, \bibinfo{person}{O. {Tamir}}, {and}
  \bibinfo{person}{A. {Tomescu}}.} \bibinfo{year}{2019}\natexlab{}.
\newblock \showarticletitle{{SBFT}: A Scalable and Decentralized Trust
  Infrastructure}. In \bibinfo{booktitle}{\emph{2019 49th Annual IEEE/IFIP
  International Conference on Dependable Systems and Networks (DSN)}}.
  \bibinfo{pages}{568--580}.
\newblock
\showISSN{1530-0889}
\urldef\tempurl%
\url{https://doi.org/10.1109/DSN.2019.00063}
\showDOI{\tempurl}


\bibitem[\protect\citeauthoryear{Google}{Google}{[n.d.]}]%
        {protobuf}
\bibfield{author}{\bibinfo{person}{Google}.} \bibinfo{year}{[n.d.]}\natexlab{}.
\newblock \bibinfo{title}{Protocol Buffers}.
\newblock
  \bibinfo{howpublished}{\url{https://developers.google.com/protocol-buffers/}}.
\newblock


\bibitem[\protect\citeauthoryear{{Gunn}, {Liu}, {Vavala}, and {Asokan}}{{Gunn}
  et~al\mbox{.}}{2019}]%
        {Gunn:2019:SACZyzzyva:SRDS47363.2019.00024}
\bibfield{author}{\bibinfo{person}{L.~J. {Gunn}}, \bibinfo{person}{J. {Liu}},
  \bibinfo{person}{B. {Vavala}}, {and} \bibinfo{person}{N. {Asokan}}.}
  \bibinfo{year}{2019}\natexlab{}.
\newblock \showarticletitle{Making Speculative BFT Resilient with Trusted
  Monotonic Counters}. In \bibinfo{booktitle}{\emph{2019 38th Symposium on
  Reliable Distributed Systems (SRDS)}}. \bibinfo{pages}{133--13309}.
\newblock
\urldef\tempurl%
\url{https://doi.org/10.1109/SRDS47363.2019.00024}
\showDOI{\tempurl}


\bibitem[\protect\citeauthoryear{Gupta, Hellings, and Sadoghi}{Gupta
  et~al\mbox{.}}{2021}]%
        {Gupta:2021:RCC:ICDE}
\bibfield{author}{\bibinfo{person}{Suyash Gupta}, \bibinfo{person}{Jelle
  Hellings}, {and} \bibinfo{person}{Mohammad Sadoghi}.}
  \bibinfo{year}{2021}\natexlab{}.
\newblock \showarticletitle{RCC: Resilient Concurrent Consensus for
  High-Throughput Secure Transaction Processing}. In
  \bibinfo{booktitle}{\emph{Int. Conf. on Data Engineering (ICDE)}}.
\newblock


\bibitem[\protect\citeauthoryear{Gupta, Rahnama, Hellings, and Sadoghi}{Gupta
  et~al\mbox{.}}{2020}]%
        {Gupta:2020:ResilientDB:3380750.3380757}
\bibfield{author}{\bibinfo{person}{Suyash Gupta}, \bibinfo{person}{Sajjad
  Rahnama}, \bibinfo{person}{Jelle Hellings}, {and} \bibinfo{person}{Mohammad
  Sadoghi}.} \bibinfo{year}{2020}\natexlab{}.
\newblock \showarticletitle{{ResilientDB}: Global Scale Resilient Blockchain
  Fabric}.
\newblock \bibinfo{journal}{\emph{Proc. VLDB Endow.}} \bibinfo{volume}{13},
  \bibinfo{number}{6} (\bibinfo{date}{Feb.} \bibinfo{year}{2020}),
  \bibinfo{pages}{868–883}.
\newblock
\showISSN{2150-8097}
\urldef\tempurl%
\url{https://doi.org/10.14778/3380750.3380757}
\showDOI{\tempurl}


\bibitem[\protect\citeauthoryear{Howard, Malkhi, and Spiegelman}{Howard
  et~al\mbox{.}}{2017}]%
        {Howard:FlexiblePaxos:LIPIcs:2017:7094}
\bibfield{author}{\bibinfo{person}{Heidi Howard}, \bibinfo{person}{Dahlia
  Malkhi}, {and} \bibinfo{person}{Alexander Spiegelman}.}
  \bibinfo{year}{2017}\natexlab{}.
\newblock \showarticletitle{{Flexible Paxos: Quorum Intersection Revisited}}.
  In \bibinfo{booktitle}{\emph{20th International Conference on Principles of
  Distributed Systems (OPODIS 2016)}} \emph{(\bibinfo{series}{Leibniz
  International Proceedings in Informatics (LIPIcs)})},
  \bibfield{editor}{\bibinfo{person}{Panagiota Fatourou},
  \bibinfo{person}{Ernesto Jim{\'e}nez}, {and} \bibinfo{person}{Fernando
  Pedone}} (Eds.), Vol.~\bibinfo{volume}{70}. \bibinfo{publisher}{Schloss
  Dagstuhl--Leibniz-Zentrum fuer Informatik}, \bibinfo{address}{Dagstuhl,
  Germany}, \bibinfo{pages}{25:1--25:14}.
\newblock
\showISBNx{978-3-95977-031-6}
\showISSN{1868-8969}
\urldef\tempurl%
\url{https://doi.org/10.4230/LIPIcs.OPODIS.2016.25}
\showDOI{\tempurl}


\bibitem[\protect\citeauthoryear{Johnson, Menezes, and Vanstone}{Johnson
  et~al\mbox{.}}{2001}]%
        {Johnson:2001:ECD:2701775.2701951}
\bibfield{author}{\bibinfo{person}{Don Johnson}, \bibinfo{person}{Alfred
  Menezes}, {and} \bibinfo{person}{Scott Vanstone}.}
  \bibinfo{year}{2001}\natexlab{}.
\newblock \showarticletitle{The Elliptic Curve Digital Signature Algorithm
  (ECDSA)}.
\newblock \bibinfo{journal}{\emph{Int. J. Inf. Secur.}} \bibinfo{volume}{1},
  \bibinfo{number}{1} (\bibinfo{date}{Aug.} \bibinfo{year}{2001}),
  \bibinfo{pages}{36--63}.
\newblock
\showISSN{1615-5262}
\urldef\tempurl%
\url{https://doi.org/10.1007/s102070100002}
\showDOI{\tempurl}


\bibitem[\protect\citeauthoryear{Kapitza, Behl, Cachin, Distler, Kuhnle,
  Mohammadi, Schröder-Preikschat, and Stengel}{Kapitza et~al\mbox{.}}{2012}]%
        {Kapitza:2012:CheapBFT:Eurosys:2168836.2168866}
\bibfield{author}{\bibinfo{person}{Rüdiger Kapitza}, \bibinfo{person}{Johannes
  Behl}, \bibinfo{person}{Christian Cachin}, \bibinfo{person}{Tobias Distler},
  \bibinfo{person}{Simon Kuhnle}, \bibinfo{person}{Seyed~Vahid Mohammadi},
  \bibinfo{person}{Wolfgang Schröder-Preikschat}, {and} \bibinfo{person}{Klaus
  Stengel}.} \bibinfo{year}{2012}\natexlab{}.
\newblock \showarticletitle{{CheapBFT}: Resource-efficient Byzantine Fault
  Tolerance}. In \bibinfo{booktitle}{\emph{Proceedings of the 7th ACM European
  Conference on Computer Systems}} (Bern, Switzerland)
  \emph{(\bibinfo{series}{EuroSys '12})}. \bibinfo{publisher}{ACM},
  \bibinfo{pages}{295--308}.
\newblock
\showISBNx{978-1-4503-1223-3}
\urldef\tempurl%
\url{https://doi.org/10.1145/2168836.2168866}
\showDOI{\tempurl}


\bibitem[\protect\citeauthoryear{Kotla, Alvisi, Dahlin, Clement, and
  Wong}{Kotla et~al\mbox{.}}{2007}]%
        {Kotla:2007:Zyzzyva:SOSP:1294261.1294267}
\bibfield{author}{\bibinfo{person}{Ramakrishna Kotla}, \bibinfo{person}{Lorenzo
  Alvisi}, \bibinfo{person}{Mike Dahlin}, \bibinfo{person}{Allen Clement},
  {and} \bibinfo{person}{Edmund Wong}.} \bibinfo{year}{2007}\natexlab{}.
\newblock \showarticletitle{Zyzzyva: Speculative Byzantine Fault Tolerance}.
\newblock  (\bibinfo{year}{2007}), \bibinfo{pages}{45--58}.
\newblock
\showISBNx{978-1-59593-591-5}
\urldef\tempurl%
\url{https://doi.org/10.1145/1294261.1294267}
\showDOI{\tempurl}


\bibitem[\protect\citeauthoryear{Lamport, Shostak, and Pease}{Lamport
  et~al\mbox{.}}{1982}]%
        {Lamport:1982:ByzGenerals:TOPLS:357172.357176}
\bibfield{author}{\bibinfo{person}{Leslie Lamport}, \bibinfo{person}{Robert
  Shostak}, {and} \bibinfo{person}{Marshall Pease}.}
  \bibinfo{year}{1982}\natexlab{}.
\newblock \showarticletitle{The Byzantine Generals Problem}.
\newblock \bibinfo{journal}{\emph{ACM Trans. Program. Lang. Syst.}}
  \bibinfo{volume}{4}, \bibinfo{number}{3} (\bibinfo{date}{July}
  \bibinfo{year}{1982}), \bibinfo{pages}{382–401}.
\newblock
\showISSN{0164-0925}
\urldef\tempurl%
\url{https://doi.org/10.1145/357172.357176}
\showDOI{\tempurl}


\bibitem[\protect\citeauthoryear{Levin, Douceur, Lorch, and Moscibroda}{Levin
  et~al\mbox{.}}{2009}]%
        {Levin:2009:TrInc:NSDI:1558977.1558978}
\bibfield{author}{\bibinfo{person}{Dave Levin}, \bibinfo{person}{John~R.
  Douceur}, \bibinfo{person}{Jacob~R. Lorch}, {and} \bibinfo{person}{Thomas
  Moscibroda}.} \bibinfo{year}{2009}\natexlab{}.
\newblock \showarticletitle{TrInc: Small Trusted Hardware for Large Distributed
  Systems}. In \bibinfo{booktitle}{\emph{Proceedings of the 6th USENIX
  Symposium on Networked Systems Design and Implementation}} (Boston,
  Massachusetts) \emph{(\bibinfo{series}{NSDI'09})}. \bibinfo{publisher}{USENIX
  Association}, \bibinfo{pages}{1--14}.
\newblock


\bibitem[\protect\citeauthoryear{Li, Zhang, Wang, Li, and Cheng}{Li
  et~al\mbox{.}}{2021}]%
        {Li:2021:CIPHERLEAKS:UsenixSecurity}
\bibfield{author}{\bibinfo{person}{Mengyuan Li}, \bibinfo{person}{Yinqian
  Zhang}, \bibinfo{person}{Huibo Wang}, \bibinfo{person}{Kang Li}, {and}
  \bibinfo{person}{Yueqiang Cheng}.} \bibinfo{year}{2021}\natexlab{}.
\newblock \showarticletitle{{CIPHERLEAKS}: Breaking Constant-time Cryptography
  on {AMD} {SEV} via the Ciphertext Side Channel}. In
  \bibinfo{booktitle}{\emph{30th USENIX Security Symposium (USENIX Security
  21)}}. \bibinfo{publisher}{USENIX Association}, \bibinfo{pages}{717--732}.
\newblock
\showISBNx{978-1-939133-24-3}
\urldef\tempurl%
\url{https://www.usenix.org/conference/usenixsecurity21/presentation/li-mengyuan}
\showURL{%
\tempurl}


\bibitem[\protect\citeauthoryear{{Liu}, {Li}, {Karame}, and {Asokan}}{{Liu}
  et~al\mbox{.}}{2019}]%
        {Liu:2019:FastBFT:TC.2018.2860009}
\bibfield{author}{\bibinfo{person}{J. {Liu}}, \bibinfo{person}{W. {Li}},
  \bibinfo{person}{G.~O. {Karame}}, {and} \bibinfo{person}{N. {Asokan}}.}
  \bibinfo{year}{2019}\natexlab{}.
\newblock \showarticletitle{Scalable Byzantine Consensus via Hardware-Assisted
  Secret Sharing}.
\newblock \bibinfo{journal}{\emph{IEEE Trans. Comput.}} \bibinfo{volume}{68},
  \bibinfo{number}{1} (\bibinfo{date}{Jan} \bibinfo{year}{2019}),
  \bibinfo{pages}{139--151}.
\newblock
\showISSN{2326-3814}
\urldef\tempurl%
\url{https://doi.org/10.1109/TC.2018.2860009}
\showDOI{\tempurl}


\bibitem[\protect\citeauthoryear{Liu, Viotti, Cachin, Qu{é}ma, and
  Vukolic}{Liu et~al\mbox{.}}{2016}]%
        {Liu:2016:XFT:OSDI:3026877.3026915}
\bibfield{author}{\bibinfo{person}{Shengyun Liu}, \bibinfo{person}{Paolo
  Viotti}, \bibinfo{person}{Christian Cachin}, \bibinfo{person}{Vivien
  Qu{é}ma}, {and} \bibinfo{person}{Marko Vukolic}.}
  \bibinfo{year}{2016}\natexlab{}.
\newblock \showarticletitle{{XFT}: Practical Fault Tolerance Beyond Crashes}.
  In \bibinfo{booktitle}{\emph{Proceedings of the 12th USENIX Conference on
  Operating Systems Design and Implementation}} (Savannah, GA, USA)
  \emph{(\bibinfo{series}{OSDI'16})}. \bibinfo{publisher}{USENIX Association},
  \bibinfo{pages}{485--500}.
\newblock
\showISBNx{978-1-931971-33-1}


\bibitem[\protect\citeauthoryear{Lynch}{Lynch}{1996}]%
        {Lynch:1996:DistAlgBook}
\bibfield{author}{\bibinfo{person}{N.A. Lynch}.}
  \bibinfo{year}{1996}\natexlab{}.
\newblock \bibinfo{booktitle}{\emph{Distributed Algorithms}}.
\newblock \bibinfo{publisher}{Kaufmann}.
\newblock
\showISBNx{978-1-55860-348-6}
\urldef\tempurl%
\url{https://books.google.com/books?id=7C7oIV48RQQC}
\showURL{%
\tempurl}


\bibitem[\protect\citeauthoryear{Malkhi, Nayak, and Ren}{Malkhi
  et~al\mbox{.}}{2019}]%
        {Malkhi:2019:FlexibleBFT:CCS:3319535.3354225}
\bibfield{author}{\bibinfo{person}{Dahlia Malkhi}, \bibinfo{person}{Kartik
  Nayak}, {and} \bibinfo{person}{Ling Ren}.} \bibinfo{year}{2019}\natexlab{}.
\newblock \showarticletitle{Flexible Byzantine Fault Tolerance}. In
  \bibinfo{booktitle}{\emph{Proceedings of the 2019 ACM SIGSAC Conference on
  Computer and Communications Security}} (London, United Kingdom)
  \emph{(\bibinfo{series}{CCS '19})}. \bibinfo{publisher}{Association for
  Computing Machinery}, \bibinfo{address}{New York, NY, USA},
  \bibinfo{pages}{1041–1053}.
\newblock
\showISBNx{9781450367479}
\urldef\tempurl%
\url{https://doi.org/10.1145/3319535.3354225}
\showDOI{\tempurl}


\bibitem[\protect\citeauthoryear{Martin and Alvisi}{Martin and Alvisi}{2006}]%
        {Martin:2006:FaB:TDSC.2006.35}
\bibfield{author}{\bibinfo{person}{Jean-Philippe Martin} {and}
  \bibinfo{person}{Lorenzo Alvisi}.} \bibinfo{year}{2006}\natexlab{}.
\newblock \showarticletitle{Fast Byzantine Consensus}.
\newblock \bibinfo{journal}{\emph{IEEE Trans. Dependable Secur. Comput.}}
  \bibinfo{volume}{3}, \bibinfo{number}{3} (\bibinfo{year}{2006}),
  \bibinfo{pages}{202--215}.
\newblock
\showISSN{1545-5971}
\urldef\tempurl%
\url{https://doi.org/10.1109/TDSC.2006.35}
\showDOI{\tempurl}


\bibitem[\protect\citeauthoryear{Mofrad, Zhang, Lu, and Shi}{Mofrad
  et~al\mbox{.}}{2018}]%
        {Mofrad:2018:SGX-AMD:HASP:3214292.3214301}
\bibfield{author}{\bibinfo{person}{Saeid Mofrad}, \bibinfo{person}{Fengwei
  Zhang}, \bibinfo{person}{Shiyong Lu}, {and} \bibinfo{person}{Weidong Shi}.}
  \bibinfo{year}{2018}\natexlab{}.
\newblock \showarticletitle{A Comparison Study of Intel SGX and AMD Memory
  Encryption Technology}. In \bibinfo{booktitle}{\emph{Proceedings of the 7th
  International Workshop on Hardware and Architectural Support for Security and
  Privacy}} (Los Angeles, California) \emph{(\bibinfo{series}{HASP '18})}.
  \bibinfo{publisher}{Association for Computing Machinery},
  \bibinfo{address}{New York, NY, USA}, Article \bibinfo{articleno}{9},
  \bibinfo{numpages}{8}~pages.
\newblock
\showISBNx{9781450365000}
\urldef\tempurl%
\url{https://doi.org/10.1145/3214292.3214301}
\showDOI{\tempurl}


\bibitem[\protect\citeauthoryear{Nakamoto}{Nakamoto}{2008}]%
        {Nakamoto:2008:Bitcoin}
\bibfield{author}{\bibinfo{person}{Satoshi Nakamoto}.}
  \bibinfo{year}{2008}\natexlab{}.
\newblock \bibinfo{booktitle}{\emph{Bitcoin: A peer-to-peer electronic cash
  system}}.
\newblock \bibinfo{type}{{T}echnical {R}eport}.
\newblock


\bibitem[\protect\citeauthoryear{Pass and Shi}{Pass and Shi}{2018}]%
        {Shi:2018:Thunderella}
\bibfield{author}{\bibinfo{person}{Rafael Pass} {and} \bibinfo{person}{Elaine
  Shi}.} \bibinfo{year}{2018}\natexlab{}.
\newblock \showarticletitle{Thunderella: Blockchains with Optimistic Instant
  Confirmation}. In \bibinfo{booktitle}{\emph{Advances in Cryptology –
  EUROCRYPT 2018}} \emph{(\bibinfo{series}{Lecture Notes in Computer
  Science})}, \bibfield{editor}{\bibinfo{person}{Jesper~Buus Nielsen} {and}
  \bibinfo{person}{Vincent Rijmen}} (Eds.). \bibinfo{publisher}{Springer
  International Publishing}, \bibinfo{pages}{3–33}.
\newblock
\showISBNx{978-3-319-78375-8}
\urldef\tempurl%
\url{https://doi.org/10.1007/978-3-319-78375-8_1}
\showDOI{\tempurl}


\bibitem[\protect\citeauthoryear{Pinto and Santos}{Pinto and Santos}{2019}]%
        {Sandro:2019:ArmTrustZone:3291047}
\bibfield{author}{\bibinfo{person}{Sandro Pinto} {and} \bibinfo{person}{Nuno
  Santos}.} \bibinfo{year}{2019}\natexlab{}.
\newblock \showarticletitle{Demystifying {Arm TrustZone}: A Comprehensive
  Survey}.
\newblock \bibinfo{journal}{\emph{ACM Comput. Surv.}} \bibinfo{volume}{51},
  \bibinfo{number}{6}, Article \bibinfo{articleno}{130} (\bibinfo{date}{Jan.}
  \bibinfo{year}{2019}), \bibinfo{numpages}{36}~pages.
\newblock
\showISSN{0360-0300}
\urldef\tempurl%
\url{https://doi.org/10.1145/3291047}
\showDOI{\tempurl}


\bibitem[\protect\citeauthoryear{Stathakopoulou, David, and
  Vukolic}{Stathakopoulou et~al\mbox{.}}{2019}]%
        {Stathakopoulou:2019:MirBFT:arxiv:1906.05552}
\bibfield{author}{\bibinfo{person}{Chrysoula Stathakopoulou},
  \bibinfo{person}{Tudor David}, {and} \bibinfo{person}{Marko Vukolic}.}
  \bibinfo{year}{2019}\natexlab{}.
\newblock \showarticletitle{Mir-BFT: High-Throughput {BFT} for Blockchains}.
\newblock \bibinfo{journal}{\emph{CoRR}}  \bibinfo{volume}{abs/1906.05552}
  (\bibinfo{year}{2019}).
\newblock
\showeprint[arxiv]{1906.05552}
\urldef\tempurl%
\url{http://arxiv.org/abs/1906.05552}
\showURL{%
\tempurl}


\bibitem[\protect\citeauthoryear{Stathakopoulous and Cachin}{Stathakopoulous
  and Cachin}{2017}]%
        {stathakopoulous2017threshold}
\bibfield{author}{\bibinfo{person}{C Stathakopoulous} {and}
  \bibinfo{person}{Christian Cachin}.} \bibinfo{year}{2017}\natexlab{}.
\newblock \showarticletitle{Threshold signatures for blockchain systems}.
\newblock \bibinfo{journal}{\emph{Swiss Federal Institute of Technology}}
  (\bibinfo{year}{2017}).
\newblock


\bibitem[\protect\citeauthoryear{van Schaik, Kwong, Genkin, and Yarom}{van
  Schaik et~al\mbox{.}}{2020a}]%
        {Schaik:2020:SGAxe}
\bibfield{author}{\bibinfo{person}{Stephan van Schaik}, \bibinfo{person}{Andrew
  Kwong}, \bibinfo{person}{Daniel Genkin}, {and} \bibinfo{person}{Yuval
  Yarom}.} \bibinfo{year}{2020}\natexlab{a}.
\newblock \bibinfo{title}{{SGAxe}: How {SGX} Fails in Practice}.
\newblock \bibinfo{howpublished}{\url{https://sgaxe.com/}}.
\newblock


\bibitem[\protect\citeauthoryear{van Schaik, Minkin, Kwong, Genkin, and
  Yarom}{van Schaik et~al\mbox{.}}{2020b}]%
        {Schaik:2020:Cacheout:13353}
\bibfield{author}{\bibinfo{person}{Stephan van Schaik}, \bibinfo{person}{Marina
  Minkin}, \bibinfo{person}{Andrew Kwong}, \bibinfo{person}{Daniel Genkin},
  {and} \bibinfo{person}{Yuval Yarom}.} \bibinfo{year}{2020}\natexlab{b}.
\newblock \bibinfo{title}{CacheOut: Leaking Data on Intel CPUs via Cache
  Evictions}.
\newblock
\newblock
\showeprint[arxiv]{2006.13353}~[cs.CR]


\bibitem[\protect\citeauthoryear{Veronese, Correia, Bessani, and Lung}{Veronese
  et~al\mbox{.}}{2010}]%
        {Veronese:2010:EEB:1909626.1909800}
\bibfield{author}{\bibinfo{person}{Giuliana~Santos Veronese},
  \bibinfo{person}{Miguel Correia}, \bibinfo{person}{Alysson~Neves Bessani},
  {and} \bibinfo{person}{Lau~Cheuk Lung}.} \bibinfo{year}{2010}\natexlab{}.
\newblock \showarticletitle{{EBAWA}: Efficient Byzantine Agreement for
  Wide-Area Networks}. In \bibinfo{booktitle}{\emph{Proceedings of the 2010
  IEEE 12th International Symposium on High-Assurance Systems Engineering}}
  \emph{(\bibinfo{series}{HASE '10})}. \bibinfo{publisher}{IEEE Computer
  Society}, \bibinfo{address}{Washington, DC, USA}, \bibinfo{pages}{10--19}.
\newblock
\showISBNx{978-0-7695-4292-8}
\urldef\tempurl%
\url{https://doi.org/10.1109/HASE.2010.19}
\showDOI{\tempurl}


\bibitem[\protect\citeauthoryear{Veronese, Correia, Bessani, Lung, and
  Verissimo}{Veronese et~al\mbox{.}}{2013}]%
        {Veronese:2013:Efficient:TC.2011.221}
\bibfield{author}{\bibinfo{person}{Giuliana~Santos Veronese},
  \bibinfo{person}{Miguel Correia}, \bibinfo{person}{Alysson~Neves Bessani},
  \bibinfo{person}{Lau~Cheuk Lung}, {and} \bibinfo{person}{Paulo Verissimo}.}
  \bibinfo{year}{2013}\natexlab{}.
\newblock \showarticletitle{Efficient Byzantine Fault-Tolerance}.
\newblock \bibinfo{journal}{\emph{IEEE Trans. Comput.}} \bibinfo{volume}{62},
  \bibinfo{number}{1} (\bibinfo{date}{Jan.} \bibinfo{year}{2013}),
  \bibinfo{pages}{16--30}.
\newblock
\showISSN{0018-9340}
\urldef\tempurl%
\url{https://doi.org/10.1109/TC.2011.221}
\showDOI{\tempurl}


\bibitem[\protect\citeauthoryear{Voron and Gramoli}{Voron and Gramoli}{2019}]%
        {Voron:2019:Dispel:arxiv:1912.10367}
\bibfield{author}{\bibinfo{person}{Gauthier Voron} {and}
  \bibinfo{person}{Vincent Gramoli}.} \bibinfo{year}{2019}\natexlab{}.
\newblock \showarticletitle{Dispel: Byzantine {SMR} with Distributed
  Pipelining}.
\newblock \bibinfo{journal}{\emph{CoRR}}  \bibinfo{volume}{abs/1912.10367}
  (\bibinfo{year}{2019}).
\newblock
\showeprint[arxiv]{1912.10367}
\urldef\tempurl%
\url{http://arxiv.org/abs/1912.10367}
\showURL{%
\tempurl}


\bibitem[\protect\citeauthoryear{Yin, Malkhi, Reiter, Gueta, and Abraham}{Yin
  et~al\mbox{.}}{2019}]%
        {Yin:2019:HotStuff:3293611.3331591}
\bibfield{author}{\bibinfo{person}{Maofan Yin}, \bibinfo{person}{Dahlia
  Malkhi}, \bibinfo{person}{Michael~K. Reiter}, \bibinfo{person}{Guy~Golan
  Gueta}, {and} \bibinfo{person}{Ittai Abraham}.}
  \bibinfo{year}{2019}\natexlab{}.
\newblock \showarticletitle{HotStuff: {BFT} Consensus with Linearity and
  Responsiveness}. In \bibinfo{booktitle}{\emph{Proceedings of the 2019 ACM
  Symposium on Principles of Distributed Computing}} (Toronto ON, Canada)
  \emph{(\bibinfo{series}{PODC ’19})}. \bibinfo{publisher}{Association for
  Computing Machinery}, \bibinfo{address}{New York, NY, USA},
  \bibinfo{pages}{347–356}.
\newblock
\showISBNx{9781450362177}
\urldef\tempurl%
\url{https://doi.org/10.1145/3293611.3331591}
\showDOI{\tempurl}


\bibitem[\protect\citeauthoryear{Zhang, Sharma, Szekeres, Krishnamurthy, and
  Ports}{Zhang et~al\mbox{.}}{2018}]%
        {Zhang:2018:Tapir:TOCS:3269981}
\bibfield{author}{\bibinfo{person}{Irene Zhang}, \bibinfo{person}{Naveen~Kr.
  Sharma}, \bibinfo{person}{Adriana Szekeres}, \bibinfo{person}{Arvind
  Krishnamurthy}, {and} \bibinfo{person}{Dan R.~K. Ports}.}
  \bibinfo{year}{2018}\natexlab{}.
\newblock \showarticletitle{Building Consistent Transactions with Inconsistent
  Replication}.
\newblock \bibinfo{journal}{\emph{ACM Trans. Comput. Syst.}}
  \bibinfo{volume}{35}, \bibinfo{number}{4}, Article \bibinfo{articleno}{12}
  (\bibinfo{date}{dec} \bibinfo{year}{2018}), \bibinfo{numpages}{37}~pages.
\newblock
\showISSN{0734-2071}
\urldef\tempurl%
\url{https://doi.org/10.1145/3269981}
\showDOI{\tempurl}


\end{thebibliography}

\appendix

\section{Flexible MinBFT Correctness}\label{sec:minbft:proof}

Note that our proof structure overlaps with MinBFT's original proof for ease 
of exposition.

\begin{lemma}\label{lm:minbft:s1}
    In a view $v$, if a correct replica executes an operation $o$ with 
    sequence number $i$, no correct replica will execute $o$ with sequence 
    number $i' \neq i$.
\end{lemma}
\begin{proof}



    If $r$ executes $o$ with sequence number $i$, then it will have $f+1$ valid 
    \texttt{Commit} messages for $\langle o, i \rangle$ from a quorum $Q_c$.

    We prove by contradiction. Suppose another correct server $r'$ execute $o$ 
    with sequence number $i' > i$. This can happen if $s'$ received $f+1$ valid 
    \texttt{Commit} messages for $\langle o, i' \rangle$ from quorum $Q_c'$. 
    Note that $\exists Q_c, Q_c' : Q_c \cap Q_c' = \emptyset$ i.e. any two 
    $\mathcal{Q}_c$ quorums need not intersect. 
    There are two cases to consider depending on the primary:
    \begin{enumerate}
        \item \textit{Primary is correct:} This is trivial since a correct 
        primary will not generate two UIs for the same operation $o$.
        \item \textit{Primary is faulty:} Lets say $r$ sends 
    $\langle \texttt{Commit}, v, r, s, UI_p, UI_r \rangle$ to $s$ for
    $\langle o, i \rangle$ and say $r'$ sends 
    $\langle \texttt{Commit}, v, r', s', UI_p, UI_r \rangle$ to $s$ for
    $\langle o, i' \rangle$.
        \begin{enumerate}
            \item $s'$ executed some operation at $i$: Primary cannot generate 
        two messages with the same sequence number. Thus, $i$ must have been 
        $o$. Since $o.seq \leq V_{req}[c]$, $o$ will not be executed again.
            \item $s'$ did not execute $i$: Replicas can only execute in 
        sequence number order. Since $i < i'$, $s'$ must wait to execute $i$. 
        Once its executes $i$, executing $i'$ fall under previous case.        
        \end{enumerate}
    \end{enumerate}
    Thus, it is not possible for any two replicas to execute the same operation 
    with different sequence number in view $v$.
\end{proof}

\begin{lemma}\label{lm:minbft:s2}
    If a correct replica executes an operation $o$ with sequence number $i$ in 
    a view $v$, no correct replica will execute $o$ with sequence number 
    $i' \neq i$ in any view $v' > v$.
\end{lemma}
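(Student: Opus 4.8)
The plan is to propagate the commitment of $\langle o, i \rangle$ across the view change separating $v$ from $v'$, and then fall back on the within-view guarantee of Lemma~\ref{lm:minbft:s1}. I would first argue that it suffices to treat the single transition $v' = v+1$: since every view installs its starting state from the immediately preceding view's \texttt{NewView} message, a straightforward induction carries the conclusion to any $v' = v+k$. So the core obligation is to show that whatever state $S$ the primary of $v+1$ installs must already fix $o$ at sequence number $i$.

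Since a correct replica executed $o$ at $i$ in $v$, it held a commit certificate of $f+1$ \texttt{Commit} messages from a quorum $Q_c$ with $|Q_c| = f+1$. The primary of $v+1$ builds its \texttt{NewView} from a view-change quorum $Q_{vc}$ of $N-f$ \texttt{ViewChange} messages. By Equation~\ref{eq:minbft:flex2}, $|Q_c| + |Q_{vc}| > N$, so $Q_c \cap Q_{vc}$ contains at least one replica $r^{\ast}$. The crucial observation is that $r^{\ast}$ both emitted a \texttt{Commit} for $\langle o, i \rangle$ (as a member of $Q_c$) and contributed a \texttt{ViewChange} carrying its message log $O$ (as a member of $Q_{vc}$). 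Here I would lean on the USIG properties: the counter values attesting $r^{\ast}$'s messages form a gapless increasing sequence, so $r^{\ast}$ cannot drop the \texttt{Commit} for $\langle o, i \rangle$ from $O$ without leaving a hole that correct replicas detect. This step works whether or not $r^{\ast}$ is faulty, because the trusted component is honest even inside a faulty replica --- which is exactly where the hybrid assumption buys the $f{+}1$-versus-$N{-}f$ quorum split.

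Given that $\langle o, i \rangle$ is surfaced in $Q_{vc}$, I would then argue that the new primary must enforce it in $S$: a correct primary simply includes it, and a faulty primary that omits it produces a USIG-signed \texttt{NewView} whose counter gaps are detected by correct replicas (the same hole-detection invoked in Lemma~\ref{lm:minbft:l2}), triggering a further view change rather than an unsafe install. Thus every admissible $S$ for $v+1$ fixes $o$ at $i$, and by the $V_{req}$ bookkeeping together with Lemma~\ref{lm:minbft:s1} applied in $v+1$, no correct replica executes $o$ at any $i' \neq i$ there; the induction extends this to all $v' > v$. I expect the main obstacle to be making the hole-detection argument airtight for a \emph{faulty} intersection replica: one must show that omission from $O$ (or from the primary's $S$) is always detectable, i.e. that the USIG sequentiality check genuinely prevents a faulty replica from silently discarding a committed entry while still presenting a well-formed, verifiable view-change contribution.
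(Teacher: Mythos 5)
Your proposal follows essentially the same route as the paper's proof: reduce to $v'=v+1$ and induct, intersect the $f{+}1$ commit quorum with the $N{-}f$ view-change quorum via Equation~\ref{eq:minbft:flex2}, invoke USIG sequentiality/hole-detection to force the committed entry into the \texttt{NewView} state (the paper spells this out as a four-way case analysis over correct/faulty primary and correct/faulty intersection replica, resolving exactly the "main obstacle" you flag), and finish with the $V_{req}$ bookkeeping plus Lemma~\ref{lm:minbft:s1} in the new view. The one detail you gloss over is the case where $o$ was executed before the last stable checkpoint --- there the \texttt{Commit} for $\langle o, i \rangle$ is legitimately absent from $O$, and the paper instead carries the claim through the checkpoint certificate $C_l$ included in the \texttt{ViewChange} message.
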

\begin{proof}
    If a correct replica $s$ executes $o$ with sequence number $i$ in view $v$ 
    it must have received at least $f+1$ valid \texttt{Commit} messages for 
    $\langle o, i, v \rangle$ from quorum $Q_c$ replicas.

    Proof by contradiction: Let us suppose that another correct replica $s'$ 
    executes $o$ at sequence number $i' > i$ in $v' > v$, then $s'$ would have 
    received $f+1$ valid \texttt{Commit} messages for 
    $\langle o, i', v' \rangle$ from quorum $Q_c'$ replicas.

    Note that we have that any two commit quorums need not intersect i.e. 
    $Q_c \cap Q_c' = \emptyset$.

    We first deal with the case where $v' = v+1$ and generalize later for 
    arbitrary values of $v' > v$.

    Let $p$ be the primary of new view $v'$. First, we show that the primary 
    $p$ in view $v'$ cannot deny the fact that $o$ was accepted/executed 
    before $v'$. Then, we show that no correct replica will execute $o$ with 
    $i' \neq i$ in $v'$ and prove the contradiction.

    The \texttt{NewView} message sent by the new primary includes the new view 
    certificate $V_{vc}$ that contains $N - f$ \texttt{ViewChange} 
    messages from a quorum $Q_{vc}$ that contains of at least 
    one correct replica, say $r \in Q_{vc}$ that will send the correct 
    \texttt{ViewChange} message to the new primary. 
    Given this, we consider the following cases:

    \begin{enumerate}
        \item\label{proof:lm2:case1} \textit{Primary is correct and replica 
$r \in Q_{vc}$ is correct}: If primary $p$ is correct, it inserts $N-f$
\texttt{ViewChange} messages into $V_{vc}$ including the one from $r$. There are
two possibilities.
        \begin{enumerate}
            \item $o$ was executed after checkpoint: $r$ is correct, so $O$ 
        contains \texttt{Commit} that $r$ sent for $o$, therefore $V_{vc}$ and 
        $S$ in the \texttt{NewView} message assert $O$ was executed explicitly.
            \item $o$ was executed before the checkpoint: The latest checkpoint 
        $C_l$ shared by $r$ implies that $o$ was executed implicitly. Since 
        the $V_{vc}$ sent in the \texttt{NewView} contains the $C_l$, it 
        implies that $o$ was executed.
        \end{enumerate}

        \item\label{proof:lm2:case2} the primary $p$ is correct, but there is 
        no correct replica in 
        $Q_{vc}$ that executed $o$: There should exist at least a faulty replica 
        $r \in Q_c'$ that accepted $o$ because $Q_c' \cap Q_{vc}' \neq \emptyset$ 
        ($q_{cmt} + q_{vc} > 1$).
        \begin{enumerate}
            \item $o$ was executed after checkpoint: $r$ might be tempted to not
        include $o$'s messages in $O$, but if it did, $p$ being correct would 
        not put $r$'s \texttt{ViewChange} message in $V_{vc}$ as $p$ can 
        detect its invalid. There are two possible ways a detection will 
        happen: (i) if $r$ executed a request 
        $o'$ after $o$, $r$ might put the \texttt{Commit} message for $o'$ but 
        not $o$ leaving a hole in the log that $p$ will detect. (ii) if $r$ sent 
        \texttt{Commit} for $o$ with a USIG value $cv$, it might leave out all 
        commit after $o$ with $cv' > cv$ from the $O$ log. But, this log will 
        also be considered 
        invalid by $p$ since $r$ must sign the \texttt{ViewChange} message 
        containing $O$ before sending it. The USIG will sign with a 
        $cv'' > cv+1$ that will
        allow a correct $p$ to detect an incomplete $O$. Thus, $r$ must include 
        all commits and Case~\ref{proof:lm2:case1} above will apply.
            \item $o$ was executed before the stable checkpoint: One way this 
        can happen is when $r$ includes an older checkpoint message but $p$ 
        will detect the invalid \texttt{ViewChange} message because the USIG 
        value of the message will disclose that there are messages since the 
        checkpoint message that $r$ failed to disclose.
        \end{enumerate}
    
        \item\label{proof:lm2:case3} Primary is faulty but $r \in Q_{vc}$ is 
        correct and executed $o$. In 
        this case, the faulty primary $p$ may attempt to modify the contents of 
        $O$ that it receives from $r$ before inserting into $V_{vc}$. 
        However, this will leave a hole and other correct replicas will detect 
        this misbehavior since they run the same procedure the primary runs
        for computing the \texttt{NewView} message. If $p$ removes $o$ and 
        all further operations after $o$, 
        correct replicas can also detect it because the USIG value of the 
        $r$'s \texttt{ViewChange} message inside $V_{vc}$ will indicate 
        the missing messages (as in Case~\ref{proof:lm2:case2}). Similarly,
        if the primary tries to add an older checkpoint certificate, correct 
        replicas will detect it from the holes in the USIG values. Therefore, 
        a fault primary cannot tamper with a \texttt{ViewChange} message without 
        detection. Thus, Case~\ref{proof:lm2:case1} will happen.

        \item Primary is faulty and no correct $r \in Q_{vc}$ has executed $o$. 
        A faulty $r \in Q_{vc}$ may exist. Given $|Q_{c}| + |Q_{vc}| > N$, 
        $r$ cannot successfully convince the primary to behave as if it did not 
        execute $o$. Even if the primary being faulty uses $r$'s 
        \texttt{ViewChange} message in the $V_{vc}$, other replicas will detect
        the missing sequence number and the corresponding commit message for 
        $o$. Thus, we will fall back to 
        Cases~\ref{proof:lm2:case2} and \ref{proof:lm2:case3} above.

    \end{enumerate}

    The above four cases show that $p$ in view $v'$ must assert $O$ was 
    executed before $v'$ in the certificate $V_{vc}$. Now, we show no correct 
    replicas will execute $o$ with $i' \neq i$ in $v'$. There are two cases:

    \begin{enumerate}
        \item \textit{Primary is correct:} A correct primary $p$ will never 
        generate a second USIG certificate for the same operation and correct 
        replicas will not send a commit message $o$ with $i'$ in view $v'$.
        \item \textit{Primary is faulty:} It is possible for a faulty primary  
        to create a new \texttt{Prepare} message for $o$ and successfully 
        create a new USIG $UI_p' = \langle i, H(o) \rangle_p$ and send it to 
        a replica $r$. However, 
        every replica maintains the $V_{req}$ that holds the last executed 
        operation identifier $seq$ for each client. Thus, $r$ will discover 
        that $o$ was already executed since $o.seq \leq V_{req}[c]$. Thus,
        $o$ will not be executed again.
    \end{enumerate}

    This proves that if a correct replica executed $o$ at sequence number 
    $i$ in view $v$, then no correct replica will execute $o$ at sequence 
    number $i' \neq i$ in view $v' = v+1$.

    We now generalize for arbitrary values of $v' > v$. There are two cases:
    \begin{enumerate}
        \item \textit{$v'=v+k$ but no request was accepted in view $v''$ such 
that $v' < v'' < v+k$}: This case is trivial and falls under the case of 
$v' = v+1$ since only view change related messages are sent in $v''$ which 
mirrors the $v$ to $v'$ transition.
        \item \textit{$v'=v+k$ but requests were prepared/accepted in view 
$v''$ such that $v' < v'' < v+k$}: At each view change, replicas must propagate 
information about operations from one view to its consecutive view (e.g. $v$ 
to $v+1$, and so on). This is done either via the checkpoint certificate or the 
via the $O$ log set. Thus, each transition becomes the case of $v' = v+1$ above.
    \end{enumerate}

\end{proof}

\begin{theorem}\label{th:minbft:s1}
    Let $s$ be a correct replica that executed more operations of all correct 
    replicas up to a certain instant. If $s$ executed the sequence of 
    operations $S = \langle o_1, ... o_i \rangle$, then all other correct 
    replicas executed this same sequence of operations or a prefix of it.
\end{theorem}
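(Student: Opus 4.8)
The plan is to reduce agreement on entire executed sequences to agreement on each individual position, and then discharge each position using the two preceding lemmas. Because a correct replica processes and executes operations strictly in increasing order of the primary's USIG counter and never leaves a gap (Sequentiality), the sequence a correct replica has executed is fully determined by which operation occupies each position $1, 2, \dots$. Hence it suffices to prove the following per-position claim: whenever two correct replicas have both executed the operation at position $k$, they executed the same operation there. Granting this, I take $s$ to be the correct replica with the longest executed prefix (the replica that executed more operations than any other); every other correct replica $s'$ has executed some sequence $S'$ of length $j \le |S|$, and the per-position claim forces $o'_k = o_k$ for every $k \le j$, so $S' = prefix(S, j)$, which is exactly the assertion.

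To establish the per-position claim I would argue by contradiction: suppose two correct replicas execute distinct operations $o_k \neq o'_k$ at the same position $k$, their commit certificates having been formed in views $v$ and $v'$ with $v \le v'$. If $v = v'$, the two executions both rest on $Q_c$ commit certificates in the same view, so the primary of $v$ must have bound the single USIG counter value $k$ to two different operations; this violates USIG uniqueness, which is precisely the content used in the same-view case of Lemma~\ref{lm:minbft:s1}. If $v < v'$, I invoke Lemma~\ref{lm:minbft:s2}: since the operation at position $k$ was committed in view $v$ and $|Q_{vc}| + |Q_c| > N$ (Equation~\ref{eq:minbft:flex2}), every view-change quorum assembled to install a later view contains at least one correct replica that reports that commit, and USIG hole-detection prevents even a faulty new primary from suppressing or replacing it; thus no correct replica ever votes for a different operation at position $k$ in any view $v' > v$. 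Either case contradicts the assumption, proving the per-position claim.

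I expect the main obstacle to be the cross-view case and, relatedly, the direction in which the lemmas are read. As literally stated, Lemmas~\ref{lm:minbft:s1} and~\ref{lm:minbft:s2} fix the position of a given operation, whereas the prefix property needs the dual fact that a given position carries a unique operation; operation-uniqueness alone does not rule out two correct replicas holding $\langle a, b \rangle$ and $\langle a, c \rangle$ with $b \neq c$ at position $2$. I would therefore be careful to use the substance of the lemmas' proofs---USIG uniqueness for the same-view case and the quorum-intersection plus hole-detection argument for the cross-view case---which in fact establish the position-to-operation uniqueness I actually need. The genuinely delicate step remains showing that a committed position survives an arbitrary chain of view changes through a possibly faulty new primary; but since Lemma~\ref{lm:minbft:s2} already generalizes its single-step ($v' = v+1$) argument to arbitrary $v' > v$ by propagating state at each transition, I can lean on it directly rather than re-deriving the induction.
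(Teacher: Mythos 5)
Your proposal follows essentially the same route as the paper's proof: locate the first position at which two executed sequences diverge, so that two distinct operations sit at the same position, and then split on whether the two commits were formed in the same view (Lemma~\ref{lm:minbft:s1}) or in different views (Lemma~\ref{lm:minbft:s2}); your per-position reduction via gap-free, counter-ordered execution is an explicit rendering of the prefix-divergence step the paper performs directly. The one point where you go beyond the paper is worth keeping: you observe that, as literally stated, both lemmas assert operation-to-position uniqueness (a given operation $o$ cannot be executed at two sequence numbers), whereas the divergence point yields two \emph{different} operations at the \emph{same} position, which the literal statements do not exclude (e.g., when $o'_i$ never occurs anywhere in $S$). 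The paper's proof cites the lemmas as though they gave position-to-operation uniqueness and leaves this mismatch unaddressed; your fix --- invoking the substance of the lemmas' proofs, namely USIG uniqueness of the counter value in the same-view case, and the quorum intersection $|Q_{vc}| + |Q_c| > N$ of Equation~\ref{eq:minbft:flex2} together with USIG hole-detection in the cross-view case --- is exactly what is needed, and that substance is indeed established in those proofs. So your argument is the paper's argument, made rigorous precisely where the paper is loose.
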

\begin{proof}
    Let $prefix(S, k)$ be a function that gets the prefix of sequence $S$ 
    containing the first $k$ operations, with $prefix(S, 0)$ being the empty 
    sequence. Let $\bullet$ be an operation that concatenates sequences.

    We prove by contradiction.
    Assume the theorem is false, i.e there exists a correct replica $r'$ that 
    executed some sequence of operations $S'$ that is not a prefix of $S$.
    Let $prefix(S, i) = prefix(S', i-1) \bullet \langle o_i \rangle$ and 
    $prefix(S', i) = prefix(S, i-1) \bullet \langle o_i' \rangle$ such that 
    $o_i \neq o_i'$
    In this case, $o_i$ was executed as the $i$th operation by replica $r$ and 
    $o_i'$ was executed as the $i$th operation by replica $r'$. 
    Assume $o_i$ was executed in view $v$ and $o_i'$ was executed in view $v'$.
    Setting $v = v'$ will contradict Lemma~\ref{lm:minbft:s1} and setting
    $v' \neq v$ will contradict Lemma~\ref{lm:minbft:s2}.

\end{proof}

\begin{lemma}\label{lm:minbft:l1}
    During a stable view, an operation requested by a correct client completes.
\end{lemma}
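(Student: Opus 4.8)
The plan is to trace a correct client's request for operation $o$ through the two phases of the normal protocol and argue that, under the stable-view assumptions, each phase completes within the synchronous period so that enough correct replicas execute $o$ and reply. First I would invoke the definition of a stable view: the primary is correct, at least $N-f$ replicas are correct, and all correct replicas communicate synchronously and timely. Under these assumptions the correct primary, upon receiving $o$, draws a fresh sequence number from its USIG instance and broadcasts a valid $\texttt{Prepare}$ message; because the primary is correct this message carries the next \emph{sequential} USIG counter value, so it introduces no hole in the counter space and is acceptable to every correct replica.

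Next I would count the $\texttt{Commit}$ messages. Each of the $N-f$ correct replicas receives the $\texttt{Prepare}$, verifies that $v$ is the current view, that the sender is the primary of $v$, that the USIG signature is valid, and that the counter follows sequentially; all of these checks pass for a correct primary in a stable view. Each correct replica therefore generates its own $UI$ and broadcasts a valid $\texttt{Commit}$ message, and within the synchronous period every correct replica receives every other correct replica's $\texttt{Commit}$. Since the commit quorum is $|Q_c| = f+1$, I would use the system-size assumption $N-f \ge f+1$ to conclude that the $N-f$ correct $\texttt{Commit}$ messages already form a commit certificate on their own. Hence every correct replica collects $Q_c$ and executes $o$ at the assigned sequence number, and the $\ge f+1$ matching replies suffice for the client to accept the response, so the operation completes.

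The main obstacle I anticipate is not this counting argument but the \emph{ordering} constraint: replicas process messages strictly in USIG-counter order and consult $V_{req}$ to suppress duplicates, so I must argue that $o$ is neither blocked behind a missing counter value nor discarded as already executed. For the former, a correct primary in a stable view emits $\texttt{Prepare}$ messages with consecutive counters and keeps the log hole-free, so every earlier operation is itself delivered and committed in bounded time; an induction on the counter value then shows that $o$ is eventually reached and not stalled. For the latter, a request from a correct client carries an identifier strictly greater than $V_{req}[c]$, so it passes the freshness check rather than being dropped as a replay. Once these two points are settled, the liveness conclusion follows directly from the availability of $N-f \ge f+1$ timely correct replicas established above.
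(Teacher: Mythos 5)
Your proof is correct and follows essentially the same route as the paper's: a correct primary in a stable view issues a valid \texttt{Prepare} with a fresh USIG certificate, the $N-f \ge f+1$ correct replicas all send valid \texttt{Commit} messages, so every correct replica assembles the $f+1$ commit certificate, executes $o$, and the client receives $f+1$ matching replies. Your additional handling of the USIG-counter ordering constraint and the $V_{req}$ freshness check is a sound refinement of points the paper's proof leaves implicit (its first sentence only notes the client uses an identifier larger than any previous one), but it does not change the underlying argument.
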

\begin{proof}
    A correct client $c$ will send an operation $o$ with an identifier larger 
    than any previous identifiers to the replicas. The primary $p$ being 
    correct, will construct a valid \texttt{Prepare} message with a valid 
    USIG certificate $UI_p = \langle i, H(o) \rangle_p$ and send it to all 
    replicas. At least $f+1$ correct replicas will validate the \texttt{Prepare}
    message, verify the $UI$, and send a corresponding \texttt{Commit} message.
    Since there can be only $f$ faults, there should exist at least $N - f$ 
    correct replicas, out of which $f+1$ ($q_{cmt}$), should successfully 
    produce these \texttt{Commit} messages. When a correct replica receives 
    $q_{cmt}$ valid \texttt{Commit} messages, $o$ will be executed and replied
    to the client $c$. Since $q_{cmt}$ correct replicas exist, a correct client 
    will receive $f+1$ same replies indicating that operation $o$ was properly 
    executed at sequence number $i$.
\end{proof}

\begin{lemma}\label{lm:minbft:l2}
    A view $v$ eventually will be changed to a new view $v' > v$ if at least 
    $N-f$ correct replicas request its change.
\end{lemma}
\begin{proof}
    A correct replica $r$ sends a $\langle ReqViewChange, r, v, v' \rangle$ 
    message requesting a view change to all replicas. However, at least $N-f$
    correct replicas must send such a message to actually trigger a view change.
    Say a set of $N-f$ correct replicas request a view change from $v$ to 
    $v+1$ by sending the \texttt{ReqViewChange} message. The primary for the 
    new view is $p = (v+1) \mod N$. Consider the two cases:
    \begin{enumerate}
        \item \textit{the new view is stable}: correct replicas will receive 
the \texttt{ReqViewChange} messages. Consequently, correct replicas that 
receive at least $N-f$ \texttt{ReqViewChange} messages will enter the new 
view $v'$ and send a \texttt{ViewChange} message to all replicas. The primary 
$p$, being stable, for view $v+1$ will send a valid \texttt{NewView} message 
in time. Thus, correct replica that receive the message will transition to 
new view $v' = v+1$.
        \item \textit{the new view is not stable}: We consider two cases:
        \begin{enumerate}
            \item \textit{the primary $p$ is faulty and does not send the 
\texttt{NewView} message in time, or $p$ is faulty and sends an invalid 
\texttt{NewView} message}, or $p$ is not faulty but the network delays $p$'s 
message indefinitely. In all these cases, the timer on other correct replicas 
that sent the \texttt{ViewChange} message will expire waiting for the new view 
message. These replicas will trigger another view change to view $v+2$.
            \item \textit{the primary $p$ is faulty and sends the NewView 
message to only a quorum $Q_{vc}$ of $N-f$ replicas but less than $N-f$ 
replicas are correct, or $p$ is correct but there are communication delays.} 
The replicas in quorum $Q_{vc}$ may enter the new view and process requests in 
time. However, the correct replicas that does not receive the NewView message 
will timeout and request change to view $v+2$. However, there will be less than
$N-f$ replicas, so a successful view change trigger will not happen. If the 
faulty replicas deviate from the algorithm, other correct replicas will join 
to change the view. 
        \end{enumerate}
    \end{enumerate}
\end{proof}

\begin{theorem}\label{th:minbft:l1}
    An operation requested by a correct client eventually completes.
\end{theorem}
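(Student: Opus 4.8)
The plan is to reduce the liveness of a single client operation to the two liveness lemmas already established: Lemma~\ref{lm:minbft:l1}, which guarantees completion within a stable view, and Lemma~\ref{lm:minbft:l2}, which guarantees that an unproductive view is eventually abandoned. The overall argument follows the standard partial-synchrony pattern: show that the system cannot remain stuck in non-stable views forever, so a stable view headed by a correct primary is eventually reached, at which point Lemma~\ref{lm:minbft:l1} finishes the job.

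First I would fix a correct client $c$ with a pending operation $o$ and examine the current view $v$. If $v$ is stable, the claim is immediate from Lemma~\ref{lm:minbft:l1}. Otherwise the primary of $v$ is faulty or the network is not yet timely, so some correct replicas awaiting $o$ will time out. I would then split on whether enough replicas suspect the primary: if at least $N-f$ correct replicas broadcast \texttt{ReqViewChange}, Lemma~\ref{lm:minbft:l2} installs a new view $v+1$; if fewer than $N-f$ do, then the remaining correct replicas are still receiving valid, in-order \texttt{Prepare} messages, so the primary is in fact driving progress and the mechanism of Lemma~\ref{lm:minbft:l1} still delivers a commit certificate for $o$.

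Next I would bound the number of view changes before stability. Because primaries rotate round-robin as $v \bmod N$ and at most $f$ replicas are faulty, any window of $f+1$ consecutive views contains a correct primary. Combined with the partially synchronous timing model, after the system stabilizes messages among correct replicas are delivered within the timeout, so a view headed by a correct primary is stable, and Lemma~\ref{lm:minbft:l1} then completes $o$. The one piece that needs care is ensuring that correct replicas do not indefinitely suspect correct primaries before stabilization, for which I would appeal to exponentially increasing view-change timeouts so that the timeout eventually exceeds the actual (eventually bounded) message delay.

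The main obstacle is precisely this last point: the partial-synchrony bookkeeping that rules out perpetual view churning. Everything else is a straightforward case split that defers to the two lemmas, but establishing that a correct primary is eventually granted a long-enough, synchronous window, and is therefore not prematurely replaced, is where the real content of the liveness proof lies.
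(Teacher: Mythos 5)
Your overall plan is the same as the paper's: case-split on whether the current view is stable, defer to Lemma~\ref{lm:minbft:l1} in the stable case and to Lemma~\ref{lm:minbft:l2} to escape unproductive views, then argue that the view-change process terminates in a stable view. However, your middle step has a genuine gap. In your second case you infer, from the fact that fewer than $N-f$ correct replicas have broadcast \texttt{ReqViewChange}, that ``the remaining correct replicas are still receiving valid, in-order \texttt{Prepare} messages, so the primary is in fact driving progress'' and that Lemma~\ref{lm:minbft:l1}'s mechanism will deliver a commit certificate for $o$. This inference fails twice. First, arithmetically: with the $N-f$ trigger you adopted (which is defensible, since Lemma~\ref{lm:minbft:l2} is stated with $N-f$ requesters), ``fewer than $N-f$ complaining'' guarantees only that at least \emph{one} correct replica is content, whereas committing $o$ requires a commit quorum of $f+1$ replicas actually receiving and voting on it; nothing in the case hypothesis supplies those $f+1$ replicas. (The paper's own proof splits at $f+1$ complaints instead, so that the content correct replicas number at least $(N-f)-f = f+1$, exactly a commit quorum.) Second, and more fundamentally, your inference is static where it must be temporal: at any moment the complaint count can be below threshold simply because some correct replicas have not timed out \emph{yet}, not because $o$ is being committed. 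The system could sit indefinitely with, say, $f+1$ complaints --- below your $N-f$ trigger, yet with too few content replicas ever to commit $o$ --- and your argument provides no exit from this state.

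The missing idea, which the paper's second case supplies, is an accumulation argument: if $o$ is never committed, then the timer of every correct replica waiting on the pending request eventually expires, so complaints keep accruing until all $N-f$ correct replicas have sent \texttt{ReqViewChange}; at that point Lemma~\ref{lm:minbft:l2} applies and the view changes. The dichotomy is thus not ``enough complaints now vs.\ progress now'' but ``either $o$ commits in view $v$, or the view-change threshold is eventually reached.'' With that repair, your final step goes through essentially as written; in fact your closing discussion (round-robin primaries, at most $f$ faulty replicas, timeouts that eventually exceed the true message delay) is more explicit than the paper, which compresses all of this into the assumption that network delays do not grow indefinitely, so a stable view eventually exists.
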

\begin{proof}
    The proof follows from the Lemmas~\ref{lm:minbft:l1} and~\ref{lm:minbft:l2}.

    When the view is stable, Lemma~\ref{lm:minbft:l1} shows that the client 
    operations are properly committed. However, when the view $v$ is not stable, 
    there are two possibilities:
    \begin{enumerate}
        \item \textit{at least $f+1$ replicas timeout waiting for messages and 
        request a view change:} Lemma~\ref{lm:minbft:l2} handles this case and 
        ensures that a stable view $v' > v$ is established.
        \item \textit{less than $f+1$ replicas request a view change:} 
        There should exist at least a quorum $Q$ of $f+1$ replicas that are 
        in the current view $v$. 
        As long as these replicas continue to follow the algorithm, 
        they will continue to stay in view $v$ and client requests will be 
        committed in time. However, if the replicas are not timely, then the 
        correct replica from $Q$ in view $v$ will send the 
        \texttt{ReqViewChange} message. With this message, a successful view 
        change is triggered and the previous case takes happens.
    \end{enumerate}

    If the new view $v'$ is not stable, another view change will be triggered 
    depending on whether Cases 1 or 2 above holds.
    However, this process will not continue forever.
    Since there are only $f$ byzantine replicas and due to the 
    assumption that the network delays do not grow indefinitely, 
    eventually there should exist a view $v''$ that is stable such that 
    the primary responds in a timely manner and follows the algorithm. 

\end{proof}

\end{document}